\documentclass[11pt]{article}

\usepackage[margin=1in]{geometry}
\usepackage[T1]{fontenc}
\usepackage{lmodern}
\usepackage{amsthm,amsmath,amsfonts,amssymb}
\usepackage[authoryear,round]{natbib}
\usepackage{graphicx}
\usepackage{bm}
\usepackage{booktabs}
\usepackage{xcolor}
\usepackage{tikz}
\usepackage{url}
\usepackage{placeins}
\usepackage[colorlinks,citecolor=blue,urlcolor=blue,linkcolor=blue]{hyperref}

\theoremstyle{plain}
\newtheorem{proposition}{Proposition}
\newtheorem{propositionS}{Proposition}

\newcommand{\normone}[1]{\lVert #1 \rVert_1}
\newcommand{\normtwo}[1]{\lVert #1 \rVert_2}
\newcommand{\bE}{\bm{E}}
\newcommand{\bx}{\bm{x}}
\newcommand{\bw}{\bm{w}}

\newcommand{\btheta}{\bm{\theta}}
\newcommand{\bphi}{\bm{\phi}}
\newcommand{\bpsi}{\bm{\psi}}
\title{Self-thresholding angular-radial models of extremal dependence}
\author{Manuele Leonelli\thanks{School of Science and Technology, IE University, Madrid, Spain. Email: \href{mailto:manuele.leonelli@ie.edu}{manuele.leonelli@ie.edu}}
\and Dani Gamerman\thanks{Department of Statistical Methods, Institute of Mathematics, Federal University of Rio de Janeiro, Rio de Janeiro, Brazil. Email: \href{mailto:dani@im.ufrj.br}{dani@im.ufrj.br}}}
\date{}

\begin{document}

\maketitle

\begin{abstract}
Assessing compound risks requires a model for routine and extreme joint behaviour together, and a principled answer to where the joint tail begins. We propose the self-thresholding angular-radial (STAR) model: after marginal standardization, the radius given the angle follows a spliced distribution, with a flexible body below an angle-dependent threshold curve and a generalized Pareto tail above it. The threshold curve is estimated with all other parameters, so its uncertainty propagates to every summary, and the joint density is in closed form because the splice normalizes in one dimension for every angle. The tail scale function is the reciprocal gauge of the implied limit set, and the coefficient of extremal association is an explicit parameter, so a prior with atoms at tail independence and at the asymptotic dependence boundary yields posterior probabilities of the dependence class. At that boundary the copula of the model is asymptotically dependent, while its margins and joint tail are both lighter than exponential by a factor inversely proportional to the level. Simulations show accurate recovery of finite-level dependence summaries within and beyond the model class, and the slow separation of asymptotic dependence from strong asymptotic independence that the theory predicts. For warm-season daily maxima of fine particulate matter and ozone at a Beijing station, the pair is asymptotically independent with posterior probability $0.96$ and both pollutants exceed their $0.9$ quantiles on about two and a half days per season; the model gives summaries comparable to those of existing approaches without requiring the arbitrary choice of a threshold. The construction extends to $d$ dimensions through pairwise deficits of the gauge, illustrated by adding coarse particulate matter as a third pollutant.
\end{abstract}

\noindent\textbf{Keywords:} asymptotic independence; Bayesian nonparametrics; extreme value mixture model; limit set; threshold selection.

\section{Introduction}\label{sec:intro}

Many questions about multivariate risk concern the body of a joint distribution and its tail at once: a flood defence can fail under combinations of river flow and sea level in which neither is record-breaking, and the impact of a heatwave depends on temperature and humidity together, when only one may be exceptional. Such questions require a model that describes the body and the joint tail within one framework, and a defensible answer to where the tail begins. In the univariate setting, extreme value mixture models combine a flexible body density with a generalized Pareto distribution (GPD) for threshold exceedances \citep{behrens2004,nascimento2012}, with a parallel line embedding GPD tail behaviour in a single parametric family \citep{naveau2016}; see \citet{scarrott2012} and \citet{naveau2026} for reviews. Their unifying idea is that the location of the tail is itself an object of inference.

Carrying this idea to the multivariate setting raises a difficulty absent from the univariate case: the threshold is no longer a point but a boundary in $\mathbb{R}^d$, and its shape encodes a decision about which observations count as jointly extreme, whether those in a rectangular region, those beyond a radial level, or those exceeding a more general risk functional. The visual diagnostics of univariate practice do not transfer \citep{belzile2026}, applied work defaults to a common marginal quantile for every variable \citep{davisonhuser2015}, and the univariate threshold selection procedures of \citet{northrop2017} and \citet{murphy2025} have no multivariate counterpart.

Models for the body and tail of multivariate data follow three routes. The first blends dependence structures over the full support, through generalized Pareto copulas pieced onto a body copula \citep{aulbach2012a}, spliced margins with copula mixtures that deliver posterior statements about the dependence class \citep{leonelli2020}, and dynamically weighted mixtures of a body and a tail copula \citep{andre2024}. These avoid a hard threshold, but because both components act on the full support the body influences the fitted tail and the tail remains sensitive to central observations \citep{hu2026}. The second separates body and tail on disjoint regions, with the tail governed by the multivariate generalized Pareto distribution (mGPD) of \citet{rootzen2006}: \citet{hu2026} combine a multivariate normal body with an mGPD above an estimated threshold vector, with a spatial analogue in \citet{cuba2024}. The separation removes the entanglement, at the price of a rectangular exceedance region, asymptotic dependence by construction, and a density discontinuous across the boundary. The third avoids a threshold altogether, extending the single-family construction of \citet{naveau2016} to several dimensions \citep{alotaibi2025,gaetan2025}, but offers neither likelihood-based Bayesian inference nor a boundary marking where asymptotic theory applies.

A separate and very active literature models the joint tail alone. The convergence of scaled sample clouds onto a limit set links its shape to the classical descriptions of extremal dependence \citep{nolde2014,nolde2022}, and inference has followed for the limit set boundary \citep{wadsworthcampbell2024,simpson2024,majumder2025}, for radial generalized Pareto distributions and return sets \citep{papastathopoulos2023} and for semiparametric angular-radial densities \citep{mackay2023,murphybarltrop2024}, alongside the conditional approach of \citet{heffernan2004}. None models the distribution below the threshold, which is fixed in advance, typically at an empirical quantile, so threshold uncertainty is excluded.

This paper brings the two literatures together with the self-thresholding angular-radial (STAR) model, a generative angular-radial model that is self-thresholding in the tradition of \citet{behrens2004}: the radius conditional on the angle follows a spliced distribution, with a flexible body density below a threshold curve and a GPD above it. The threshold curve, an interpretable power-mean contour with the rectangle of \citet{hu2026} as a limit, is estimated jointly with all other unknowns, so its uncertainty propagates into every derived quantity. Because the splice normalizes in one dimension for every angle, the joint density is in closed form. The fitted tail scale function is the reciprocal gauge of the limit set of \citet{nolde2022}, so the residual tail dependence coefficient of \citet{ledford1996} and the angular dependence function of \citet{wadsworthtawn2013} follow from their limit-set characterizations \citep{nolde2014,nolde2022}. Under the working scale family, the coefficient of extremal association of \citet{coles1999} is an explicit parameter, so that a three-component prior with atoms at tail independence and at the asymptotic dependence boundary returns posterior probabilities of the dependence class and a posterior for the strength of association.
Beyond the limit set, we derive the exact joint and marginal tail decay of the model: both carry a factor of order $1/t$, so the copula is asymptotically dependent at the boundary, while on the standardized scale of the data $\chi(q)$ vanishes there logarithmically (Section \ref{sec:jointdecay}); extensions with $\chi(q)$ bounded away from zero, through an explicit asymptotically dependent component, are discussed in Section \ref{sec:discussion}. In $d$ dimensions the splice and the threshold carry over, and the scale function subtracts pairwise deficits from the gauge of independence (Section \ref{sec:multi}); a fit to three pollutants shows what this adds to pairwise analyses and what remains open (Section \ref{sec:app3}). Simulations and an application to fine particulate matter and ozone in Beijing compare the model with the conditional extremes model \citep{heffernan2004} and the geometric model \citep{wadsworthcampbell2024}.

\section{A self-thresholding angular-radial model}\label{sec:model}

Throughout, $\bE=(E_1,\ldots,E_d)$ denotes a random vector with standard exponential margins, obtained from the vector of interest $\bm{Y}$ through $E_j=-\log\{1-F_j(Y_j)\}$ \citep{coles2001}; estimation of the margins is treated in Section \ref{sec:inference}.

\subsection{Angular-radial decomposition}\label{sec:polar}
Dependence is modelled in pseudo-polar coordinates $R=\normone{\bE}=\sum_{j=1}^d E_j$ and $\bm{W}=\bE/R$, so that $R>0$ and $\bm{W}$ lies in the unit simplex $\mathcal{S}_{d-1}=\{\bw\in[0,1]^d:\sum_j w_j=1\}$. The map $\bE\mapsto(R,\bm{W})$ has Jacobian $r^{d-1}$, so that any joint density factorizes as
\begin{equation}\label{eq:factorization}
f_{\bE}(\bm{e})=f_{\bm{W}}(\bw)\,f_{R\mid\bm{W}}(r\mid\bw)\,r^{-(d-1)}, \qquad r=\normone{\bm{e}},\ \bw=\bm{e}/r.
\end{equation}
The model specifies the two factors on the right of \eqref{eq:factorization} directly: a flexible angular density $f_{\bm{W}}$ describing the directions in which observations occur, and a conditional radial density $f_{R\mid\bm{W}}$ describing how far observations travel in each direction. For clarity of exposition we present the bivariate case, identifying $\bm{W}$ with the scalar $W=E_1/R\in(0,1)$, and indicate the ingredients of the general case where they differ; Section \ref{sec:multi} gives the scale function and the dependence results in $d$ dimensions. This decomposition also underlies the SPAR model of \citet{mackay2023} and the geometric inference of \citet{wadsworthcampbell2024}.

\subsection{The spliced radial model}\label{sec:splice}

Conditional on the angle, the radius follows a univariate spliced distribution, with every ingredient allowed to vary smoothly with the angle. Let $\rho(w)>0$ denote the threshold function, $g_{\mathrm{b}}(\cdot\mid w)$ a body density on $(0,\infty)$ with distribution function $G_{\mathrm{b}}(\cdot\mid w)$, and $h_{\mathrm{GP}}(\cdot;\sigma,\xi)$ the density of the generalized Pareto distribution with scale $\sigma>0$ and shape $\xi$. The conditional radial density is
\begin{equation}\label{eq:splice}
f_{R\mid W}(r\mid w)=
\begin{cases}
g_{\mathrm{b}}(r\mid w), & 0<r\le\rho(w),\\[4pt]
\{1-G_{\mathrm{b}}(\rho(w)\mid w)\}\,h_{\mathrm{GP}}\{r-\rho(w);\sigma(w),\xi(w)\}, & r>\rho(w).
\end{cases}
\end{equation}
The construction integrates to one for every angle, since the body contributes mass $\varpi(w)=G_{\mathrm{b}}(\rho(w)\mid w)$ and the tail the remaining $1-\varpi(w)$. This is the central computational property of the model: a patchwork of body and tail densities separated by a curved boundary on the observation scale involves an intractable normalizing constant, the integral of the body density over the sub-threshold region, whereas here it collapses to the one-dimensional integral $G_{\mathrm{b}}(\rho(w)\mid w)$, available in closed form for standard body densities. The joint density follows from \eqref{eq:factorization} with no further normalization. The density \eqref{eq:splice} is in general discontinuous at $r=\rho(w)$; continuity could be enforced by tying $\sigma(w)$ to the body density at the threshold, as in \citet{carreau2009}, but this would conflict with the structural scale \eqref{eq:scalefam}. The discontinuity does not carry over to prediction: with a random threshold curve the posterior predictive density averages over $\rho(\cdot)$ and is smooth \citep{nascimento2012}.

For the body we take a Gamma density with angle-dependent rate,
\begin{equation}\label{eq:body}
g_{\mathrm{b}}(r\mid w)=\frac{\lambda(w)^{\alpha}}{\Gamma(\alpha)}\,r^{\alpha-1}\exp\{-\lambda(w)r\},
\end{equation}
a choice motivated by two considerations. First, under independence in the bivariate case, $R\sim\mathrm{Gamma}(2,1)$ independently of $W\sim\mathrm{Uniform}(0,1)$, so with $\alpha=2$, a constant rate $\lambda\equiv1$ and a uniform angular density the model reproduces independence exactly in the body and to first order in the tail, since the excess distribution of a Gamma converges to the unit-rate exponential as the threshold grows; in dimension $d$ the corresponding values are $\alpha=d$ and $\lambda\equiv1$. Second, by the geometric theory of Section \ref{sec:theory}, light-tailed sample clouds decay radially at direction-dependent exponential rates, which determine their limit set, introduced in Section \ref{sec:parfun} \citep{nolde2022}. Other body families, such as generalized Gamma densities or mixtures of Gamma densities \citep{nascimento2012}, fit in \eqref{eq:splice} without modification. Simulations with a body shape varying across the angle show that a constant shape suffices (Section~\ref{supp:sens}).

\subsection{Threshold curves}\label{sec:threshold}

The threshold function $\rho(\cdot)$ traces a curve in the positive orthant separating the body from the tail. Any positive function on the simplex defines a valid star-shaped boundary. We adopt a parametric family, and a nonparametric alternative would expand $\log\rho(w)$ in the Bernstein basis of \eqref{eq:parfun}. For anchor levels $u_1,u_2>0$ and exponent $p>0$, substituting $x_1=rw$ and $x_2=r(1-w)$ into the power-mean contour $(x_1/u_1)^p+(x_2/u_2)^p=1$ and solving for $r$ gives
\begin{equation}\label{eq:rhop}
\rho(w)=\Big\{\Big(\frac{w}{u_1}\Big)^{p}+\Big(\frac{1-w}{u_2}\Big)^{p}\Big\}^{-1/p}.
\end{equation}
The exponent, estimated with all other parameters, controls the geometry of the extremal region (Figure \ref{fig:thresholds}): for $p=1$ the boundary is the straight line joining $(u_1,0)$ and $(0,u_2)$, for $p=2$ an elliptical arc, and values $p\in(0,1)$ give boundaries concave towards the axes. As $p\to\infty$, $\rho(w)\to\min\{u_1/w,\,u_2/(1-w)\}$, whose exceedance region $\{\bx:\ x_1>u_1\ \text{or}\ x_2>u_2\}$ is the rectangular region of \citet{hu2026}; above it they fit an mGPD, whereas \eqref{eq:splice} fits a generalized Pareto tail to the radius in each direction. We recommend $p\ge1$, as a concave curve can sag into the body along the diagonal and defeat the anchoring priors of Section \ref{sec:priors}. In $d$ dimensions the family is $\rho(\bw)=\{\sum_j(w_j/u_j)^p\}^{-1/p}$, with $d$ anchors and one exponent, and generalizes without change (Supplementary Section~\ref{supp:multi-comp}).

\begin{figure}[t]
\centering
\begin{tikzpicture}[scale=1.58]
\draw[->,thick] (0,0) -- (2.62,0) node[below left=-1pt] {\scriptsize $x_1$};
\draw[->,thick] (0,0) -- (0,2.62) node[below left=-1pt] {\scriptsize $x_2$};
\draw (2,0.03) -- (2,-0.03) node[below] {\scriptsize $u_1$};
\draw (0.03,2) -- (-0.03,2) node[left] {\scriptsize $u_2$};
\draw[dashed,gray] (2,0) -- (2,2) -- (0,2);
\draw[thick,orange] plot coordinates {(0.004,1.832) (0.023,1.591) (0.041,1.470) (0.057,1.383) (0.072,1.314) (0.086,1.256) (0.100,1.205) (0.114,1.160) (0.127,1.120) (0.139,1.083) (0.152,1.049) (0.164,1.018) (0.176,0.989) (0.188,0.961) (0.200,0.935) (0.212,0.910) (0.223,0.887) (0.235,0.864) (0.246,0.843) (0.258,0.822) (0.269,0.802) (0.280,0.783) (0.291,0.765) (0.303,0.747) (0.314,0.729) (0.325,0.713) (0.336,0.696) (0.348,0.680) (0.359,0.665) (0.370,0.649) (0.382,0.634) (0.393,0.620) (0.404,0.606) (0.416,0.592) (0.428,0.578) (0.439,0.564) (0.451,0.551) (0.463,0.538) (0.475,0.525) (0.488,0.513) (0.500,0.500) (0.513,0.488) (0.525,0.475) (0.538,0.463) (0.551,0.451) (0.564,0.439) (0.578,0.428) (0.592,0.416) (0.606,0.404) (0.620,0.393) (0.634,0.382) (0.649,0.370) (0.665,0.359) (0.680,0.348) (0.696,0.336) (0.713,0.325) (0.729,0.314) (0.747,0.303) (0.765,0.291) (0.783,0.280) (0.802,0.269) (0.822,0.258) (0.843,0.246) (0.864,0.235) (0.887,0.223) (0.910,0.212) (0.935,0.200) (0.961,0.188) (0.989,0.176) (1.018,0.164) (1.049,0.152) (1.083,0.139) (1.120,0.127) (1.160,0.114) (1.205,0.100) (1.256,0.086) (1.314,0.072) (1.383,0.057) (1.470,0.041) (1.591,0.023) (1.832,0.004)};
\draw[thick,red!80!black] plot coordinates {(0.004,1.996) (0.029,1.971) (0.054,1.946) (0.079,1.921) (0.104,1.896) (0.129,1.871) (0.153,1.847) (0.178,1.822) (0.203,1.797) (0.228,1.772) (0.253,1.747) (0.278,1.722) (0.303,1.697) (0.328,1.672) (0.353,1.647) (0.378,1.623) (0.402,1.598) (0.427,1.573) (0.452,1.548) (0.477,1.523) (0.502,1.498) (0.527,1.473) (0.552,1.448) (0.577,1.423) (0.602,1.398) (0.626,1.373) (0.651,1.349) (0.676,1.324) (0.701,1.299) (0.726,1.274) (0.751,1.249) (0.776,1.224) (0.801,1.199) (0.826,1.174) (0.851,1.149) (0.875,1.125) (0.900,1.100) (0.925,1.075) (0.950,1.050) (0.975,1.025) (1.000,1.000) (1.025,0.975) (1.050,0.950) (1.075,0.925) (1.100,0.900) (1.124,0.876) (1.149,0.851) (1.174,0.826) (1.199,0.801) (1.224,0.776) (1.249,0.751) (1.274,0.726) (1.299,0.701) (1.324,0.676) (1.349,0.651) (1.373,0.627) (1.398,0.602) (1.423,0.577) (1.448,0.552) (1.473,0.527) (1.498,0.502) (1.523,0.477) (1.548,0.452) (1.573,0.427) (1.598,0.402) (1.622,0.378) (1.647,0.353) (1.672,0.328) (1.697,0.303) (1.722,0.278) (1.747,0.253) (1.772,0.228) (1.797,0.203) (1.822,0.178) (1.847,0.153) (1.871,0.129) (1.896,0.104) (1.921,0.079) (1.946,0.054) (1.971,0.029) (1.996,0.004)};
\draw[thick,violet] plot coordinates {(0.004,2.000) (0.029,2.000) (0.055,1.999) (0.082,1.998) (0.109,1.997) (0.137,1.995) (0.166,1.993) (0.195,1.990) (0.225,1.987) (0.255,1.984) (0.287,1.979) (0.319,1.974) (0.351,1.969) (0.385,1.963) (0.419,1.956) (0.453,1.948) (0.488,1.939) (0.524,1.930) (0.561,1.920) (0.598,1.909) (0.635,1.896) (0.674,1.883) (0.712,1.869) (0.751,1.854) (0.790,1.837) (0.830,1.820) (0.870,1.801) (0.910,1.781) (0.950,1.760) (0.990,1.738) (1.031,1.714) (1.071,1.689) (1.111,1.663) (1.150,1.636) (1.190,1.608) (1.229,1.578) (1.267,1.547) (1.305,1.516) (1.342,1.483) (1.379,1.449) (1.414,1.414) (1.449,1.379) (1.483,1.342) (1.516,1.305) (1.547,1.267) (1.578,1.229) (1.608,1.190) (1.636,1.150) (1.663,1.111) (1.689,1.071) (1.714,1.031) (1.738,0.990) (1.760,0.950) (1.781,0.910) (1.801,0.870) (1.820,0.830) (1.837,0.790) (1.854,0.751) (1.869,0.712) (1.883,0.674) (1.896,0.635) (1.909,0.598) (1.920,0.561) (1.930,0.524) (1.939,0.488) (1.948,0.453) (1.956,0.419) (1.963,0.385) (1.969,0.351) (1.974,0.319) (1.979,0.287) (1.984,0.255) (1.987,0.225) (1.990,0.195) (1.993,0.166) (1.995,0.137) (1.997,0.109) (1.998,0.082) (1.999,0.055) (2.000,0.029) (2.000,0.004)};
\draw[thick,blue] plot coordinates {(0.004,2.000) (0.029,2.000) (0.055,2.000) (0.082,2.000) (0.109,2.000) (0.137,2.000) (0.166,2.000) (0.196,2.000) (0.226,2.000) (0.257,2.000) (0.290,2.000) (0.323,2.000) (0.357,2.000) (0.392,2.000) (0.428,2.000) (0.465,2.000) (0.504,2.000) (0.543,2.000) (0.584,2.000) (0.627,2.000) (0.670,2.000) (0.715,2.000) (0.762,2.000) (0.810,2.000) (0.860,2.000) (0.912,2.000) (0.966,1.999) (1.021,1.999) (1.079,1.998) (1.138,1.997) (1.200,1.996) (1.264,1.994) (1.329,1.990) (1.396,1.986) (1.464,1.979) (1.533,1.969) (1.600,1.955) (1.666,1.935) (1.728,1.909) (1.784,1.876) (1.834,1.834) (1.876,1.784) (1.909,1.728) (1.935,1.666) (1.955,1.600) (1.969,1.533) (1.979,1.464) (1.986,1.396) (1.990,1.329) (1.994,1.264) (1.996,1.200) (1.997,1.138) (1.998,1.079) (1.999,1.021) (1.999,0.966) (2.000,0.912) (2.000,0.860) (2.000,0.810) (2.000,0.762) (2.000,0.715) (2.000,0.670) (2.000,0.627) (2.000,0.584) (2.000,0.543) (2.000,0.504) (2.000,0.465) (2.000,0.428) (2.000,0.392) (2.000,0.357) (2.000,0.323) (2.000,0.290) (2.000,0.257) (2.000,0.226) (2.000,0.196) (2.000,0.166) (2.000,0.137) (2.000,0.109) (2.000,0.082) (2.000,0.055) (2.000,0.029) (2.000,0.004)};
\node[font=\tiny,orange] at (0.71,0.34) {$p=\tfrac12$};
\node[font=\tiny,red!80!black] at (1.19,0.82) {$p=1$};
\node[font=\tiny,violet] at (1.60,1.22) {$p=2$};
\node[font=\tiny,blue] at (2.05,1.62) {$p=8$};
\node[font=\tiny,gray] at (2.28,2.10) {$p\to\infty$};
\node at (1.3,-0.34) {\scriptsize (a) symmetric anchors, $u_1=u_2$};
\end{tikzpicture}\hspace{0.5cm}
\begin{tikzpicture}[scale=1.58]
\draw[->,thick] (0,0) -- (3.1,0) node[below left=-1pt] {\scriptsize $x_1$};
\draw[->,thick] (0,0) -- (0,2.62) node[below left=-1pt] {\scriptsize $x_2$};
\draw (2.6,0.03) -- (2.6,-0.03) node[below] {\scriptsize $u_1$};
\draw (0.03,1.4) -- (-0.03,1.4) node[left] {\scriptsize $u_2$};
\draw[dashed,gray] (2.6,0) -- (2.6,1.4) -- (0,1.4);
\draw[thick,red!80!black] plot coordinates {(0.003,1.398) (0.020,1.389) (0.038,1.379) (0.056,1.370) (0.074,1.360) (0.093,1.350) (0.111,1.340) (0.130,1.330) (0.149,1.320) (0.169,1.309) (0.188,1.299) (0.208,1.288) (0.228,1.277) (0.248,1.266) (0.269,1.255) (0.289,1.244) (0.311,1.233) (0.332,1.221) (0.353,1.210) (0.375,1.198) (0.397,1.186) (0.420,1.174) (0.443,1.162) (0.466,1.149) (0.489,1.137) (0.513,1.124) (0.537,1.111) (0.561,1.098) (0.586,1.085) (0.611,1.071) (0.636,1.058) (0.662,1.044) (0.688,1.030) (0.714,1.016) (0.741,1.001) (0.768,0.986) (0.796,0.972) (0.824,0.957) (0.852,0.941) (0.881,0.926) (0.910,0.910) (0.940,0.894) (0.970,0.878) (1.000,0.861) (1.031,0.845) (1.063,0.828) (1.095,0.810) (1.128,0.793) (1.161,0.775) (1.194,0.757) (1.228,0.739) (1.263,0.720) (1.298,0.701) (1.334,0.682) (1.371,0.662) (1.408,0.642) (1.445,0.622) (1.484,0.601) (1.523,0.580) (1.562,0.559) (1.603,0.537) (1.644,0.515) (1.685,0.492) (1.728,0.470) (1.771,0.446) (1.816,0.422) (1.860,0.398) (1.906,0.374) (1.953,0.348) (2.000,0.323) (2.049,0.297) (2.098,0.270) (2.149,0.243) (2.200,0.215) (2.252,0.187) (2.306,0.158) (2.361,0.129) (2.416,0.099) (2.473,0.068) (2.531,0.037) (2.590,0.005)};
\draw[thick,violet] plot coordinates {(0.003,1.400) (0.021,1.400) (0.039,1.400) (0.057,1.400) (0.076,1.399) (0.096,1.399) (0.116,1.399) (0.137,1.398) (0.158,1.397) (0.180,1.397) (0.202,1.396) (0.225,1.395) (0.249,1.394) (0.273,1.392) (0.298,1.391) (0.323,1.389) (0.349,1.387) (0.376,1.385) (0.404,1.383) (0.432,1.380) (0.462,1.378) (0.492,1.375) (0.523,1.371) (0.554,1.368) (0.587,1.364) (0.620,1.360) (0.654,1.355) (0.690,1.350) (0.726,1.344) (0.763,1.338) (0.801,1.332) (0.840,1.325) (0.880,1.317) (0.921,1.309) (0.962,1.301) (1.005,1.291) (1.049,1.281) (1.094,1.270) (1.139,1.258) (1.185,1.246) (1.233,1.233) (1.281,1.218) (1.329,1.203) (1.379,1.187) (1.429,1.170) (1.479,1.151) (1.530,1.132) (1.581,1.112) (1.632,1.090) (1.683,1.067) (1.735,1.043) (1.785,1.018) (1.836,0.991) (1.886,0.964) (1.935,0.935) (1.984,0.905) (2.031,0.874) (2.078,0.842) (2.122,0.809) (2.166,0.775) (2.207,0.740) (2.247,0.704) (2.285,0.668) (2.321,0.631) (2.355,0.593) (2.387,0.555) (2.416,0.517) (2.443,0.479) (2.468,0.440) (2.491,0.402) (2.511,0.364) (2.529,0.326) (2.544,0.288) (2.558,0.250) (2.570,0.213) (2.579,0.177) (2.587,0.141) (2.593,0.106) (2.597,0.072) (2.599,0.038) (2.600,0.005)};
\draw[thick,blue] plot coordinates {(0.003,1.400) (0.021,1.400) (0.039,1.400) (0.057,1.400) (0.076,1.400) (0.096,1.400) (0.116,1.400) (0.137,1.400) (0.158,1.400) (0.180,1.400) (0.203,1.400) (0.226,1.400) (0.250,1.400) (0.274,1.400) (0.300,1.400) (0.326,1.400) (0.353,1.400) (0.380,1.400) (0.409,1.400) (0.439,1.400) (0.469,1.400) (0.501,1.400) (0.533,1.400) (0.567,1.400) (0.602,1.400) (0.639,1.400) (0.676,1.400) (0.715,1.400) (0.756,1.400) (0.798,1.400) (0.842,1.400) (0.887,1.400) (0.935,1.400) (0.984,1.400) (1.036,1.400) (1.090,1.400) (1.146,1.400) (1.205,1.400) (1.267,1.399) (1.331,1.399) (1.399,1.399) (1.470,1.398) (1.544,1.397) (1.621,1.396) (1.702,1.394) (1.787,1.391) (1.874,1.387) (1.963,1.381) (2.054,1.372) (2.143,1.359) (2.230,1.341) (2.310,1.317) (2.381,1.285) (2.441,1.247) (2.489,1.202) (2.525,1.152) (2.551,1.097) (2.568,1.041) (2.580,0.983) (2.588,0.926) (2.593,0.869) (2.596,0.813) (2.598,0.759) (2.599,0.706) (2.599,0.655) (2.600,0.605) (2.600,0.556) (2.600,0.509) (2.600,0.464) (2.600,0.420) (2.600,0.377) (2.600,0.335) (2.600,0.294) (2.600,0.254) (2.600,0.216) (2.600,0.179) (2.600,0.142) (2.600,0.107) (2.600,0.072) (2.600,0.038) (2.600,0.005)};
\node[font=\tiny,red!80!black] at (1.52,0.65) {$p=1$};
\node[font=\tiny,violet] at (1.96,0.95) {$p=2$};
\node[font=\tiny,blue] at (2.45,1.22) {$p=8$};
\node at (1.5,-0.34) {\scriptsize (b) asymmetric anchors, $u_1>u_2$};
\end{tikzpicture}%
\caption{Power-mean threshold curves \eqref{eq:rhop}, beyond which an observation is extreme, for anchors (a) $u_1=u_2$ and (b) $u_1>u_2$. The anchors fix where the curve meets the axes and the exponent its shape, a straight line at $p=1$ approaching the boundary of the componentwise region $\{x_1>u_1\ \text{or}\ x_2>u_2\}$ as $p$ grows.}
\label{fig:thresholds}
\end{figure}

\subsection{Angle-dependent body and tail parameters}\label{sec:parfun}

The body rate is represented on the logarithmic scale in a low-order Bernstein basis,
\begin{equation}\label{eq:parfun}
\log\lambda(w)=\sum_{k=0}^{K_\lambda} b_k\binom{K_\lambda}{k}w^{k}(1-w)^{K_\lambda-k},
\end{equation}
which guarantees positivity and smoothness with interpretable coefficients.

The tail scale carries the extremal dependence of the model and is parametrized so that its strength appears as an explicit parameter. With $\tau(w)=\min(w,1-w)/\max(w,1-w)$, which increases from zero at the axes to one at the diagonal, we set
\begin{equation}\label{eq:scalefam}
\sigma(w)=1+\bar{\chi}\,\tau(w)^{1+\nu(w)}, \quad \bar{\chi}\in[0,1], \quad \nu(w)=\exp\Big\{\sum_{k=0}^{K_\nu} z_k\binom{K_\nu}{k}w^{k}(1-w)^{K_\nu-k}\Big\},
\end{equation}
so that the scalar $\bar{\chi}$ sets the scale at the diagonal, $\sigma(1/2)=1+\bar{\chi}$, and the positive function $\nu(\cdot)$ controls how fast the scale relaxes towards its marginal value of one at the axes. On exponential margins the sample cloud divided by $\log n$ converges, as $n$ grows, onto a fixed shape, the limit set of \citet{nolde2014}, whose boundary lies at distance $\sigma(w)$ from the origin in direction $(w,1-w)$ (Section \ref{sec:limitset}). The shape of the limit set summarizes the extremal dependence: one that bulges along the diagonal describes components that are often large together, and one that hugs the triangle $x_1+x_2\le1$ describes rare joint extremes. Figure \ref{fig:scalefam} shows the roles of $\bar{\chi}$ and $\nu(\cdot)$ in the scale profile and in the limit set. Four properties hold for every admissible parameter value, so no constrained sampling is required. First, $\tau$ vanishes at the axes, so $\sigma(0)=\sigma(1)=1$, which with the envelope below ensures marginal compatibility: radial decay along an axis is marginal exponential decay. Second, the envelope $\sigma(w)\le\min\{1/w,1/(1-w)\}$, which keeps the limit set inside the unit square (Section \ref{sec:limitset}), holds because $\bar{\chi}\,\tau(w)^{\nu(w)}\le1$, with equality on $(0,1)$ only at the diagonal when $\bar{\chi}=1$. Third, $\sigma\ge1$, so the implied limit set contains that of independence: the family models nonnegative extremal association, and negative association is deferred to Section \ref{sec:discussion}. Fourth, $\bar{\chi}$ is exactly the coefficient of extremal association of \citet{coles1999} (Proposition \ref{prop:eta}), hence the notation. Section \ref{sec:multi} extends it to $d$ dimensions.

\begin{figure}[t]
\centering
\input{fig-scalefam}%
\caption{The scale family \eqref{eq:scalefam}: profiles $\sigma(w)$ (top) and limit set boundaries (bottom). Left: $\nu\equiv0.5$ and $\bar{\chi}\in\{0,1/3,2/3,1\}$; raising $\bar{\chi}$ lifts the scale at the diagonal and pushes the limit set towards $(1,1)$, reached when $\bar{\chi}=1$, with the envelope $1+\tau(w)$ dashed (the unit box in the limit set panels). Centre: $\bar{\chi}=0.8$ and $\nu\in\{0.1,0.5,1.5,4\}$; small values of $\nu$ push the profile towards the cap $1+\bar{\chi}\tau(w)$ (dashed), spreading extremal association across directions, and large values concentrate it near the diagonal. Right: $\bar{\chi}=0.8$ with constant (violet), increasing (blue) and decreasing (orange) $\nu(\cdot)$, giving asymmetric profiles.}
\label{fig:scalefam}
\end{figure}

The tail shape is fixed at zero, $\xi(w)\equiv0$, a consequence of the marginal standardization. On exponential margins $R\le d\max_j E_j$, so $P(R>r)\le d\,e^{-r/d}$ and heavy radial tails are excluded; for light-tailed sample clouds the radial decay rate in direction $w$ is fixed by the limit set \citep{nolde2022}, so the conditional radial tail is of exponential type. With $\xi=0$ the generalized Pareto distribution is threshold stable, so the tail scale keeps its meaning as the threshold curve moves across its posterior, and the results of Section \ref{sec:theory} hold in closed form. The second-order curvature of the radial tail, which fixed-threshold methods absorb with a free shape parameter \citep{mackay2023,wadsworthcampbell2024}, is taken up here by the estimated threshold and the Gamma body. Under independence the radial survivor $(1+r)e^{-r}$ carries a polynomial factor that the splice replaces by a constant; by Proposition \ref{prop:slowvar} the effect is confined to slowly varying factors, leaving decay rates unchanged.

\subsection{The angular density}\label{sec:angular}

The angular density is also a Bernstein polynomial \citep{hanson2017},
\[
f_W(w)=\sum_{k=1}^{K_W}\phi_k\,\mathrm{be}(w;k,K_W-k+1),
\]
a mixture of Beta densities $\mathrm{be}(\cdot;a,b)$ with weights $\bphi$ in the simplex $\mathcal{S}_{K_W-1}$ and $K_W$ fixed at a moderate value; for $d>2$ the basis consists of Dirichlet densities with integer parameters, $\tbinom{K_W+d-2}{d-1}$ of them for the same degree $K_W-1$, $55$ at $d=3$ and $715$ at $d=5$ for $K_W=10$, so the angular density is the easy part of the extension in principle and the expensive one in practice. Since $W$ is the angle of the full sample, the weights are free of the moment constraint that spectral measures of extreme value distributions must satisfy. Point masses at the vertices are not needed either: on exponential margins the dependence class is carried by the tail scale, and under independence $W$ is uniform.

\section{Extremal dependence properties}\label{sec:theory}

\subsection{The implied limit set}\label{sec:limitset}

Since the largest of $n$ standard exponential observations is of size $\log n$, the scaled cloud of Section \ref{sec:parfun} stays within the unit square and touches each of its sides, which the endpoint conditions $\sigma(0)=\sigma(1)=1$ ensure at $(1,0)$ and $(0,1)$. Figure \ref{fig:geometry} displays scaled samples for three dependence structures with their limit sets and the coefficient geometry of Section \ref{sec:coefficients}; the same display with the posterior of $\sigma(\cdot)$ over the scaled data serves as a check of fit in Section \ref{sec:predictive}.

\begin{figure}[t]
\centering
\input{fig-geometry}%
\caption{The geometry of extremal dependence for three reference cases: independence (left), a Gaussian copula with $\varrho=0.5$ (centre), and a sample from the model at the asymptotic dependence boundary $\bar{\chi}=1$ with $\nu\equiv0.25$ (right). Each panel shows $n=1000$ observations on standard exponential margins divided by $\log n$ (grey points), the boundary of the limit set at radius $\sigma(w)$ in direction $(w,1-w)$ (solid), the unit marginal box (dashed), the square $[\eta,\infty)^2$ meeting the limit set at $(\eta,\eta)$ (dash-dotted), with $\eta$ equal to $1/2$, $(1+\varrho)/2=3/4$ and $1$ respectively, and an illustrative threshold curve from the power-mean family \eqref{eq:rhop} with exponent $p$ equal to $1$, $2$ and $8$ (dotted). The scaled cloud fills its limit set, occasional points falling marginally outside at finite $n$; the Gaussian silhouette is pinched at the axes, at height $1-\varrho^2$, and meets the marginal box at the two interior points marked in grey (Section \ref{sec:coefficients}).}
\label{fig:geometry}
\end{figure}

The gauge function describes the limit set analytically. For a compact star-shaped set $G$ containing the origin, the gauge $g(\bx)=\inf\{s>0:\bx/s\in G\}$ is positive and homogeneous of order one, with $G=\{\bx\ge\bm{0}:g(\bx)\le1\}$, so that in direction $(w,1-w)$ the boundary of $G$ lies at radius $1/g(w,1-w)$. For light-tailed vectors, $-\log f_{\bE}(t\bx)/t\to g(\bx)$ as $t\to\infty$ \citep{nolde2022}, so the gauge is the direction-dependent exponential decay rate of the joint density. In radial models whose conditional radial tail decays exponentially this rate is the gauge \citep{papastathopoulos2023,wadsworthcampbell2024}, and Proposition \ref{prop:gauge} states the same for the proposed model, whose body and bounded threshold curve leave the limit set unaffected.

\begin{proposition}\label{prop:gauge}
Let $\bE$ have density \eqref{eq:factorization} with radial component \eqref{eq:splice}, $\xi\equiv0$, angular density $f_W$ positive on $(0,1)$, threshold function $\rho(\cdot)$ bounded, and tail scale $\sigma(\cdot)$ positive and continuous on $[0,1]$. Then, for every $\bx$ in the interior of the positive quadrant,
\[
-\frac{1}{t}\log f_{\bE}(t\bx)\xrightarrow[t\to\infty]{}g(\bx)=\frac{\normone{\bx}}{\sigma(x_1/\normone{\bx})},
\]
and the scaled sample cloud converges onto the limit set $G=\{\bx\ge\bm{0}:g(\bx)\le1\}$.
\end{proposition}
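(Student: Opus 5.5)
The plan is to evaluate the density \eqref{eq:factorization} directly along the ray $t\bx$ and read off the exponential rate. Fix $\bx$ in the open quadrant and write $r_0=\normone{\bx}$ and $w=x_1/r_0\in(0,1)$. Then $t\bx$ has radius $tr_0$ and angle $w$ for every $t$, so the angle stays fixed along the ray:
\[
f_{\bE}(t\bx)=f_W(w)\,f_{R\mid W}(tr_0\mid w)\,(tr_0)^{-1}.
\]
Since $\rho$ is bounded, $tr_0>\rho(w)$ for all $t$ large, and the tail branch of \eqref{eq:splice} with $\xi=0$ applies:
\[
f_{R\mid W}(tr_0\mid w)=\{1-\varpi(w)\}\,\sigma(w)^{-1}\exp[-\{tr_0-\rho(w)\}/\sigma(w)].
\]
Taking $-\frac1t\log$, the terms $\log f_W(w)$, $\log\{1-\varpi(w)\}$, $\log\sigma(w)$ and $\rho(w)/\sigma(w)$ do not depend on $t$. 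They are finite because $f_W(w)>0$, $\sigma(w)>0$, and $\varpi(w)=G_{\mathrm b}(\rho(w)\mid w)<1$ for the Gamma body, which has unbounded support. Dividing by $t$ therefore kills them, and so does the term $\log(tr_0)/t$. The limit is $r_0/\sigma(w)=\normone{\bx}/\sigma(x_1/\normone{\bx})$, which is the first claim. This pointwise step is routine.

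The substantive part is the convergence of the scaled cloud onto $G$. For this I would invoke the sufficient condition of \citet{nolde2022} (building on \citealt{nolde2014}). It says that if $-\log f_{\bE}(t\bx)/t\to g(\bx)$ holds \emph{locally uniformly} on the quadrant minus the origin, with $g$ continuous, then $\{\bE_i/\log n\}_{i\le n}$ converges in probability, in the Hausdorff sense, to $G=\{g\le1\}$. The plan is to upgrade the pointwise limit to uniform convergence on compact sets $K$ of directions and radii bounded away from zero. Continuity of $\sigma$ on $[0,1]$ gives $\inf\sigma>0$ and makes $g$ continuous and $1$-homogeneous. On $K$ the radius $tr_0\to\infty$ uniformly, and the remainder terms divided by $t$ are uniformly $O(\log t/t)$ provided $\log f_W$, $\log\{1-\varpi\}$ and $\log\sigma$ are bounded on the range of angles in $K$.

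The obstacle is the boundary of the simplex. There $f_W$ may vanish or blow up, for instance for Bernstein weights with $\phi_1=0$, and the statement only asks $f_W$ to be positive on $(0,1)$. I would handle this in two steps. First, restrict attention to compact sets of angles $[\delta,1-\delta]$, where $f_W$ is bounded above and below by continuity; the Bernstein form guarantees that continuity. Second, control the edges using the known exponential margins. Each $E_j$ is standard exponential, so $P(E_j>(1+\epsilon)\log n)$ summed over $n$ points still vanishes. Together with $\sigma(0)=\sigma(1)=1$ and continuity of $g$ near the axes, this pins the part of the cloud near each axis inside a neighbourhood of $G$. An alternative is to use the radial-angular version of the limit-set result of \citet{wadsworthcampbell2024} and \citet{papastathopoulos2023}. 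That version requires only $P(R>r\mid W=w)=\exp\{-r/\sigma(w)+o(r)\}$ uniformly in $w$, which holds here because $\rho$ is bounded and $\sigma$ is continuous and positive; it then gives limit-set boundary radius $\sigma(w)$, avoiding density bounds at the vertices altogether. I would try this route first, with the density-based argument as a check.
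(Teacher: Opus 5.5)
Your pointwise step is exactly the paper's. Along the ray the angle is fixed, the tail branch of \eqref{eq:splice} applies once $t\normone{\bx}>\sup\rho$, and positivity of $f_W(w)\{1-\varpi(w)\}$ lets every $t$-free term vanish after division by $t$.

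The cloud convergence, however, is not established by what you wrote. The density-based route fails at its edge step, which uses that each $E_j$ is standard exponential and that $\sigma(0)=\sigma(1)=1$. Neither is a hypothesis. The margins of the model are not standard exponential: under the scale family they are lighter by the factor $t^{-1}$ of Proposition~\ref{prop:slowvar}. Without the envelope, which the statement does not assume, they need not even have unit rate. For $\sigma\equiv3$ the limit set is $\{x_1+x_2\le3\}$ and the cloud reaches $(3,0)$, so an argument confining it to the unit box near the axes would prove something false. Moreover, once the edges are cut away the locally uniform density criterion no longer applies as stated, so the convergence would have to be proved by hand anyway.

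Your radial-angular alternative is the right route and is what the paper does, except that the paper proves it rather than citing it. It also does without the uniform $o(r)$ you invoke, which would in any case need $1-\varpi$ bounded away from zero; that holds for a bounded rate $\lambda$ but is not among the stated hypotheses. For $r>\bar\rho=\sup\rho$ the splice gives exactly $f_W(w)\,P(R>r\mid W=w)=B(w)e^{-r/\sigma(w)}$ with $B=f_W\{1-\varpi\}e^{\rho/\sigma}$. This $B$ is integrable because $B\le f_We^{\bar\rho/\inf\sigma}$, and it is positive on $(0,1)$. The two sufficient conditions for convergence onto $G$ are then checked directly.

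\emph{Containment.} An open $U\supset G$ contains $(1+\varepsilon')G$ for small $\varepsilon'>0$ by compactness. For $n$ large, $nP(\bE/\log n\notin U)\le nP\{R>(1+\varepsilon')\sigma(W)\log n\}=n^{-\varepsilon'}\int B\to0$.

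\emph{Filling.} Take a nonzero $\bx=s(w_0,1-w_0)\in G$, so $0<s\le\sigma(w_0)$, and set $\kappa=\min(\varepsilon,s)/2$. Restrict to angles in a small neighbourhood $I$ of $w_0$ on which $\sigma>\sigma(w_0)-\kappa/2$, and to $|R/\log n-s|<\kappa$. This gives $nP\{\bE/\log n\in\mathcal{B}(\bx,\varepsilon)\}\ge\tfrac12\bigl(\int_IB\bigr)n^{1-\gamma}\to\infty$, with $\gamma=(s-\kappa)/\{\sigma(w_0)-\kappa/2\}<1$.

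Containment uses only $1-\varpi\le1$ and integrability of $f_W$, and filling uses only $B>0$ near $w_0$. The behaviour of $f_W$ and $\varpi$ at the vertices, which drove your detour, therefore never enters.
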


The convergence is in the sense of \citet{nolde2014}, and the proof, which needs only the exponential form of the radial survivor above the threshold, is in Section~\ref{app:proofs}, with the proofs of the other results of this section.

By Proposition \ref{prop:gauge}, the tail scale is the reciprocal gauge along directions, so the fitted $\sigma(\cdot)$ describes the joint tail geometry completely. Since any gauge in exponential margins satisfies $g(\bx)\ge\max_j x_j$, a valid limit set requires the envelope
\begin{equation}\label{eq:envelope}
\sigma(w)\le\min\Big\{\frac{1}{w},\frac{1}{1-w}\Big\}, \qquad w\in(0,1),
\end{equation}
and the limit set induced by the envelope is the unit box, dashed in the figures. Under \eqref{eq:scalefam} the envelope holds for every parameter value, so every posterior draw implies a valid limit set, the guarantee that \citet{majumder2025} obtain through constrained priors.

\subsection{Dependence coefficients}\label{sec:coefficients}

Standard summaries of extremal dependence follow directly. The residual tail dependence coefficient $\eta\in(0,1]$ of \citet{ledford1996} governs the decay of joint survival probabilities and separates asymptotic dependence ($\eta=1$ with slowly varying component tending to a positive constant, giving $\chi>0$) from asymptotic independence ($\eta<1$, giving $\chi=0$), where $\chi$ is the tail dependence coefficient and $\bar{\chi}=2\eta-1$ \citep{coles1999}. In exponential margins the coefficient is determined by the limit set through a quadrant correspondence: $\eta^{-1}=\inf\{g(\bx):\bx\ge(1,1)\}$, equivalently $\eta$ is the largest $s$ for which the square $[s,\infty)^2$ intersects $G$ \citep{nolde2014,nolde2022}; the often-quoted reduction $\eta=1/g(1,1)$ holds only when the infimum is attained at the corner. Proposition \ref{prop:eta} writes it along directions and evaluates it under the family \eqref{eq:scalefam}.

\begin{proposition}\label{prop:eta}
Under the conditions of Proposition \ref{prop:gauge},
\[
\eta=\sup_{w\in(0,1)}\,\min(w,1-w)\,\sigma(w).
\]
Under \eqref{eq:envelope}, $\eta=1$, which $\chi>0$ requires, holds if and only if $\sigma(1/2)=2$, when the limit set contains the corner $(1,1)$. Under the scale family \eqref{eq:scalefam}, the supremum is attained only at $w=1/2$, for every $\bar{\chi}\in[0,1]$ and every positive $\nu(\cdot)$, so that $\eta=(1+\bar{\chi})/2$.
\end{proposition}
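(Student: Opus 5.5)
We start from the quadrant characterization quoted before the statement, $\eta^{-1}=\inf\{g(\bx):\bx\ge(1,1)\}$, with the gauge $g(\bx)=\normone{\bx}/\sigma(x_1/\normone{\bx})$ supplied by Proposition \ref{prop:gauge}. We then rewrite this infimum in pseudo-polar coordinates. Put $\bx=r(w,1-w)$ with $w\in(0,1)$. The constraint $\bx\ge(1,1)$ is equivalent to $r\ge 1/\min(w,1-w)$, and $g(\bx)=r/\sigma(w)$. Because $g$ increases in $r$, the infimum along each ray is attained at the smallest admissible radius, $r=1/\min(w,1-w)$. This gives
\[
\eta^{-1}=\inf_{w\in(0,1)}\frac{1}{\min(w,1-w)\,\sigma(w)},
\]
and taking reciprocals yields the displayed formula $\eta=\sup_w\min(w,1-w)\sigma(w)$. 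Geometrically, this is the largest $s$ for which the boundary point $\sigma(w)(w,1-w)$ lies in $[s,\infty)^2$.

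For the second claim, the envelope \eqref{eq:envelope} gives $\min(w,1-w)\sigma(w)\le\min(w,1-w)\max\{1/w,1/(1-w)\}\cdot$ — more precisely, $\sigma(w)\le 1/\max(w,1-w)$. Hence
\[
\min(w,1-w)\,\sigma(w)\le\frac{\min(w,1-w)}{\max(w,1-w)}=\tau(w)\le1.
\]
Equality $\tau(w)=1$ holds only at $w=1/2$.

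The supremum is a supremum, so attainment needs care. Since $\tau(w)<1$ away from $1/2$, the product approaches $1$ only as $w\to1/2$. By continuity of $\sigma$ the supremum then equals $\sigma(1/2)/2$ there, provided the supremum is $1$. Conversely, $\sigma(1/2)=2$ gives the value $1$ at $w=1/2$. So $\eta=1$ if and only if $\sigma(1/2)=2$, and the boundary point is then $2(1/2,1/2)=(1,1)$.

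Under the family \eqref{eq:scalefam} we compute $h(w)=\min(w,1-w)\sigma(w)$ directly. By symmetry of $\min$ and $\tau$ it suffices to take $w\le1/2$, where $\min(w,1-w)=w$ and $\tau=w/(1-w)$. Writing $t=\tau(w)\in(0,1]$, we have $w=t/(1+t)$, so
\[
h=\frac{t}{1+t}\bigl(1+\bar{\chi}\,t^{1+\nu}\bigr).
\]
The main point is to show that $h<(1+\bar{\chi})/2$ for every $t<1$, uniformly over arbitrary positive $\nu(w)$. Since $\nu$ varies with $w$, we avoid derivatives. For fixed $t<1$ the exponent satisfies $t^{1+\nu}<t$, so
\[
h<\frac{t(1+\bar{\chi}t)}{1+t}.
\]
It then remains to check $2t(1+\bar{\chi}t)<(1+\bar{\chi})(1+t)$. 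This is equivalent to
\[
(1-t)+\bar{\chi}\,(1+t-2t^2)>0,
\]
which holds because $1+t-2t^2=(1-t)(1+2t)>0$ and $1-t>0$ for $t<1$.

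When $\bar{\chi}=0$ the bound is simply $t/(1+t)<1/2$. In both cases the supremum is therefore attained only at $t=1$, that is, at $w=1/2$, where $h=(1+\bar{\chi})/2$. Hence $\eta=(1+\bar{\chi})/2$, and this is consistent with $\bar{\chi}=2\eta-1$.

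The expected obstacle is justifying the quadrant characterization in this setting. We take it from \citet{nolde2022}, relying on the convergence established in Proposition \ref{prop:gauge} and the fact that $G$ is star-shaped. No regularity in $\nu$ beyond positivity is then required.
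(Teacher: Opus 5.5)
Your second and third steps are essentially the paper's. The envelope bound $\min(w,1-w)\,\sigma(w)\le\tau(w)$, followed by a limiting argument at $w=1/2$, is the paper's ``value one'' argument; the paper instead uses that $m(w)=\min(w,1-w)\,\sigma(w)$ is continuous on the closed interval, so its supremum is attained. Your inequality $2t(1+\bar{\chi}t)<(1+\bar{\chi})(1+t)$ is the paper's $(1-2a)\{(1-a)+\bar{\chi}(1+a)\}\ge0$ written in the variable $t=a/(1-a)$, with $a=\min(w,1-w)$. The reduction to $w\le1/2$ is harmless even though $\nu$ need not be symmetric, because only $\nu(w)>0$ is used.

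The genuine difference is in the first claim. You import the quadrant correspondence from \citet{nolde2022} and reduce it along rays. The paper instead proves the needed log-limit from the model itself. For $2t>\sup_w\rho(w)$, the event $\{E_1>t,E_2>t\}=\{R>t/\min(W,1-W)\}$ lies in the exponential branch of the splice at every angle, so
\[
P(E_1>t,E_2>t)=\int_0^1B(w)\,e^{-t/m(w)}\,dw,\qquad B(w)=f_W(w)\{1-G_{\mathrm{b}}(\rho(w)\mid w)\}\,e^{\rho(w)/\sigma(w)}.
\]
Bounding $1/m$ below by its infimum everywhere, and above by the infimum plus $\varepsilon$ on a nonempty open set of angles, gives $-t^{-1}\log P(E_1>t,E_2>t)\to1/\sup_w m(w)$, which is $1/\eta$ by the Ledford--Tawn form. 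Your ray computation is exactly the paper's identification of this limit with $\min\{g(\bx):\bx\ge(1,1)\}$.

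The direct route needs only that $B$ is integrable and positive on $(0,1)$, which are exactly the hypotheses of Proposition~\ref{prop:gauge}. Its integral representation is also reused for Propositions~\ref{prop:slowvar} and~\ref{prop:adf}. Your route is shorter and keeps the geometry in view, but the justification you give for the citation is not enough on its own. You rely on the pointwise convergence of Proposition~\ref{prop:gauge} plus star-shapedness of $G$. A Laplace-type limit for $P(\bE\in t[1,\infty)^2)$, however, needs uniform control of the density over the quadrant, of the kind assumed in the geometric literature (continuous convergence of $-t^{-1}\log f_{\bE}(t\bx_t)$ for $\bx_t\to\bx$). Proposition~\ref{prop:gauge} does not supply this when $f_W$ is merely positive on $(0,1)$ and $\rho$, $\lambda$ carry no continuity assumption. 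Replacing the citation with the short integral bound above closes your argument under the stated conditions.
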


The model parameter $\bar{\chi}=2\eta-1$ is therefore the coefficient of extremal association, with the asymptotic dependence boundary at $\bar{\chi}=1$, and its exact posterior distribution, including point masses on the two dependence classes under the priors of Section \ref{sec:classpriors}, is available from any posterior sample at no additional cost.

Table \ref{tab:copulas} reads familiar dependence structures through the model. Independence gives $\sigma\equiv1$ and $\bar{\chi}=0$, and a Gaussian copula with correlation $\varrho\ge0$ has diagonal value $\sigma(1/2)=1+\varrho$, recovering the classical value $\bar{\chi}=\varrho$ \citep{coles1999}. Asymptotically dependent structures such as the logistic attain the envelope value $\sigma(1/2)=2$, where the limit set touches the corner $(1,1)$; this geometric distinction between the two dependence classes is the one studied by \citet{lee2025geometric}. The Frank copula, associated in the body yet independent in the tail, separates body and tail dependence, which the model is built to express.

\begin{table}[t]
\centering
\caption{Familiar dependence structures read through the model: coefficients $\eta$, $\bar{\chi}$ and $\chi$, the scale function $\sigma(w)$ of the structure where available, and its position relative to the scale family \eqref{eq:scalefam} and the constraints \eqref{eq:exactclass}. Within the model $\chi(q)\to0$ on the exponential scale for every $\bar{\chi}$, the copula being asymptotically dependent only at the atom $\bar{\chi}=1$ (Proposition \ref{prop:slowvar}); negative association is deferred to Section \ref{sec:discussion}.}
\label{tab:copulas}
\footnotesize
\setlength{\tabcolsep}{4pt}
\resizebox{\textwidth}{!}{%
\begin{tabular}{lcccll}
\toprule
Structure & $\eta$ & $\bar{\chi}$ & $\chi$ & $\sigma(w)$ & Model \\
\midrule
Independence & $\frac12$ & $0$ & $0$ & $1$ & in \eqref{eq:scalefam}, $\bar{\chi}=0$; uniform angles, Gamma$(2,1)$ body \\[2pt]
Frank, $\theta$ & $\frac12$ & $0$ & $0$ & $1$ & in \eqref{eq:scalefam}, $\bar{\chi}=0$; body association via $f_W$, $\lambda(\cdot)$ \\[2pt]
Gaussian, $\varrho\in[0,1)$ & $\frac{1+\varrho}{2}$ & $\varrho$ & $0$ & $\frac{1-\varrho^2}{1-2\varrho\sqrt{w(1-w)}}$ & $\eta=\sigma(1/2)/2$, $\bar{\chi}=\varrho$; neither \eqref{eq:scalefam} nor \eqref{eq:exactclass} \\[6pt]
Inverted logistic, $\alpha_{\mathrm{IL}}$ & $2^{-\alpha_{\mathrm{IL}}}$ & $2^{1-\alpha_{\mathrm{IL}}}-1$ & $0$ & $\{w^{1/\alpha_{\mathrm{IL}}}+(1-w)^{1/\alpha_{\mathrm{IL}}}\}^{-\alpha_{\mathrm{IL}}}$ & $\eta=\sigma(1/2)/2$; satisfies \eqref{eq:exactclass}, not in \eqref{eq:scalefam} \\[4pt]
Logistic, $\alpha_{\mathrm{L}}\in(0,1)$ & $1$ & $1$ & $2-2^{\alpha_{\mathrm{L}}}$ & limit set touches $(1,1)$ & atom $\bar{\chi}=1$ \\[2pt]
Student $t$, $\varrho>0$ & $1$ & $1$ & $\in(0,1)$ & limit set touches $(1,1)$ & atom $\bar{\chi}=1$ \\[2pt]
Comonotonicity & $1$ & $1$ & $1$ & diagonal segment to $(1,1)$ & degenerate angle; outside the framework \\
\bottomrule
\end{tabular}}
\end{table}

The last column of Table \ref{tab:copulas} places each structure relative to the scale family and to the scale functions with $\sigma(1/2)=1+\bar{\chi}$ whose supremum in Proposition \ref{prop:eta} is attained at the diagonal, so that $\bar{\chi}$ is still exactly the coefficient of extremal association. These are the scale functions satisfying
\begin{equation}\label{eq:exactclass}
\sigma(w)\le\min\left\{1+\tau(w),\ \frac{1+\bar{\chi}}{2\min(w,1-w)}\right\},\qquad \sigma(0)=\sigma(1)=1.
\end{equation}
The first bound is the envelope \eqref{eq:envelope}, since $1+\tau(w)=1/\max(w,1-w)$, and the second keeps $\min(w,1-w)\,\sigma(w)$ below its diagonal value $(1+\bar{\chi})/2$. Since $\tau^{1+\nu}\le\tau$, the scale family lies below the cap $1+\bar{\chi}\tau(w)$, which satisfies both bounds, so the family satisfies \eqref{eq:exactclass}. The inverted logistic satisfies \eqref{eq:exactclass} but is not in the family, and the Gaussian is in neither, its scale exceeding the cap at intermediate angles and equal to $1-\varrho^2$ at the axes, where its limit set is pinched (Figure \ref{fig:geometry}, centre). Under such copulas the scale family inflates $\bar{\chi}$ while finite-level summaries remain accurate (Section \ref{sec:s3}), so the breadth of \eqref{eq:exactclass} matters for asymptotic summaries only.

\subsection{Joint tail decay}\label{sec:jointdecay}

The limit set fixes tail decay only on the logarithmic scale; the first result gives the exact decay of the joint and marginal survivors under the scale family.

\begin{proposition}\label{prop:slowvar}
Assume the conditions of Proposition \ref{prop:gauge} with the scale family \eqref{eq:scalefam}, and let $f_W$, $\rho(\cdot)$, $\lambda(\cdot)$ and $\nu(\cdot)$ be continuous and $f_W$ positive on $[0,1]$. Then there are positive constants $c$, $c_1$ and $c_2$ such that, as $t\to\infty$,
\begin{equation}\label{eq:slowvar}
P(E_1>t,E_2>t)=\frac{c}{t}\,e^{-t/\eta}\{1+o(1)\},\qquad P(E_j>t)=\frac{c_j}{t}\,e^{-t}\{1+o(1)\}\quad(j=1,2).
\end{equation}
Consequently $\chi(q)=P(E_1>t_q,E_2>t_q)/(1-q)$, with $t_q=-\log(1-q)$, tends to zero for every $\bar{\chi}\in[0,1]$, while the copula of the model is asymptotically independent with coefficient $\eta$ if $\bar{\chi}<1$ and asymptotically dependent if $\bar{\chi}=1$.
\end{proposition}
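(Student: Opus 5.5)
The plan is to write every survivor as a one-dimensional integral over the angle and apply a Laplace-type argument in which the dominant exponent has a \emph{kink}, not a smooth minimum. A kink gives linear rather than quadratic local behaviour, and that is what produces the factor $1/t$ instead of $1/\sqrt{t}$. The steps are: rewrite the survivors, locate the minimizers of the exponent, check the one-sided slopes there, do the local expansion, and then read off the consequences for $\chi(q)$ and the copula.

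\emph{Reduction.} In the bivariate case $\{E_1>t_1,E_2>t_2\}=\{R>\max(t_1/W,t_2/(1-W))\}$. Hence
\[
P(E_1>t_1,E_2>t_2)=\int_0^1 f_W(w)\,\bar F_{R\mid W}\bigl(\max\{t_1/w,\,t_2/(1-w)\}\mid w\bigr)\,dw .
\]
Since $\rho(\cdot)$ is bounded, once $t$ is large the argument exceeds $\rho(w)$ for every $w$. Splice \eqref{eq:splice} with $\xi\equiv0$ then gives the exact form
\[
\bar F_{R\mid W}(r\mid w)=\{1-\varpi(w)\}\,e^{\rho(w)/\sigma(w)}\,e^{-r/\sigma(w)} .
\]
Write $k(w)$ for the prefactor $f_W(w)\{1-\varpi(w)\}e^{\rho(w)/\sigma(w)}$; it is continuous and positive on $[0,1]$. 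With $t_1=t_2=t$, the joint survivor becomes $\int_0^1 k(w)\,e^{-t\,h(w)}\,dw$, where $h(w)=1/\{\min(w,1-w)\sigma(w)\}$. For the margin $E_1$ the exponent is instead $h_1(w)=1/\{w\sigma(w)\}$, and symmetrically for $E_2$.

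\emph{Where the minima sit.} By Proposition \ref{prop:eta}, $h$ attains its minimum $1/\eta$ only at $w=1/2$. Moreover $h\to\infty$ at the endpoints, and $h$ is continuous. So $h\ge 1/\eta+\delta$ outside any neighbourhood of $1/2$, and that region contributes $o(e^{-t/\eta}/t)$. It remains to check the one-sided slopes at $1/2$, using $\tau'(1/2^\mp)=\pm4$, $\tau(1/2)=1$ and $\log\tau(1/2)=0$, which removes the $\nu'$ term.
\begin{itemize}
\item Left of $1/2$: $\frac{d}{dw}\{w\sigma(w)\}\big|_{1/2^-}=1+\bar{\chi}+2\bar{\chi}\{1+\nu(1/2)\}>0$.
\item Right of $1/2$: $(1-w)\sigma(w)$ is decreasing with the mirror-image slope.
\end{itemize}
Thus $h(w)=1/\eta+a_\pm|w-1/2|+o(|w-1/2|)$ with $a_\pm>0$. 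The local integral is then $\int e^{-t a_\pm |u|}\,du\sim 1/(a_\pm t)$, which gives
\[
c=k(1/2)\,(1/a_-+1/a_+).
\]
For the margin, the envelope gives $w\sigma(w)\le1$, with equality at $w=1$ and, only when $\bar{\chi}=1$, also at $w=1/2$.
\begin{itemize}
\item Near $w=1$: $w\sigma(w)=1-(1-w)+O\bigl((1-w)^{1+\nu(1)}\bigr)$, so the exponent has slope one. The one-sided endpoint Laplace integral contributes $k(1)/t$.
\item At $w=1/2$ when $\bar{\chi}=1$: the same left and right slope computation shows a strict linear kink, with right slope $-2\nu(1/2)$ of $w\sigma$. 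This adds a further $1/t$ term to $c_1$.
\end{itemize}
In both cases $P(E_j>t)=c_j e^{-t}/t\,\{1+o(1)\}$.

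\emph{Consequences.} Setting $e^{-t_q}=1-q$ gives $\chi(q)=c\,t_q^{-1}e^{-t_q(1/\eta-1)}\{1+o(1)\}\to0$ for every $\eta\le1$, because of the $1/t$ factor. For the copula, I will invert the margins: $c_je^{-t}/t=s$ gives $t_j(s)=\log(1/s)-\log\log(1/s)+\log c_j+o(1)$. The thresholds $t_1(s)$ and $t_2(s)$ differ by $O(1)$, so I need the shifted joint survivor
\[
P(E_1>t+a_1,E_2>t+a_2)=c(a_1,a_2)\,t^{-1}e^{-t/\eta}\{1+o(1)\}.
\]
This follows by the same Laplace argument with exponent $\max\{(t+a_1)/w,(t+a_2)/(1-w)\}/\sigma(w)$. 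The bounded shifts move the kink by $O(1/t)$ and only change the constant. Dividing by $s\asymp e^{-t}/t$ gives the following.
\begin{itemize}
\item If $\bar{\chi}=1$: the ratio tends to a positive constant, so the copula is asymptotically dependent.
\item If $\bar{\chi}<1$: the joint survivor is $s^{1/\eta}$ times a slowly varying factor, so the copula is asymptotically independent with coefficient $\eta$.
\end{itemize}

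I expect the main obstacle to be the uniformity in the shifted-threshold Laplace step, and the interaction at $\bar{\chi}=1$ between the kink and the envelope. In particular, I must rule out the neighbourhood of $w=1/2$ in the marginal integral producing a competing order, which needs the strict one-sided slopes computed above, together with continuity of $\nu$ and $\lambda$ to control the $o(1)$ terms uniformly in $w$.
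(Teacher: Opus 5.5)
Your proposal is correct. The two expansions are the paper's own argument specialised to $d=2$. The paper also writes each survivor as $\int k(w)e^{-t\Phi(w)}\,dw$ with your prefactor $k$ (its $B$) and applies a Laplace lemma at a linear-kink minimum. It finds the same one-sided slopes, so $c=2\eta^2k(1/2)/s$ with $s=1+3\bar{\chi}+2\bar{\chi}\nu(1/2)$, and $c_1=k(1)$, $c_2=k(0)$. At $\bar{\chi}=1$ it adds $k(1/2)\{1/(2\nu(1/2))+1/(4+2\nu(1/2))\}$ to each.

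The genuine difference is the copula step. The paper never expands the survivor at shifted thresholds; it uses monotonicity instead. The joint survivor at the model quantiles $\{t_1(u),t_2(u)\}$ lies between its values at the equal levels $\min_jt_j(u)$ and $\max_jt_j(u)$, both already covered by \eqref{eq:slowvar}. Both are of order $\ell^{1/\eta-1}e^{-\ell/\eta}$ with $\ell=-\log(1-u)$. At $\bar{\chi}=1$ the lower one gives at least $(c/\max_jc_j)(1-u)\{1+o(1)\}$. So the uniformity problem you single out as the main obstacle never arises in the paper.

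Your route has to pay for that uniformity. This is easy: absorb the $o(1)$ in the shifts by monotonicity of the survivor in $(a_1,a_2)$ together with continuity of $c(a_1,a_2)$. In return it gives more than the paper:
\begin{itemize}
\item At $\bar{\chi}=1$ it yields the actual limit $c(\log c_1,\log c_2)$ of the copula's $\chi(u)$. The sandwich only gives a positive $\liminf$, with $\limsup\le c/\min_jc_j$.
\item For $\bar{\chi}<1$ it yields a factor asymptotic to a constant times $\ell^{1/\eta-1}$, hence genuinely slowly varying. The sandwich only traps that factor between two multiples of $\ell^{1/\eta-1}$.
\end{itemize}

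Two small technical repairs, since $\nu$ is only assumed continuous:
\begin{itemize}
\item State the kink as $\tau(1/2+u)^{1+\nu(1/2+u)}=1-4\{1+\nu(1/2)\}|u|+o(u)$. This uses only continuity of $\nu$, whereas your argument appeals to a vanishing $\nu'$ term, which presumes differentiability.
\item Near $w=1$ the remainder in $w\sigma(w)$ is $o(1-w)$, because $\nu(w)$ stays above $\nu(1)/2>0$. It need not be $O\{(1-w)^{1+\nu(1)}\}$, but $o(1-w)$ is all the endpoint Laplace step requires.
\end{itemize}
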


Both survivors carry a factor $t^{-1}$, the joint one from the kink of $\min(w,1-w)\,\sigma(w)$ at the diagonal and the marginal ones from the axes. On the scale of the data, whose margins are exactly exponential, only the joint factor counts, and at the boundary $\chi(q)$ vanishes logarithmically; relative to the margins of the model the two factors cancel there, and the copula is asymptotically dependent. The marginal factor also makes the margins implied by a fit lighter than the standardized ones in the far tail, which is why the return level sets of Section \ref{sec:application} are tight near the axes. The constants, given in Section~\ref{supp:proof-const}, depend on the relaxation function $\nu(\cdot)$ only through $\nu(1/2)$, smaller values giving stronger sub-asymptotic dependence.

The second result is the angular dependence function of \citet{wadsworthtawn2013}, which \citet{nolde2022} express through the gauge of the limit set; for the model it takes an explicit form, with $\eta$ as the diagonal case $\omega=1/2$.

\begin{proposition}\label{prop:adf}
Under the conditions of Proposition \ref{prop:gauge}, for every $\omega\in(0,1)$,
\begin{equation}\label{eq:adf}
-\frac{1}{t}\log P(E_1>\omega t,\,E_2>(1-\omega)t)\xrightarrow[t\to\infty]{}\Lambda(\omega)=\inf_{w\in(0,1)}\frac{1}{\sigma(w)}\max\left(\frac{\omega}{w},\frac{1-\omega}{1-w}\right),
\end{equation}
the angular dependence function, and $\Lambda(1/2)=1/(2\eta)$. Under \eqref{eq:scalefam}, $\max(\omega,1-\omega)\le\Lambda(\omega)\le1$ for every $\omega$, with $\Lambda(\omega)$ pointwise nonincreasing in $\bar{\chi}$, $\Lambda\equiv1$ if and only if $\bar{\chi}=0$, and $\Lambda(\omega)=\max(\omega,1-\omega)$ for every $\omega$ if and only if $\bar{\chi}=1$.
\end{proposition}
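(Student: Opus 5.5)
The plan is to obtain the limit in \eqref{eq:adf} from the limit-set characterization of \citet{nolde2022}, using Proposition \ref{prop:gauge}. That proposition gives convergence of $-t^{-1}\log f_{\bE}(t\bx)$ to $g$, and convergence of the scaled cloud onto $G$. Standard Laplace-type arguments for light-tailed densities (as in \citealp{nolde2014,nolde2022}) then give, for a closed region $B=\{\bx:x_1\ge\omega,x_2\ge1-\omega\}$,
\[
-\frac1t\log P(\bE\in tB)\to\inf_{\bx\in B}g(\bx).
\]
If one prefers to avoid appeal to the general theorem, I would argue directly. The upper bound on the probability comes from integrating the density bound $f_{\bE}(t\bx)\le e^{-t\{g(\bx)-\epsilon\}}$, which holds uniformly on compacts. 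The tail far from the origin is controlled by $P(R>r)\le 2e^{-r/2}$. The lower bound comes from a small ball around a near-minimizer, inside $B$.

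I would then compute the infimum along directions. Write $\bx=r(w,1-w)$. The membership condition $\bx\in B$ reads $r\ge\max\{\omega/w,(1-\omega)/(1-w)\}$, and $g(\bx)=r/\sigma(w)$ is increasing in $r$. Minimizing over $r$ first yields exactly $\Lambda(\omega)$ as in \eqref{eq:adf}. Setting $\omega=1/2$ gives $\Lambda(1/2)=\inf_w 1/\{2\min(w,1-w)\sigma(w)\}=1/(2\eta)$ by Proposition \ref{prop:eta}.

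For the bounds under \eqref{eq:scalefam}, I would work directly with the explicit form of $\sigma$.

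\emph{Lower bound.} The envelope \eqref{eq:envelope} gives $\sigma(w)\max(w,1-w)\le1$. Hence
\[
\max\Big(\frac{\omega}{w},\frac{1-\omega}{1-w}\Big)\frac{1}{\sigma(w)}\ge \max\Big(\frac{\omega}{w},\frac{1-\omega}{1-w}\Big)\max(w,1-w)\ge\max(\omega,1-\omega).
\]
The last inequality follows by checking the cases $w\le1/2$ and $w>1/2$: when $w\le1/2$, use the term $(1-\omega)/(1-w)\cdot(1-w)$ and the term $\omega/w\cdot(1-w)\ge\omega$.

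\emph{Upper bound.} Choose $w=\omega$ and use $\sigma\ge1$.

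\emph{Monotonicity in $\bar{\chi}$.} The scale $\sigma$ is nondecreasing in $\bar{\chi}$ pointwise, so each term of the infimum, and hence $\Lambda$, is nonincreasing.

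For the characterization $\Lambda\equiv1$ if and only if $\bar{\chi}=0$, there are two directions. If $\bar{\chi}=0$ then $\sigma\equiv1$, and the infimum is at least $\max(\omega/w,(1-\omega)/(1-w))\ge1$, since $\omega\le w$ or $1-\omega\le1-w$. Conversely, $\Lambda(1/2)=1/(1+\bar{\chi})$ by Proposition \ref{prop:eta}, which equals $1$ only when $\bar{\chi}=0$.

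For $\Lambda=\max(\omega,1-\omega)$ if and only if $\bar{\chi}=1$, the converse direction is the same diagonal argument, since $\Lambda(1/2)=1/2$ forces $\eta=1$. The forward direction is the main obstacle. When $\bar{\chi}=1$, the envelope is attained only at $w=1/2$. So for $\omega\ne1/2$, attaining $\max(\omega,1-\omega)$ requires more than the envelope bound. Take $\omega<1/2$ and test $w=1/2$: this gives $2(1-\omega)/\sigma(1/2)=1-\omega=\max(\omega,1-\omega)$. So the infimum equals the lower bound, attained at the diagonal.

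The remaining care goes into the first step. The infimum over $B$ must be approximated by interior points, and the general limit-set convergence must be justified given the discontinuity of the density at $\rho$. Boundedness of $\rho$ makes that discontinuity irrelevant at scale $t$.
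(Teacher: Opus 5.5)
Your argument is correct. For the claims under \eqref{eq:scalefam} it is the paper's argument, written along directions rather than through the gauge. The paper bounds $\Lambda(\omega)=\min\{g(\bx):\bx\ge(\omega,1-\omega)\}$ below using $g(\bx)\ge\max(x_1,x_2)$ and above using the feasible point $(\omega,1-\omega)$. It settles $\bar{\chi}=1$ with the feasible point $\max(\omega,1-\omega)(1,1)$, and it gets both converses from $\Lambda(1/2)=1/(1+\bar{\chi})$, exactly as you do.

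The genuine difference is in how you obtain the limit \eqref{eq:adf}. The paper never uses the Cartesian density. The event is exactly $\{R>t\max(\omega/W,(1-\omega)/(1-W))\}$, whose radial level is at least $t\max(\omega,1-\omega)$. Once this exceeds $\bar{\rho}=\sup\rho$, every direction lies in the tail branch of the splice, and
\[
P(E_1>\omega t,E_2>(1-\omega)t)=\int_0^1 f_W(w)A(w)e^{\rho(w)/\sigma(w)}e^{-t\Phi(w)}\,dw,\qquad \Phi(w)=\frac{1}{\sigma(w)}\max\Big(\frac{\omega}{w},\frac{1-\omega}{1-w}\Big),
\]
with $A(w)=1-G_{\mathrm{b}}(\rho(w)\mid w)$. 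Bounding $\Phi$ below by its infimum, and above by the infimum plus $\varepsilon$ on the open set where that holds, gives the limit $\inf\Phi$ at once. This needs no truncation of large radii, no ball around a minimizer, and no detour through the infimum of $g$ over your quadrant. It uses only that the prefactor is positive and integrable.

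Your Laplace-principle route is more generic, since it applies to any density converging to a gauge. However, it needs more than Proposition \ref{prop:gauge} assumes, in two places.
\begin{itemize}
\item The uniform bound $f_{\bE}(t\bx)\le e^{-t\{g(\bx)-\epsilon\}}$ on compacts, and the lower bound on a small ball, require $f_W A$ to be locally bounded above and away from zero. Only $f_W>0$ is assumed. This is harmless for the Bernstein angular density, and in general it is repaired by integrating in polar coordinates, which is the paper's proof.
\item The bound $P(R>r)\le2e^{-r/2}$ presumes exactly standard exponential margins, which the model does not have (Proposition \ref{prop:slowvar}); without the envelope its margins can even be heavier. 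Integrating the tail branch instead gives $P(R>r)\le e^{\bar{\rho}/\inf\sigma}e^{-r/\sup\sigma}$ for $r>\bar{\rho}$, which serves the same purpose in your truncation step.
\end{itemize}
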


The two boundary curves of the angular dependence function correspond to the two values of $\bar{\chi}$ at which the prior of Section \ref{sec:classpriors} places point masses: $\Lambda\equiv1$ is tail independence ($\bar{\chi}=0$), ${\Lambda(\omega)=\max(\omega,1-\omega)}$ is the asymptotic dependence boundary ($\bar{\chi}=1$), and intermediate values of $\bar{\chi}$ trace the intermediate curves, monotonically.

\subsection{Extremal dependence in $d$ dimensions}\label{sec:multi}

The general statements of Propositions \ref{prop:gauge} to \ref{prop:adf} hold in $d$ dimensions, with the angle $\bw$ ranging over $\mathcal{S}_{d-1}$ and the envelope \eqref{eq:envelope} becoming $\sigma(\bw)\le1/\max_jw_j$; only the polynomial factor $t^{-1}$ of Proposition \ref{prop:slowvar} changes, becoming $t^{-(d-1)}$ (Section~\ref{supp:multi-theory}). In particular, the residual tail dependence coefficient of any subset $C$ of at least two components is
\[
\eta_C=\sup_{\bw\in\mathcal{S}_{d-1}}\,\min_{j\in C}w_j\,\sigma(\bw),
\]
so one fit returns the coefficients of all $2^d-d-1$ subsets, the full vector being asymptotically dependent if and only if $\sigma$ reaches its envelope value $d$ at the barycentre. With the factor $t^{-(d-1)}$, $\chi(q)$ vanishes on the data scale like $\{-\log(1-q)\}^{-(d-1)}$ when $\eta=1$, and the marginal mismatch noted after Proposition \ref{prop:slowvar} grows with $d$.

The scale family \eqref{eq:scalefam} does not extend directly: replacing $\tau$ by $\min_jw_j/\max_jw_j$ gives a scale equal to one on every face of the simplex, so every two-dimensional face of the limit set is that of independence and a dependent pair with an independent third component cannot be represented. Instead, we write \eqref{eq:scalefam} in terms of its gauge, $g(\bx)=\normone{\bx}-D(\bx)$, where $\normone{\bx}$ is the gauge of independence and the deficit $D(\bx)=\normone{\bx}\{1-1/\sigma(x_1/\normone{\bx})\}$ is homogeneous of order one, nonnegative and zero on the axes, and subtract one deficit per pair,
\begin{equation}\label{eq:addgauge}
g(\bx)=\normone{\bx}-\sum_{j<k}D_{jk}(x_j,x_k),\qquad D_{jk}(a,b)=(a+b)\Big[1-\frac{1}{\sigma_{jk}\{a/(a+b)\}}\Big],
\end{equation}
with $\sigma_{jk}$ of the form \eqref{eq:scalefam} with parameters $(\bar{\chi}_{jk},\nu_{jk})$, and $\sigma(\bw)=1/g(\bw)$ on the simplex.

The family has three properties, illustrated in Figure \ref{fig:geometry3}. First, all deficits but $D_{jk}$ vanish on the face where only $x_j$ and $x_k$ are nonzero, so each two-dimensional face of the limit set is a limit set of \eqref{eq:scalefam}. Second, the envelope couples the pairs: at $d=3$ with equal $\bar{\chi}_{jk}$ and $\nu_{jk}=1/2$ it holds up to $\bar{\chi}_{jk}=1/2$, where the limit set reaches the corner $(1,1,1)$, while a single pair may reach $\bar{\chi}_{jk}=1$ when the others are zero, so admissibility has to be checked at every proposal. Third, $\eta_{jk}$ equals its bivariate value $(1+\bar{\chi}_{jk})/2$ only when the other pairs are weak: with $(\bar{\chi}_{12},\bar{\chi}_{13},\bar{\chi}_{23})=(0.6,0.2,0.2)$, $\eta_{12}=0.80$ is the bivariate value, but with $(0.5,0.4,0.4)$, $\eta_{12}=0.84$ against $0.75$, because the deficits add in the interior of the simplex and the projection of the limit set extends beyond its face. The coefficients are therefore no longer read off a parameter as in Proposition \ref{prop:eta}, and class priors in $d$ dimensions, on the $\bar{\chi}_{jk}$ or on the coefficients directly \citep{simpson2020}, are left for future work.

\begin{figure}[t]
\centering
\includegraphics[width=\textwidth]{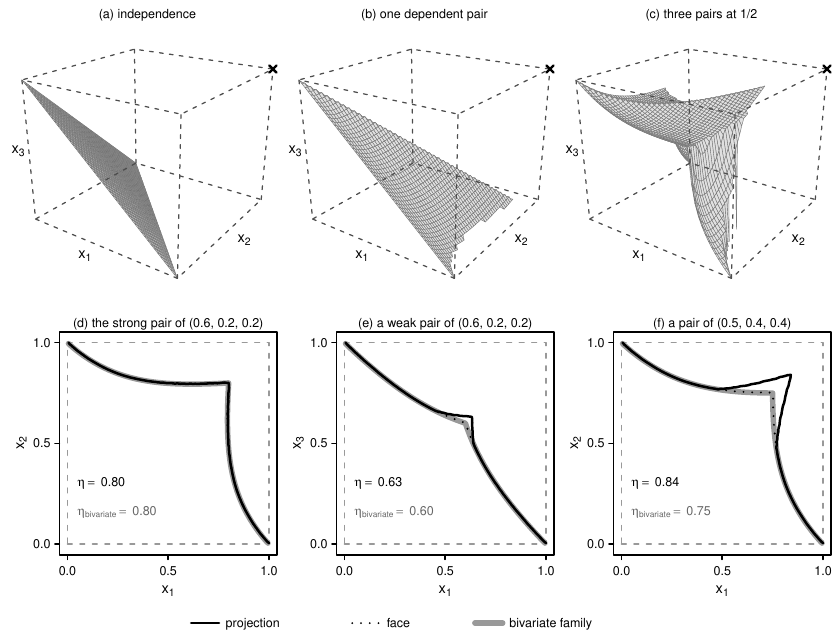}
\caption{Trivariate limit sets of \eqref{eq:addgauge} with $\nu_{jk}=1/2$. Top: the limit set boundary in the unit cube (dashed) for (a) independence, (b) one dependent pair, $\bar{\chi}=(0.8,0,0)$, and (c) three pairs at $\bar{\chi}_{jk}=1/2$, where the corner $(1,1,1)$ (cross) is attained. Bottom: projection of the limit set onto a coordinate plane, the corresponding face, and the bivariate limit set of \eqref{eq:scalefam} with the same $\bar{\chi}_{jk}$, with the coefficient $\eta$ of the projection and of the bivariate family. The projection coincides with the face for (d) the strong pair of $\bar{\chi}=(0.6,0.2,0.2)$, extends beyond it for (e) a weak pair, and extends further for (f) a pair of $\bar{\chi}=(0.5,0.4,0.4)$, through the conical maximum of $\sigma$ at the barycentre.}
\label{fig:geometry3}
\end{figure}

\section{Bayesian inference}\label{sec:inference}

STAR describes dependence on the standardized scale, within a joint specification of margins and dependence. Each margin is modelled by a distribution $F_j(\cdot\,;\bpsi_j)$ with density $f_j$, the empirical distribution function below a marginal threshold and a GPD above it \citep{coles2001}, or a univariate spliced model with estimated threshold \citep{behrens2004,nascimento2012}, and $T_j(y;\bpsi_j)=-\log\{1-F_j(y;\bpsi_j)\}$ maps it to an exponential margin. With $T$ acting componentwise, $\bpsi=(\bpsi_1,\ldots,\bpsi_d)$ and $\btheta$ the dependence parameters of Section \ref{sec:likelihood}, the joint density of an observation $\bm{y}$ on its original scale is
\[
p(\bm{y}\mid\bpsi,\btheta)=f_{\bE}\{T(\bm{y};\bpsi);\btheta\}\prod_{j=1}^d\frac{f_j(y_j;\bpsi_j)}{1-F_j(y_j;\bpsi_j)},
\]
the Jacobian of $T$ multiplying the model density \eqref{eq:factorization}, its factors below the marginal threshold free of $\bpsi_j$. The two blocks share no parameter and are variation independent, so under independent priors the joint posterior of $(\bpsi,\btheta)$ can be sampled by alternating the scheme of Section \ref{sec:mcmc} for $\btheta$ given $\bpsi$ with Metropolis updates of $\bpsi$ given $\btheta$, at the cost of re-standardizing the sample and re-evaluating \eqref{eq:loglik} at every marginal move.

The contribution of this paper is the dependence construction, and we therefore fix $\bpsi$ at its estimate, standardize once, and treat the resulting $\bm{e}_i=T(\bm{y}_i;\hat{\bpsi})$ as data. This two-step approach, the plug-in cut, is a cut of the joint posterior with the marginal block collapsed to a point estimate \citep{plummer2015}, which keeps a working model for the tail from disturbing marginal fits obtained from all of the data. For summaries on the original scale, such as return level sets, we use the cut posterior itself, drawing $\bpsi$ from its marginal posterior and refitting $\btheta$. In simulations (Section~\ref{supp:cutjoint}) the three posteriors agree on $\chi(q)$; on the original scale the cut posterior recovers most of the coverage that the plug-in cut loses, and the joint posterior more, through feedback towards the lighter margins of the model (Proposition \ref{prop:slowvar}), which also shifts the class probabilities.

\subsection{Likelihood}\label{sec:likelihood}

Let $\bm{e}_1,\ldots,\bm{e}_n$ denote the standardized observations with polar coordinates $(r_i,w_i)$, and collect the unknowns as $\btheta=(\bphi,\alpha,\bm{b},\btheta_\rho,\bar{\chi},\bm{z})$, where $\btheta_\rho=(u_1,u_2,p)$ collects the parameters of the threshold family \eqref{eq:rhop}. Writing $B(\btheta_\rho)=\{i:r_i\le\rho(w_i)\}$ and $T(\btheta_\rho)=\{i:r_i>\rho(w_i)\}$ for the body and tail index sets, the log-likelihood is
\begin{align}
\ell(\btheta)={}&\sum_{i=1}^n\Big[\log f_W(w_i;\bphi)-(d-1)\log r_i\Big]
+\sum_{i\in B(\btheta_\rho)}\log g_{\mathrm{b}}(r_i\mid w_i)\nonumber\\
&+\sum_{i\in T(\btheta_\rho)}\Big[\log\{1-G_{\mathrm{b}}(\rho(w_i)\mid w_i)\}+\log h_{\mathrm{GP}}\{r_i-\rho(w_i);\sigma(w_i),0\}\Big],\label{eq:loglik}
\end{align}
which is exact and available in closed form: the membership of each observation is determined by its own angle and radius, no latent allocation variables are required, and each evaluation costs $O(n)$ operations. As the threshold parameters move, observations cross the boundary and the likelihood is discontinuous in $\btheta_\rho$; this poses no difficulty for the Metropolis-Hastings schemes below, which require only pointwise evaluation.

\subsection{Prior specification}\label{sec:priors}

Priors are weakly informative, with two purposes: to anchor the threshold curve in the upper region of the data and to regularize the functional coefficients. The angular weights receive a $\mathrm{Dirichlet}(a,\ldots,a)$ prior with $a=1$; the body parameters, $\log\alpha\sim N(\log d,v_\alpha^2)$, centred at independence, and $b_k\sim N(0,v_\lambda^2)$; and the tail coefficients, $z_k\sim N(0,v_\nu^2)$. The anchor levels follow the location-aware construction of \citet{hu2026}, $u_j=q_{\kappa_0}+o_j$, with $q_{\kappa_0}=-\log(1-\kappa_0)$ the standard exponential quantile at a reference level such as $\kappa_0=0.8$ and independent half-normal offsets $o_j$. The exponent receives $\log p\sim N(\log 2,v_p^2)$, which also supports concave boundaries. Default hyperparameters are listed in Section~\ref{app:computation}. Over five paired datasets at $n=1000$ (Section~\ref{supp:sens}), moving $\kappa_0$ to $0.7$ or $0.9$, or the half-normal scale to one half or two, shifts on average the exceedance fraction by at most $0.025$ and the diagonal threshold by at most $0.15$, but $\bar{\chi}$ by at most $0.02$, against an interval width near $0.61$, and $\chi(0.99)$ by under four percent: the prior locates the threshold without driving the dependence summaries.

\subsection{Priors for the dependence classes}\label{sec:classpriors}

By Proposition \ref{prop:eta}, asymptotic dependence sits at the boundary $\bar{\chi}=1$ of the parameter space, so the prior on $\bar{\chi}$ decides how the dependence classes are handled. A continuous prior, $\bar{\chi}\sim\mathrm{Beta}(a_\chi,b_\chi)$ with the uniform as default, reads the class from the posteriors of $\bar{\chi}$, $\eta$, $\chi(q)$ and $\bar{\chi}(q)$, acknowledging that at finite levels weak asymptotic dependence and asymptotic independence are hard to tell apart \citep{hu2026}. Class probabilities require instead a three-component prior, in the tradition of spike-and-slab priors \citep{mitchell1988},
\[
P(\bar{\chi}=0)=\pi_0, \qquad P(\bar{\chi}=1)=\pi_1, \qquad \bar{\chi}\mid\bar{\chi}\in(0,1)\sim\mathrm{Beta}(a_\chi,b_\chi),
\]
with $\pi_0=\pi_1=1/4$ and a uniform slab as default. The atom at zero is tail independence, with $\sigma\equiv1$ and the unit simplex as limit set; the atom at one is the asymptotic dependence boundary, where the limit set touches the corner $(1,1)$, the geometric criterion for asymptotic dependence \citep{lee2025geometric}; and the slab measures the strength of extremal association within asymptotic independence. Both atoms are classes, with the angular density and the body left free: the first collects every structure with the joint tail of independence, the second every asymptotically dependent structure, the fitted $\chi(q)$ on the scale of the data still vanishing logarithmically (Proposition \ref{prop:slowvar}). The atoms are what make the classification possible, since both classes have probability zero under any continuous prior on $\bar{\chi}$. The class probabilities depend much more on the atom weights than on the slab (Section~\ref{supp:sens}), and are reported together with the prior weights that produced them.

\subsection{Posterior computation}\label{sec:mcmc}

Posterior sampling proceeds by adaptive Metropolis-within-Gibbs \citep{roberts2009}, cycling through blocks for the angular weights, the body parameters, the threshold coordinates and the tail parameters on unconstrained scales, with the between-component moves of the three-component prior handled by a jump step that requires no reversible-jump correction because the slab proposal coincides with the slab prior. Every quantity of Section \ref{sec:theory} is a deterministic function of the parameters, so posterior draws of the coefficients, the limit set and the dependence curves are obtained by evaluation. Details, including proposal scales, initialization and acceptance rates, are in Section~\ref{app:computation}.

The blocks interact through a trade-off: a higher threshold with a heavier body can imitate a lower threshold with a lighter tail over the observed range. Three devices control it: the anchoring priors of Section \ref{sec:priors} keep the threshold curve in the joint upper region of the data; the structured family \eqref{eq:scalefam} pins the diagonal value of the scale at $1+\bar{\chi}$; and the exact threshold stability at $\xi=0$ prevents a move of the curve from being compensated by rescaling the tail. In practice the trade-off is largely inactive, posterior correlations across the blocks remaining modest even under a deliberately weak prior (Section~\ref{supp:ident}). Two diagnostics signal joint tail mass that the $\xi=0$ splice cannot carry, as with genuinely positive $\chi$: a threshold posterior piling up against its prior floor, with an exceedance fraction far above the reference level, and a fitted $\chi(q)$ that decays, as Proposition \ref{prop:slowvar} requires, while the empirical $\chi(q)$ stays flat. The second, a posterior predictive check, can fire when the first does not, as in the coastal application of Section~\ref{supp:duck}; the remedy is an asymptotically dependent construction of the kind discussed in Section \ref{sec:discussion}.

\subsection{Posterior predictive analysis and derived quantities}\label{sec:predictive}

The model is generative, so posterior predictive simulation is immediate: draw an angle from $f_W$, a body indicator with probability $\varpi(w)$, then a truncated Gamma radius on $(0,\rho(w))$ or a GPD excess above $\rho(w)$, and map back to the original scale. Joint tail probabilities need no simulation, since $E_1>t_1$ and $E_2>t_2$ hold exactly when ${R>\max\{t_1/W,t_2/(1-W)\}}$:
\begin{equation}\label{eq:jointsurv}
P(E_1>t_1,E_2>t_2)=\int_0^1 f_W(w)\,\bar{F}_{R\mid W}\Big(\max\Big\{\frac{t_1}{w},\frac{t_2}{1-w}\Big\}\,\Big|\,w\Big)\,dw,
\end{equation}
where the conditional radial survivor $\bar{F}_{R\mid W}(r\mid w)$ is $1-G_{\mathrm{b}}(r\mid w)$ for $r\le\rho(w)$ and $\{1-\varpi(w)\}\exp\{-(r-\rho(w))/\sigma(w)\}$ above it, so that \eqref{eq:jointsurv} is a one-dimensional quadrature of a closed-form integrand. With $t_1=t_2=t_q=-\log(1-q)$ and $p_q$ the resulting probability, it yields the sub-asymptotic curves $\chi(q)=p_q/(1-q)$ and $\bar{\chi}(q)=2\log(1-q)/\log p_q-1$ \citep{coles1999}, the model-based counterparts of the usual empirical diagnostics. From the posterior draws we report the threshold curve and the exceedance fraction; the coefficients $\bar{\chi}$ and $\eta$ with the class probabilities; the curves $\chi(q)$, $\bar{\chi}(q)$ and $\Lambda(\cdot)$ with pointwise bands; the angular density of exceedances, proportional to $f_W(w)\{1-\varpi(w)\}$; joint exceedance probabilities and quantiles of applied interest; and return level sets and environmental contours \citep{simpsontawn2025} derived from the fitted boundary.

Three of these outputs are exactly what fixed-threshold tail methods cannot return. Every posterior draw of $\sigma(\cdot)$ defines a valid limit set, so the posterior of the limit set boundary can be overlaid on the data divided by $\log n$, as in Figure \ref{fig:geometry}, as a check of fit that discards no draws. The threshold curve has a posterior, so the exceedance region, the exceedance fraction and the effective number of tail observations are random quantities whose uncertainty propagates into every tail summary. Because the model has full support with exact normalization, the probability of a region lying partly below and partly above the threshold curve, which combines moderate and extreme values, is computed in the same way as that of a purely extremal region; the compound events of Section \ref{sec:intro} are of this kind.

\section{Simulation studies}\label{sec:simulations}\label{sec:identification}\label{sec:s2}\label{sec:s3}

The simulation study covers identification, calibration under the model, classification of the dependence class, misspecification, the value of the estimated threshold, and convergence and prior sensitivity. Data are generated from the model and, on standard exponential margins, from five copula families outside it, with the defaults of Section \ref{sec:priors}. Here we report, over fifty replicates, the comparison with competing estimators, the calibration, the classification and the threshold ablation; the other experiments, and the full results of all, are in Sections~\ref{app:computation} and~\ref{supp:sims}.

\begin{table}[t]
\centering
\caption{Root mean squared error of $\chi(q)$ at $n=4000$ for STAR, the conditional extremes model (HT), the geometric model (WC) and the empirical estimator (emp.), smallest in bold: within the model at $q=0.99$, HT on estimated margins (left), and under the Gaussian ($\varrho=0.5$) and inverted logistic ($\alpha_{\mathrm{IL}}=0.5$) copulas, HT on the exact margins (right). HT conditions on the $0.9$ quantile.}
\label{tab:key}
\footnotesize
\setlength{\tabcolsep}{3.5pt}
\resizebox{\textwidth}{!}{%
\begin{tabular}{c cccc c c cccc cccc}
\toprule
 & \multicolumn{4}{c}{$q=0.99$} & \hspace{0.5em} & & \multicolumn{4}{c}{Gaussian} & \multicolumn{4}{c}{Inverted logistic} \\
\cmidrule(lr){2-5}\cmidrule(lr){8-11}\cmidrule(lr){12-15}
$\bar{\chi}$ & STAR & HT & WC & emp. & & $q$ & STAR & HT & WC & emp. & STAR & HT & WC & emp. \\
\midrule
$0.15$ & $0.006$ & $\mathbf{0.005}$ & $0.008$ & $0.017$ & & $0.95$ & $0.024$ & $\mathbf{0.019}$ & $0.025$ & $0.036$ & $0.023$ & $\mathbf{0.022}$ & $0.032$ & $0.031$ \\
$0.45$ & $\mathbf{0.011}$ & $0.013$ & $0.013$ & $0.028$ & & $0.99$ & $\mathbf{0.025}$ & $0.026$ & $0.034$ & $0.057$ & $\mathbf{0.021}$ & $0.026$ & $0.035$ & $0.058$ \\
$0.75$ & $\mathbf{0.011}$ & $0.022$ & $0.020$ & $0.035$ & & $0.999$ & $0.031$ & $\mathbf{0.024}$ & $0.029$ & $0.104$ & $0.035$ & $\mathbf{0.024}$ & $0.032$ & $0.121$ \\
$0.95$ & $\mathbf{0.017}$ & $0.033$ & $0.030$ & $0.049$ & & & & & & & & & & \\
\bottomrule
\end{tabular}}
\end{table}

\begin{table}[t]
\centering
\caption{Left: coverage of $90\%$ intervals for $\chi(q)$ within the model at $n=4000$, HT on estimated margins. Centre: threshold ablation at $n=4000$, threshold estimated or frozen at the level-$\kappa$ quantile curve: posterior median of $\chi(0.999)$ under the model truth $\bar{\chi}=0.6$ and coverage under the STAR, Gaussian and Frank truths. Right: posterior probability of asymptotic dependence at $n=8000$ under the three-component prior.}
\label{tab:rest}
\footnotesize
\setlength{\tabcolsep}{3.5pt}
\resizebox{\textwidth}{!}{%
\begin{tabular}{l ccc c l c ccc c l c c}
\toprule
\multicolumn{4}{c}{coverage of $\chi(q)$} & & \multicolumn{5}{c}{threshold ablation, $\chi(0.999)$} & & \multicolumn{3}{c}{classification} \\
\cmidrule(lr){1-4}\cmidrule(lr){6-10}\cmidrule(lr){12-14}
$\bar{\chi}$, $q$ & STAR & HT & emp. & & threshold & median & STAR & Gauss. & Frank & & generator & class & $P(\mathrm{AD}\mid\text{data})$ \\
\midrule
$0.15$, $0.99$ & $0.96$ & $0.74$ & $0.30$ & & estimated & $0.021$ & $0.90$ & $0.78$ & $0.18$ & & STAR, $\bar{\chi}=0$ & AI & $0.00$ \\
$0.45$, $0.99$ & $0.92$ & $0.70$ & $0.68$ & & $\kappa=0.8$ & $0.021$ & $0.92$ & $0.10$ & $0.00$ & & STAR, $\bar{\chi}=0.5$ & AI & $0.06$ \\
$0.75$, $0.99$ & $0.98$ & $0.74$ & $0.90$ & & $\kappa=0.9$ & $0.017$ & $0.96$ & $0.48$ & $0.00$ & & STAR, $\bar{\chi}=0.95$ & AI & $0.74$ \\
$0.95$, $0.99$ & $0.78$ & $0.66$ & $0.86$ & & $\kappa=0.95$ & $0.013$ & $0.96$ & $0.86$ & $0.16$ & & STAR, $\bar{\chi}=1$ & AD & $0.81$ \\
$0.6$, $0.999$ & $0.90$ & $0.58$ & $0.04$ & & & & & & & & Gaussian, $\varrho=0.5$ & AI & $0.60$ \\
 & & & & & & & & & & & Gaussian, $\varrho=0.9$ & AI & $0.82$ \\
 & & & & & & & & & & & inverted logistic, $0.5$ & AI & $0.72$ \\
 & & & & & & & & & & & logistic, $0.5$ & AD & $0.77$ \\
 & & & & & & & & & & & Student $t$, $(0.5,4)$ & AD & $0.97$ \\
\bottomrule
\end{tabular}}
\end{table}

Table \ref{tab:key} compares STAR with the conditional extremes model of \citet{heffernan2004}, the geometric model of \citet{wadsworthcampbell2024} and the empirical estimator on the same data, and Table \ref{tab:rest} (left) gives the coverage of the corresponding intervals. Within the model, STAR is the most accurate estimator of $\chi(0.99)$ at all but the weakest dependence, and its intervals cover near the nominal rate away from the boundary, the bootstrap intervals of the conditional model in about seven replicates out of ten, and those of the empirical estimator in as few as three; at $q=0.999$ the coverages are $0.90$, $0.58$ and $0.04$. Outside the model, under the Gaussian and inverted logistic copulas, STAR is the most accurate at $q=0.99$ and at most $0.011$ behind the conditional model, given exact margins, at $q=0.95$ and $0.999$, the empirical estimator three to five times worse at $q=0.999$. The conditional model is sensitive to its threshold, whose best level is low under these copulas and high within the model at strong dependence (Section~\ref{supp:empirical}). Two copulas mark the limits predicted by the theory: under the Frank copula, associated in the body and independent in the tail, $\chi(q)$ is overestimated, and under the Student $t$ the $\xi=0$ splice underestimates positive $\chi$, which the diagnostic of Section \ref{sec:mcmc} flags. When the true scale function exceeds the cap of the scale family (Section \ref{sec:coefficients}), the model overestimates $\eta$, while the geometric model, which chooses its gauge, does not.

Estimating the threshold removes the sensitivity of far-tail summaries to an arbitrary fixed level (Table \ref{tab:rest}, centre). Frozen at the level-$\kappa$ quantile curve, the threshold moves the posterior median of $\chi(0.999)$ by almost forty percent between $\kappa=0.8$ and $0.95$ under the model truth, and under Gaussian data collapses its coverage to $0.10$ and $0.48$ at $\kappa=0.8$ and $0.9$ while the estimated threshold keeps $0.78$; the high level does as well, so the gain lies in not having to know which level is safe, and under the Frank copula no level does. STAR is regularly identified: body, threshold curve and exceedance fraction contract at about the parametric rate, and $\bar{\chi}$ more slowly, as Proposition \ref{prop:slowvar} implies. In the classification (Table \ref{tab:rest}, right), the asymptotically dependent truths receive posterior probability $0.77$ to $0.97$ on the atom at one, and the model at $\bar{\chi}=0$ and $0.5$ at most $0.06$. The strongly dependent asymptotically independent truths receive $0.60$ to $0.82$, with little change in $n$: the model at $\bar{\chi}=0.95$, whose joint tail differs from the boundary by a factor nearly constant over the observed range, and the Gaussian and inverted logistic copulas, for which the scale family also inflates $\bar{\chi}$ (Section \ref{sec:coefficients}). A small posterior probability of the atom at one is therefore reliable evidence of asymptotic independence, while a large one means that asymptotic dependence cannot be ruled out at the resolution of the data; class probabilities are reported with the prior weights.

\section{Application: co-pollution in Beijing}\label{sec:application}\label{sec:exploratory}\label{sec:results}

\subsection{Fine particulate matter and ozone}\label{sec:app2}

The application concerns air pollution in Beijing, using the multi-site data of \citet{zhang2017}: hourly concentrations at twelve monitoring stations from March 2013 to February 2017, a period over which the accompanying study documented a sustained fall in fine particulate matter alongside a rise in ozone, and which framed the resulting concern with co-pollution, days on which both are high. That concern is a question about a joint tail at mixed levels, precisely what STAR returns. We use the warm season, April to September, when ozone forms photochemically, and take daily maxima of PM$_{2.5}$ and ozone over the four seasons. Before computing the maxima, an hourly reading is set to missing as a sensor fault if it exceeds $300\,\mu$g/m$^3$ and is more than three times the largest reading at the other eleven stations in the same hour. We report Wanliu, and the same analysis at all twelve stations in Section~\ref{supp:stations}. At Wanliu $n=732$, the Spearman correlation is $0.23$, and the empirical $\chi(q)$ falls from $0.15$ at $q=0.9$ to $0.05$ at $q=0.95$ and to zero at $q=0.99$, the regime in which the classification of Section \ref{sec:classpriors} is informative. Each margin is standardized semiparametrically, the empirical distribution function below its $0.9$ quantile and a generalized Pareto distribution above it, with estimated shapes $0.14$ for particulate matter and $-0.20$ for ozone, a bounded tail.

Consecutive days are dependent, with lag-one autocorrelations of $0.50$ and $0.56$, but joint extremes are nearly isolated: the intervals estimator of the extremal index \citep{ferro2003} is $0.59$ for days beyond the estimated threshold curve and $0.94$ and $1$ for days on which both pollutants exceed their $0.8$ or $0.9$ quantiles. We fit all days without declustering, since the targets are probabilities per day and an estimated threshold offers no fixed level at which to decluster. A moving-block bootstrap with ten-day blocks, re-estimating the margins, gives standard deviations of the posterior means about $1.3$ times those from resampling single days, still below the posterior standard deviations of $\chi(q)$ and the joint probabilities. Following Section \ref{sec:threshold} the threshold exponent is restricted to convex curves, $p\ge1$, and the three-component prior of Section \ref{sec:classpriors} takes its defaults. Results are from a chain of $30{,}000$ iterations, with $10{,}000$ of burn-in and thinning by $5$; three more chains give potential scale reduction factors below $1.01$ and, over the four chains, effective sample sizes of at least $590$ for every parameter and of $1500$ to $4300$ for the exceedance fraction and the joint probabilities.

\begin{figure}[t]
\centering
\input{fig-beijing}%
\caption{Beijing warm-season daily maxima of PM$_{2.5}$ ($E_1$) and ozone ($E_2$) at Wanliu. (a) Sample on standard exponential margins divided by $\log n$, with the limit set boundaries of forty posterior draws, those at the atom $\bar{\chi}=0$ sharing the line $x_1+x_2=1$ (red), and the unit box (dashed). (b) Data with forty posterior draws of the threshold curve. (c) Posterior of $\bar{\chi}$ under the three-component prior: atoms as bars with height equal to their probability, slab as its density times its probability, prior dashed. (d) $\chi(q)$ for $q$ from $0.8$ to $0.999$: posterior mean and $90\%$ band, empirical values (dots) and the independence value $1-q$ (dashed). (e) Number of days on which both pollutants exceed their level-$q$ quantiles: observed (dots) against the posterior predictive mean and $90\%$ interval (bars), on a square-root scale. (f) In $\mu$g/m$^3$, under the cut posterior: the data, the posterior median threshold curve (dotted), and the return level sets exceeded on average once per warm season (solid) and once in five (dashed, with $90\%$ band).}
\label{fig:beijing}
\end{figure}

Figure \ref{fig:beijing} collects the outputs of Section \ref{sec:predictive}, panels (a) and (b) showing the two objects the model estimates. The posterior draws of the limit set range from the triangle $x_1+x_2\le1$ of exact tail independence to a moderate bulge along the diagonal. The threshold curve meets the axes near the $0.93$ and $0.95$ marginal quantiles, with anchor posterior medians $2.7$ and $3.0$, and is nearly straight, with exponent posterior median $1.3$ and $90\%$ interval $(1.0,3.9)$, so a day enters the extremal region roughly when the combined standardized load $E_1/u_1+E_2/u_2$ exceeds one; on the original scale the curve passes through $171\,\mu$g/m$^3$ of PM$_{2.5}$ and $212\,\mu$g/m$^3$ of ozone on the diagonal (panel (f), dotted). The exceedance fraction is $0.22$ with interval $(0.13,0.28)$, or about $160$ days. Panel (c) sets the dependence-class posterior against its prior. The data raise the atom at exact tail independence from $0.25$ to $0.40$, move the slab from $0.50$ to $0.56$ and reduce the atom at the boundary to $0.04$, so the posterior probability of asymptotic independence is $0.96$. Within the slab the posterior median of $\bar{\chi}$ is $0.28$ with interval $(0.04,0.79)$; overall, $\eta$ has posterior median $0.55$, against $0.47$ for the empirical estimate, the mean excess of $\min(E_1,E_2)$ over its $0.9$ quantile \citep{ledford1996}. Drawing the margins from their posterior, as in the cut posterior of Section~\ref{sec:inference}, moves the class probabilities by at most $0.03$; the joint posterior, through its feedback towards lighter margins, raises the atom at zero to $0.59$. Panels (d) and (e) are the posterior predictive checks of Section~\ref{sec:mcmc}. The fitted $\chi(q)$ is $0.13$, $0.07$ and $0.02$ at $q=0.9$, $0.95$ and $0.99$, with $90\%$ intervals $(0.09,0.20)$, $(0.04,0.13)$ and $(0.005,0.07)$, against empirical values $0.15$, $0.05$ and zero, and continues to $0.007$ at $q=0.999$, where no empirical estimate exists. The observed number of days on which both pollutants exceed their level-$q$ quantiles lies inside the $90\%$ posterior predictive interval at every level from $0.8$ to $0.99$, on both sides of the predictive mean: $35$ days against $39.9$ at $q=0.8$, $11$ against $9.5$ at $q=0.9$ and $2$ against $2.5$ at $q=0.95$. On a day when particulate matter exceeds its $0.9$ quantile, the chance that ozone does too is $0.13$, against $0.10$ under independence.

\begin{table}[t]
\centering
\caption{Co-pollution at Wanliu: expected number of days per $183$-day warm season on which both pollutants exceed the given marginal quantiles, under the model (posterior mean and $90\%$ credible interval), the conditional extremes model (HT) and the geometric model (WC) (estimate and $90\%$ moving-block bootstrap interval), with the observed rate over the four seasons. On the original scale the $0.90$, $0.95$ and $0.99$ quantiles are $210$, $234$ and $306\,\mu$g/m$^3$ for PM$_{2.5}$ and $248$, $273$ and $328\,\mu$g/m$^3$ for ozone.}
\label{tab:beijing}
\footnotesize
\setlength{\tabcolsep}{3.5pt}
\begin{tabular}{cc cc cc cc c}
\toprule
 & & \multicolumn{2}{c}{STAR} & \multicolumn{2}{c}{HT} & \multicolumn{2}{c}{WC} & \\
\cmidrule(lr){3-4}\cmidrule(lr){5-6}\cmidrule(lr){7-8}
PM$_{2.5}$ & O$_3$ & mean & $90\%$ interval & estimate & $90\%$ interval & estimate & $90\%$ interval & observed \\
\midrule
$0.90$ & $0.90$ & $2.4$ & $(1.6,\,3.6)$ & $2.6$ & $(1.6,\,4.0)$ & $2.7$ & $(1.2,\,4.3)$ & $2.8$ \\
$0.95$ & $0.95$ & $0.61$ & $(0.33,\,1.23)$ & $0.54$ & $(0.30,\,1.09)$ & $0.58$ & $(0.23,\,1.71)$ & $0.50$ \\
$0.99$ & $0.90$ & $0.18$ & $(0.10,\,0.41)$ & $0.11$ & $(0.06,\,0.28)$ & $0.09$ & $(0.03,\,0.25)$ & $0.00$ \\
$0.90$ & $0.99$ & $0.22$ & $(0.12,\,0.45)$ & $0.11$ & $(0.05,\,0.27)$ & $0.18$ & $(0.07,\,0.30)$ & $0.00$ \\
$0.99$ & $0.99$ & $0.04$ & $(0.01,\,0.13)$ & $0.00$ & $(0.00,\,0.07)$ & $0.02$ & $(0.00,\,0.05)$ & $0.00$ \\
\bottomrule
\end{tabular}
\end{table}

Table \ref{tab:beijing} answers the co-pollution question directly, in closed form from \eqref{eq:jointsurv} with credible intervals: both pollutants above their $0.9$ quantiles on about two and a half days per warm season, both above their $0.95$ quantiles on about three days in five seasons, one pollutant above its $0.99$ quantile while the other exceeds its $0.9$ quantile on about one day in five seasons, and both above their $0.99$ quantiles on about one day in twenty-five. Panel (f) turns the posterior into return level sets on the original scale, the curves outside which the pair falls on average once per warm season and once in five, the radius at which $P(R>r\mid W=w)=1/(183T)$ for a return period of $T$ seasons, mapped back through the marginal transformation, which is drawn from its posterior. The one-season set passes through $275\,\mu$g/m$^3$ of PM$_{2.5}$ and $300\,\mu$g/m$^3$ of ozone on the diagonal, with interval from $(253,286)$ to $(345,339)$, and the five-season set through $(324,326)$, extending along the particulate axis further than along the ozone axis, where it reaches $381$. Seven days fall outside the one-season set in the four observed seasons, against a predictive mean of $4.7$ with $90\%$ interval $(1,10)$; six lie near an axis, where the margins implied by the model have lighter tails than the standardized data, a pattern the plug-in fits show at every station (seven such days at Wanliu; Section~\ref{supp:stations}). Co-pollution at Wanliu is a moderate-level phenomenon, with an asymptotically independent joint tail and rare simultaneous extremes.

We compared these results with the conditional extremes model (HT), conditioning at the $0.9$ quantile, and the geometric model (WC), with $90\%$ intervals from a moving-block bootstrap with ten-day blocks within seasons (Table \ref{tab:beijing}). For a region lying partly below the WC threshold, the probability is the empirical frequency below the threshold plus the model probability above it. At the observed levels the methods agree, as they partly must since HT and WC draw on the data there directly: $\chi(0.9)$ is $0.13$ under the model, $0.14$ under HT and $0.15$ under WC and empirically, and $\chi(0.95)$ is $0.07$, $0.06$, $0.06$ and $0.05$. Beyond the data their point estimates part. At $q=0.99$, $\chi(q)$ is $0.020$ under the model, $0.011$ under WC, close to the independence value $0.01$, and near zero under HT, whose coefficient for ozone given extreme particulate matter is negative, $-0.35$ with interval $(-1.00,-0.09)$: on the days of heaviest particulate pollution ozone tends to fall below its median. The scale family \eqref{eq:scalefam} does not represent negative association (Section \ref{sec:parfun}). For particulate matter above its $0.99$ quantile and ozone above its $0.9$ quantile the model therefore gives the independence value, $0.001$, and HT and WC about half of it; where the association is negative, the model errs towards larger joint probabilities. No day in the four seasons falls in these regions, and the intervals of the three methods overlap at every level.

A second application, to wave height and storm surge on the North Carolina coast, a pair at the asymptotic dependence boundary, is reported in Section~\ref{supp:duck}: the classification is unambiguous and the finite-level joint probabilities are underestimated, the Student $t$ case of Section \ref{sec:s3} in real data, which the predictive check of panel (d) flags in the same way.

\subsection{Three pollutants}\label{sec:app3}

\begin{table}[t]
\centering
\caption{Trivariate fit at Wanliu. Top: residual tail dependence coefficients $\eta_C$ of the three pairs and the triple, posterior median and $90\%$ interval (identical for PM$_{2.5}$, O$_3$ and the triple, both suprema lying at the barycentre), with the empirical mean-excess estimate, and $\chi_C(q)$, posterior median with the empirical value in parentheses. Bottom: joint exceedances of marginal quantiles in days per warm season, posterior median and $90\%$ interval against the observed rate over four seasons.}
\label{tab:app3}
\footnotesize
\setlength{\tabcolsep}{3pt}
\resizebox{\textwidth}{!}{%
\begin{tabular}{l c c c ccc}
\toprule
subset $C$ & $\eta_C$ & $90\%$ interval & empirical & $\chi_C(0.9)$ (emp.) & $\chi_C(0.95)$ (emp.) & $\chi_C(0.99)$ (emp.) \\
\midrule
PM$_{2.5}$, PM$_{10}$ & 0.944 & (0.78, 1.00) & 0.728 & 0.318 (0.423) & 0.215 (0.273) & 0.102 (0.273) \\
PM$_{2.5}$, O$_3$ & 0.854 & (0.65, 1.00) & 0.473 & 0.131 (0.150) & 0.071 (0.055) & 0.021 (0.000) \\
PM$_{10}$, O$_3$ & 0.885 & (0.67, 1.00) & 0.543 & 0.154 (0.137) & 0.094 (0.082) & 0.035 (0.137) \\
all three & 0.854 & (0.65, 1.00) & 0.370 & 0.049 (0.027) & 0.026 (0.000) & 0.008 (0.000) \\
\bottomrule
\end{tabular}}

\medskip
\begin{tabular}{l ccc}
\toprule
joint exceedances, days per warm season & median & $90\%$ interval & observed \\
\midrule
all three above their $0.9$ quantiles & 0.89 & (0.63, 1.14) & 0.50 \\
PM$_{2.5}$, PM$_{10}$ above $0.95$, O$_3$ above $0.9$ & 0.38 & (0.24, 0.52) & 0.00 \\
all three above $0.95$ & 0.24 & (0.14, 0.34) & 0.00 \\
PM$_{2.5}$, PM$_{10}$ above $0.99$ & 0.19 & (0.11, 0.26) & 0.50 \\
PM$_{2.5}$, O$_3$ above $0.99$ & 0.04 & (0.02, 0.07) & 0.00 \\
all three above $0.99$ & 0.02 & (0.01, 0.03) & 0.00 \\
\bottomrule
\end{tabular}
\end{table}

\begin{figure}[t]
\centering
\includegraphics[width=\textwidth]{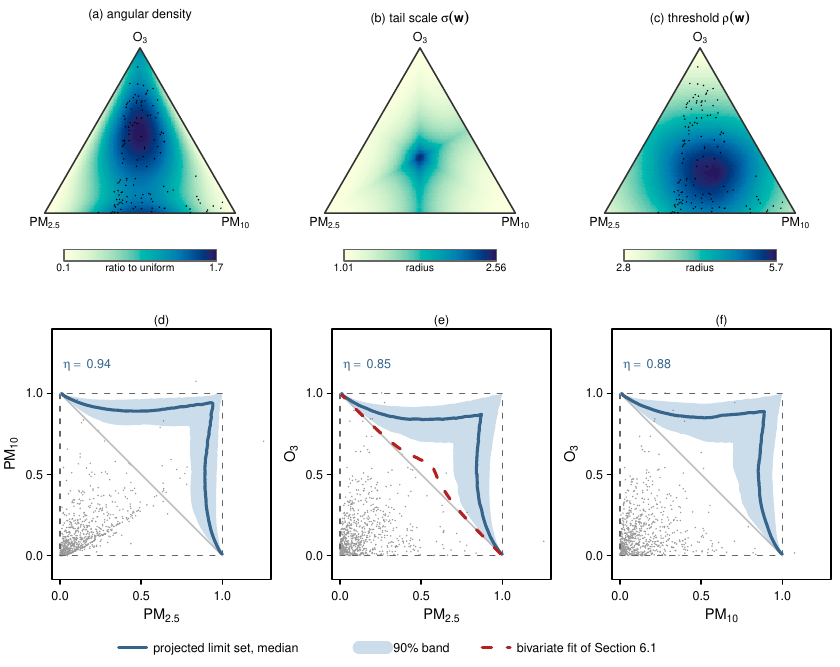}
\caption{Trivariate fit to warm-season daily maxima of PM$_{2.5}$, PM$_{10}$ and ozone at Wanliu. Top, on the simplex of angles with the three pollutants at the vertices: (a) posterior mean angular density relative to the uniform, with the angles of the $131$ days beyond the posterior median threshold (dots); (b) posterior median tail scale $\sigma(\bw)$, the radius of the limit set in each direction; (c) posterior median threshold radius $\rho(\bw)$ with the same days. Bottom, (d) to (f): pairwise scaled samples $\bE/\log n$ with the projection of the fitted limit set onto the coordinate plane, posterior median and $90\%$ band, the posterior median of the pairwise coefficient $\eta$ (Table \ref{tab:app3}), the unit box (dashed) and the independence line (grey); in (e) the posterior median limit set of the bivariate fit of Section \ref{sec:app2} (dashed red).}
\label{fig:app3}
\end{figure}

We add coarse particulate matter, PM$_{10}$, to the pair of Section \ref{sec:app2}, with the same $n=732$ days at Wanliu and the same cleaning and standardization; the generalized Pareto shapes are $0.14$, $0.27$ and $-0.20$. The two particulate series are strongly associated, with Spearman correlation $0.69$ and empirical $\chi(0.9)$ of $0.42$, and ozone is weakly associated with both, with Spearman correlations $0.23$ and $0.18$ and $\chi(0.9)$ of $0.15$ and $0.14$: a strongly dependent pair with a weakly associated third component, the configuration for which \eqref{eq:addgauge} was built.

The model has a Dirichlet angular basis of degree $4$, the Gamma body with $\log\lambda(\bw)=\sum_jb_jw_j$, the threshold of Section \ref{sec:threshold} with three anchors and $p\ge1$, and the tail scale \eqref{eq:addgauge} with uniform priors on the three $\bar{\chi}_{jk}$ and a shared relaxation rate $\nu$, twenty-six parameters in all; the class priors are not used, for the reason given in Section \ref{sec:multi}. The sampler of Section \ref{sec:mcmc} rejects a tail proposal when the envelope fails on a grid of the simplex (Section~\ref{supp:impl3}), and a chain of $40{,}000$ iterations, with $15{,}000$ of burn-in, took six minutes and gave effective sample sizes above $150$ for every parameter. In a calibration check at $n=2000$ (Section~\ref{supp:calib3}), the body, the threshold, $\chi_C(q)$ and the total deficit are recovered, while its split between pairs is weakly identified, the coefficients of weak pairs being pulled up by about $0.1$.

Table \ref{tab:app3} and Figure \ref{fig:app3} summarize the fit. Where it overlaps with Section \ref{sec:app2} it agrees with the bivariate analysis: for PM$_{2.5}$ and ozone the fitted $\chi_C(q)$ is $0.13$, $0.07$ and $0.02$ at $q=0.9$, $0.95$ and $0.99$, the bivariate values to two decimals, and the exceedance fraction is $0.20$ with interval $(0.12,0.30)$, against $0.22$. The fit locates the dependence in the particulate pair: $\bar{\chi}_{12}$ has posterior median $0.82$ against $0.11$ for PM$_{2.5}$ and ozone, and the $131$ days beyond the posterior median threshold lie away from the ozone vertex (Figure \ref{fig:app3}(a)). For the particulate pair, whose empirical $\chi(q)$ is flat at $0.42$, $0.27$ and $0.27$, the fitted values fall short, as for the asymptotically dependent coastal pair of Section~\ref{supp:duck}. The coefficients of the two ozone pairs, $0.85$ and $0.89$ with intervals reaching one, lie far above their empirical estimates and the bivariate value $0.55$, through the third property of \eqref{eq:addgauge}: the deficits of the three pairs add in the interior of the simplex, taking the tail scale at the barycentre to $2.6$ against its envelope value $3$ (Figure \ref{fig:app3}(b)) and lifting every projection of the limit set, so that its projection onto the PM$_{2.5}$-ozone plane lies well outside the bivariate limit set (Figure \ref{fig:app3}(e)). The finite-level $\chi_C(q)$ and joint exceedance counts, governed by the body and the threshold at the observed levels, are reproduced. The trivariate fit thus adds to the pairwise analyses the location of joint extremes on the simplex and the probabilities of events involving all three pollutants (Table \ref{tab:app3}, bottom), and its asymptotic coefficients carry the allocation problem of \eqref{eq:addgauge}.

\section{Discussion}\label{sec:discussion}

This paper has introduced STAR, a fully generative angular-radial model in which the boundary of the extremal region is an object of Bayesian inference, bringing together the spliced body-tail models of \citet{behrens2004}, in which the threshold is estimated, and the angular-radial and geometric representations of multivariate extremes \citep{mackay2023,nolde2022,wadsworthcampbell2024}. The splice normalizes in one dimension per angle, so the density is in closed form. The estimated threshold curve supplies the multivariate threshold selection missing from the geometric literature, and the fitted tail scale function carries a complete description of extremal dependence, with the coefficient of extremal association an explicit parameter.

The simulations indicate where STAR is most useful. The model estimates finite-level joint probabilities of compound events accurately, within and beyond the model class, and reports the dependence class as a posterior probability, to be read with the prior weights of the atoms. Its estimated threshold frees the far-tail summaries from the choice of a fixed level. At Wanliu it agrees with the conditional extremes and geometric models at observed levels and gives larger joint probabilities beyond them.

The model also has limitations. First, the dependence is estimated given the margins, the cut of Section \ref{sec:inference}, and the likelihood treats consecutive observations as independent. In the Beijing application the posterior intervals were wide enough to absorb the serial dependence; in general, a composite-likelihood adjustment \citep{ribatet2012} would account for it. Second, the margins implied by a fit decay faster than the standardized ones far in the tail (Proposition \ref{prop:slowvar}), so return level sets can be too tight near the axes. Third, outside the model class the fitted $\bar{\chi}$ reflects dependence at the observed levels, and the scale family tends to overstate it when the true scale function exceeds its cap, as for the Gaussian and inverted logistic copulas. Finally, for data with positive $\chi$, as in the coastal application of Section~\ref{supp:duck}, the model settles at its boundary, where joint probabilities far in the tail are underestimated, a situation that the diagnostics of Section~\ref{sec:mcmc} flag.

The construction holds in any dimension (Section \ref{sec:multi}), and the trivariate fit of Section \ref{sec:app3} sets out what a multivariate study has to settle before the model is used in higher dimensions: how pairwise deficits should combine in the interior of the simplex so that projections respect the bivariate fits, how their allocation between pairs is identified from a few hundred exceedances, how the class priors extend to subsets of components, and how the bases on the simplex, whose size grows combinatorially with $d$, are to be regularized. Covariate-dependent or time-varying angular densities and threshold curves, including change points in extremal dependence \citep{decarvalho2020}, and hierarchical structures for spatial replication are natural developments for environmental data. Two specifications set aside would widen the model. An unconstrained scale, with $\log\sigma(w)$ in a Bernstein basis under the envelope and endpoint conditions, would reach every scale function satisfying \eqref{eq:exactclass}, including that of the inverted logistic, at the price of constrained sampling and of $\bar{\chi}$ as an explicit parameter. An explicit asymptotically dependent component, such as an mGPD or an additive combination of gauge functions \citep{lee2025geometric}, would keep $\chi(q)$ away from zero on the scale of the data, as the Student $t$ simulations and the coastal application require. Laplace margins \citep{mackay2023} would extend the geometry to all quadrants and admit negative association, which exponential margins exclude and which the conditional extremes fit suggests at Wanliu on the days of heaviest particulate pollution.

\section*{Acknowledgments}

We thank Miguel de Carvalho for suggesting the power-mean threshold family of Section \ref{sec:threshold}.

\section*{Funding}

The first author was supported by the Spanish State Research Agency, grant PID2023-153222OB-I00.

\section*{Code availability}

The R package \texttt{star}, the scripts reproducing the simulation studies and the applications, and the trivariate implementation are available at \url{https://github.com/manueleleonelli/star}.

\bibliographystyle{plainnat}
\bibliography{bib}

\clearpage
\appendix
\renewcommand{\thesection}{S\arabic{section}}
\setcounter{table}{0}
\renewcommand{\thetable}{S\arabic{table}}
\setcounter{figure}{0}
\renewcommand{\thefigure}{S\arabic{figure}}

\begin{center}
{\LARGE Supplementary material}
\end{center}

\noindent This supplementary material states the model and its results in $d$ dimensions (Section~\ref{supp:multi}), the proofs (Section~\ref{app:proofs}), the computational details of the posterior sampler and of the trivariate implementation of Section~\ref{sec:app3} (Section~\ref{app:computation}), the simulation studies in full with a trivariate calibration check (Section~\ref{supp:sims}), and further application results (Section~\ref{supp:apps}).

\section{The model in $d$ dimensions}\label{supp:multi}

Sections~\ref{sec:model} and~\ref{sec:theory} are bivariate apart from the scale function \eqref{eq:addgauge} of Section~\ref{sec:multi}. This section records what carries over to $d$ dimensions (Section~\ref{supp:multi-comp}) and states the results of Section~\ref{sec:theory} for general $d$ (Section~\ref{supp:multi-theory}); Section~\ref{app:proofs} proves them, the bivariate results being the case $d=2$, and Sections~\ref{supp:impl3} and \ref{supp:calib3} give the trivariate implementation of Section~\ref{sec:app3} and a calibration check. Throughout, $\bm{W}=\bE/R\in\mathcal{S}_{d-1}$ with barycentric coordinates $w_1,\ldots,w_d$, $\bm{e}_j$ is the $j$th vertex of the simplex, and $g$ is the gauge of the limit set, $g(\bx)=\normone{\bx}/\sigma(\bx/\normone{\bx})$ as in Proposition~\ref{prop:gauge}.

\subsection{Components}\label{supp:multi-comp}

The decomposition \eqref{eq:factorization} and the splice \eqref{eq:splice} are already written for general $d$, with Jacobian $r^{d-1}$, and the splice still normalizes in one dimension for every angle. The joint density therefore has no $d$-dimensional normalizing constant at any $d$, the log-likelihood \eqref{eq:loglik} keeps its form, and the body or tail membership of an observation costs $O(d)$ operations. Nothing in the computational argument degrades with dimension; what grows is the number of parameters of the functions that live on the simplex.

\emph{Threshold.} The power-mean family \eqref{eq:rhop} is the $d=2$ case of
\begin{equation}\label{eq:pball}
\rho(\bw)=\Big\{\sum_{j=1}^d\Big(\frac{w_j}{u_j}\Big)^{p}\Big\}^{-1/p},
\end{equation}
the radial form of the contour $\sum_j(x_j/u_j)^p=1$, with $d$ anchor levels and one exponent, the rectangle $[0,u_1]\times\cdots\times[0,u_d]$ as $p\to\infty$ and the simplex $\sum_jx_j/u_j=1$ at $p=1$. It generalizes without change.

\emph{Body and angular density.} The Gamma body keeps $\alpha=d$ under independence (Section~\ref{sec:splice}). The Beta basis of the angular density (Section~\ref{sec:angular}), whose $K_W$ densities have degree $K_W-1$, and the Bernstein basis of \eqref{eq:parfun} become on the simplex the Dirichlet basis of degree $K$, $\{\mathrm{Dir}(\bw;\bm{k}+\bm{1}):\bm{k}\in\mathbb{N}^d,\ |\bm{k}|=K\}$, which has $\tbinom{K+d-1}{d-1}$ members, and the equal-weight mixture is the uniform density on the simplex. Both $\log\lambda(\cdot)$ and $f_{\bm{W}}$ therefore extend without change, so the angular density is the easy part of the extension in principle and the expensive one in practice: degree $9$, the degree of $K_W=10$ Beta densities, gives $55$ weights at $d=3$, $220$ at $d=4$ and $715$ at $d=5$. Low degree, as in the implementation of Section~\ref{supp:impl3}, or a sparse or low-rank prior on the weights, is needed beyond $d=3$, where the count grows fastest.

\emph{Scale function.} The two conditions of Section~\ref{sec:limitset} become $\sigma(\bm{e}_j)=1$ at the $d$ vertices and the envelope $\sigma(\bw)\le1/\max_jw_j$, the $d$-dimensional form of \eqref{eq:envelope}. Table~\ref{tab:eta3} supports Section~\ref{sec:multi}. It compares the two trivariate scale functions discussed there, the ratio $\min_jw_j/\max_jw_j$ in place of $\tau$ in \eqref{eq:scalefam} and the additive-pairwise gauge \eqref{eq:addgauge}, through the coefficients $\eta_C$ of Proposition~\ref{prop:eta3}, at the parameter values of Section~\ref{sec:multi} and Figure~\ref{fig:geometry3}. Under the ratio every pairwise coefficient equals the trivariate one, so pairwise and full-dimensional dependence are not separated; under \eqref{eq:addgauge} they differ, and the envelope holds for three equal pairs up to $\bar{\chi}_{jk}=1/2$.

Two recent alternatives could replace \eqref{eq:addgauge} in the splice, provided the vertex and envelope conditions are imposed: the piecewise-linear gauge of \citet{campbell2024}, interpolated over a triangulation of reference directions, and the neural gauges of \citet{murphybarltrop2024gauge}. The additive form keeps the $\bar{\chi}_{jk}$ as parameters, which is what the class priors need, and is what we implement.

\begin{table}[t]
\centering
\caption{Residual tail dependence coefficients $\eta_C$ implied by two trivariate scale functions with $\nu=1/2$: the ratio $\min_jw_j/\max_jw_j$ in place of $\tau$ in \eqref{eq:scalefam}, and the additive-pairwise gauge \eqref{eq:addgauge} of Section~\ref{sec:multi}. The bivariate family gives $\eta=(1+\bar{\chi})/2$: $0.75$, $0.80$, $0.95$ and $1$ at $\bar{\chi}=0.5$, $0.6$, $0.9$ and $1$. The last column reports the envelope condition, checked on a grid of the simplex.}
\label{tab:eta3}
\footnotesize
\resizebox{\textwidth}{!}{%
\begin{tabular}{ll cccc c}
\toprule
scale function & $(\bar{\chi}_{12},\bar{\chi}_{13},\bar{\chi}_{23})$ & $\eta_{12}$ & $\eta_{13}$ & $\eta_{23}$ & $\eta_{123}$ & admissible \\
\midrule
ratio $\min_jw_j/\max_jw_j$ & $\bar{\chi}=0.5$, all pairs & $0.500$ & $0.500$ & $0.500$ & $0.500$ & yes \\
ratio $\min_jw_j/\max_jw_j$ & $\bar{\chi}=0.9$, all pairs & $0.633$ & $0.633$ & $0.633$ & $0.633$ & yes \\
additive \eqref{eq:addgauge} & $(0.6,0,0)$ & $0.800$ & $0.500$ & $0.500$ & $0.444$ & yes \\
additive \eqref{eq:addgauge} & $(0.6,0.2,0.2)$ & $0.800$ & $0.632$ & $0.632$ & $0.632$ & yes \\
additive \eqref{eq:addgauge} & $(0.5,0.4,0.4)$ & $0.840$ & $0.840$ & $0.840$ & $0.840$ & yes \\
additive \eqref{eq:addgauge} & $(0.5,0.5,0.5)$ & $1$ & $1$ & $1$ & $1$ & yes, corner attained \\
additive \eqref{eq:addgauge} & $(0.55,0.55,0.55)$ & & & & & no \\
additive \eqref{eq:addgauge} & $(1,0,0)$ & $1$ & $0.500$ & $0.500$ & $0.500$ & yes \\
\bottomrule
\end{tabular}}
\end{table}

\subsection{Results}\label{supp:multi-theory}

The results of Section~\ref{sec:theory} are stated here for general $d$, with proofs in Section~\ref{app:proofs}; each reduces to its bivariate version at $d=2$, as noted after it. Throughout, $m_C(\bw)=\min_{j\in C}w_j\,\sigma(\bw)$ for nonempty $C$, with $m=m_{\{1,\ldots,d\}}$ and $m_j=m_{\{j\}}$.

\begin{propositionS}\label{prop:gauge-d}
Let $\bE$ have density \eqref{eq:factorization} with radial component \eqref{eq:splice}, $\xi\equiv0$, $f_{\bm{W}}$ positive on the interior of the simplex, $\rho(\cdot)$ bounded and $\sigma(\cdot)$ positive and continuous on the closed simplex. Then, for every $\bx$ in the interior of the orthant,
\[
-\frac{1}{t}\log f_{\bE}(t\bx)\xrightarrow[t\to\infty]{}g(\bx)=\frac{\normone{\bx}}{\sigma(\bx/\normone{\bx})},
\]
and the scaled sample cloud converges onto $G=\{\bx\ge\bm{0}:g(\bx)\le1\}$. Unit-rate margins require the envelope $\sigma(\bw)\le1/\max_jw_j$.
\end{propositionS}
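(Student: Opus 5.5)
The plan is to evaluate the density along the ray $t\bx$ directly from the factorization \eqref{eq:factorization}. Fix $\bx$ in the interior of the orthant and write $r_0=\normone{\bx}$ and $\bw=\bx/r_0$. Then $t\bx$ has radius $tr_0$ and angle $\bw$, and the angle is the same for every $t$. This is an interior point of the simplex, so $f_{\bm{W}}(\bw)>0$ is a fixed constant. Since $\rho$ is bounded, $tr_0>\rho(\bw)$ for all large $t$, and the tail branch of \eqref{eq:splice} with $\xi=0$ applies:
\[
f_{\bE}(t\bx)=f_{\bm{W}}(\bw)\,\{1-\varpi(\bw)\}\,\sigma(\bw)^{-1}\exp\{-(tr_0-\rho(\bw))/\sigma(\bw)\}\,(tr_0)^{-(d-1)}.
\]
Take $-\log$ and divide by $t$. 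The terms $\log f_{\bm{W}}(\bw)$, $\log\{1-\varpi(\bw)\}$ (finite, because the Gamma body has unbounded support so $\varpi<1$), $\log\sigma(\bw)$ and $\rho(\bw)/\sigma(\bw)$ are constants in $t$ and vanish after division. The Jacobian term gives $(d-1)\log(tr_0)/t\to0$. What remains is $r_0/\sigma(\bw)=g(\bx)$, which proves the pointwise limit. The function $g$ is positive, continuous and $1$-homogeneous because $\sigma$ is positive and continuous on the closed simplex.

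For the limit-set statement, I would invoke the sufficient condition of \citet{nolde2014,nolde2022}. If $-\log f(t\bx)/t\to g(\bx)$ with $g$ continuous, and the convergence is uniform on compact sets (or on the unit sphere of directions), then the scaled cloud $\{\bE_i/\log n\}$ converges onto $G=\{g\le1\}$.

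\textbf{Main obstacle: uniformity up to the boundary.} Uniformity over directions in a compact subset of the open simplex follows from the display above, since every term is bounded there. The difficulty is on the boundary faces, where $f_{\bm{W}}$ may vanish or blow up and the density need not behave well. I would handle this by checking the convergence of the cloud directly instead of relying on density uniformity. For the \emph{inner bound}, every open subset of $G$ meeting the interior of the orthant gets points, by the pointwise density asymptotics on small cones away from the faces. For the \emph{outer bound}, I use the radial survivor $P(R>r\mid\bm{W}=\bw)\le\exp\{-(r-\bar\rho)/\sigma(\bw)\}$ for $r>\bar\rho=\sup\rho$, which is uniform in $\bw$. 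Integrating over $f_{\bm{W}}$, which needs only to be a probability density, shows that $n\,P(\bE/\log n\notin G^{\epsilon})\to0$ for every $\epsilon$-enlargement $G^{\epsilon}$; this uses compactness of the simplex and uniform continuity of $\sigma$ to cover directions near the faces. Together these give convergence in the Hausdorff sense of \citet{nolde2014}.

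Finally, for the envelope claim, the marginal tail of $E_j$ is governed by the limit set. Unit-rate exponential margins force $\max_j x_j\le1$ on $G$ and $\sup\{x_j:\bx\in G\}=1$. Now $g(\bx)\ge\max_jx_j$ at $\bx=\sigma(\bw)\bw$, a boundary point of $G$, is exactly $\sigma(\bw)\max_jw_j\le1$.
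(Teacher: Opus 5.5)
Your pointwise limit is the paper's argument: evaluate the tail branch of the splice along the ray $t\bx$. Your outer bound is also the paper's: bound $n\,P\{R>(1+\varepsilon')\sigma(\bm{W})\log n\}$ by $n^{-\varepsilon'}$ times an integrable constant. That bound needs only $\rho\le\bar\rho$ and $0<\inf\sigma\le\sup\sigma<\infty$; uniform continuity of $\sigma$ plays no role there.

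\textbf{The inner bound needs repair.} You assert that $-t^{-1}\log f_{\bE}(t\bx)\to g(\bx)$ uniformly over directions in compact subsets of the open simplex ``since every term is bounded there''. This proposition assumes $f_{\bm{W}}$ only positive, not continuous, and assumes no continuity of $\rho$ or $\lambda$. So $\log f_{\bm{W}}$ and $\log\{1-\varpi\}$ can be unbounded below on compact subsets of the interior. An inner bound built on uniform lower bounds for the density over small cones therefore fails as stated.

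Your pointwise limit is enough if you use Fatou's lemma. For an open ball $V$ centred at a point of $G$,
\[
n\,P(\bE/\log n\in V)=\int_V n(\log n)^d f_{\bE}(\bm{z}\log n)\,d\bm{z}.
\]
The integrand tends to infinity at every $\bm{z}$ in the open orthant with $g(\bm{z})<1$. Because $\sigma$ is positive and continuous on the closed simplex, $V$ meets this set in positive measure, so the integral diverges.

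The paper avoids densities here. It integrates the radial survivor $B(\bw)e^{-r/\sigma(\bw)}$ over a neighbourhood $I$ of the direction of any $\bx\in G$, boundary directions included. It then needs only $\int_I B>0$, which follows from positivity of $B$ on the interior.

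\textbf{Envelope.} Here you take a different route. The paper reads the marginal rate $-t^{-1}\log P(E_j>t)\to1/\sup_{\bw}w_j\sigma(\bw)$ directly from \eqref{eq:ldp}. You instead go through the limit set. Your step ``unit-rate margins force $\max_jx_j\le1$ on $G$'' needs a line of justification. Suppose some $\bx\in G$ had $x_j>1+2\delta$. The ball of radius $\delta$ around it lies in $\{z_j>1+\delta\}$, so the filling property gives $n\,P\{E_j>(1+\delta)\log n\}\to\infty$. This contradicts $n\cdot n^{-(1+\delta)}\to0$ for a standard exponential.

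Both routes are valid. Yours reuses the cloud convergence just established. The paper's is self-contained and gives the exact marginal decay rate even when the envelope fails, and it reuses that rate later for the coefficients $\eta_C$ and the Laplace expansions.
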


Proposition~\ref{prop:gauge} is the case $d=2$.

\begin{propositionS}\label{prop:eta3}
Under the conditions of Proposition~\ref{prop:gauge-d}, for every $C$ with at least two elements, the residual tail dependence coefficient of $(E_j)_{j\in C}$ is
\[
\eta_C=\sup_{\bw\in\mathcal{S}_{d-1}}m_C(\bw),
\]
and for $\bm{\omega}$ in the interior of the simplex the angular dependence function is
\[
\Lambda(\bm{\omega})=-\lim_{t\to\infty}t^{-1}\log P(E_j>\omega_jt,\ j=1,\ldots,d)=\inf_{\bw\in\mathcal{S}_{d-1}}\frac{1}{\sigma(\bw)}\max_j\frac{\omega_j}{w_j}.
\]
Under the envelope, $\eta_C=1$ if and only if $\sigma$ attains its envelope value $1/\max_jw_j$ at a point with $\min_{j\in C}w_j=\max_jw_j$; in particular, $\eta_{1\cdots d}=1$ if and only if $\sigma(1/d,\ldots,1/d)=d$.
\end{propositionS}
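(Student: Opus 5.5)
Both displayed formulas will be derived from an exact Laplace-type evaluation of joint survivor probabilities in polar form, using the gauge of Proposition~\ref{prop:gauge-d}; the final characterization then follows from the envelope. The starting point is the $d$-dimensional analogue of \eqref{eq:jointsurv}. For $\bm{a}\ge\bm{0}$ with the coordinates in $C$ positive, the event $\{E_j>a_jt,\ j\in C\}$ equals $\{R>t\max_{j\in C}a_j/W_j\}$. Hence
\[
P(E_j>a_jt,\ j\in C)=\int_{\mathcal{S}_{d-1}}f_{\bm{W}}(\bw)\,\bar{F}_{R\mid\bm{W}}\Big(t\max_{j\in C}\frac{a_j}{w_j}\,\Big|\,\bw\Big)\,d\bw .
\]
For $t$ large the argument exceeds the bounded threshold $\rho$, except where some $w_j\to0$ with $j\in C$; there the survivor is already negligible. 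The conditional survivor is therefore $\{1-\varpi(\bw)\}\exp[-\{t\max_j a_j/w_j-\rho(\bw)\}/\sigma(\bw)]$.

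The next step is a Laplace principle on the compact simplex: $-t^{-1}\log$ of the integral tends to $\inf_{\bw}\sigma(\bw)^{-1}\max_{j\in C}a_j/w_j$. The upper bound on the integral comes from bounding the integrand by $\sup\{1-\varpi\}\,e^{\sup\rho/\inf\sigma}$ times the exponential of $-t$ times the infimum. The lower bound comes from restricting to a small neighbourhood of a near-minimizer in the interior, using continuity of $\sigma$ and $\rho$, positivity of $f_{\bm{W}}$ there, and $1-\varpi>0$. Taking $\bm{a}=\bm{\omega}$ and $C=\{1,\ldots,d\}$ gives the formula for $\Lambda(\bm{\omega})$ directly.

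For $\eta_C$, set $a_j=1$ for $j\in C$. The margins satisfy $P(E_j>t)=e^{-t}$ exactly, so the definition of $\eta_C$ reduces to $P(\min_{j\in C}E_j>t)=e^{-t/\eta_C+o(t)}$. The rate just obtained is
\[
\inf_{\bw}\frac{1}{\sigma(\bw)\min_{j\in C}w_j}=\frac{1}{\sup_{\bw}m_C(\bw)} .
\]
This gives $\eta_C=\sup m_C$ on the logarithmic scale. It is also consistent with the quadrant characterization $\eta_C^{-1}=\inf\{g(\bx):x_j\ge1,\ j\in C\}$ of \citet{nolde2022}, which I would cite as a cross-check. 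The main obstacle is the boundary of the simplex: near faces where some $w_j$ with $j\in C$ vanishes, $\max_j a_j/w_j$ blows up, so these regions only help the upper bound. Near faces where only coordinates outside $C$ vanish, $f_{\bm{W}}$ may degenerate. Continuity of $\sigma$ on the closed simplex means the supremum can be approached from interior points, which is all the lower bound needs. I would handle this by replacing a boundary maximizer with a nearby interior point, at a loss of $\epsilon$ in the exponent.

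For the final claim, assume the envelope. Then $m_C(\bw)\le\min_{j\in C}w_j/\max_jw_j\le1$, so $\eta_C=1$ requires both inequalities to be equalities at a maximizer. The supremum is attained because $m_C$ is continuous on a compact set. Conversely, equality at such a point gives $\eta_C=1$. For $C=\{1,\ldots,d\}$, the condition $\min_jw_j=\max_jw_j$ forces $\bw$ to be the barycentre, where the envelope value is $d$.
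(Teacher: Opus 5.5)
Your proposal is correct and follows the paper's proof essentially step for step: the polar survivor integral over the tail branch of the splice, a Laplace principle that bounds the exponent by its infimum from above and uses an open near-minimizing set of positive measure from below, the choices $\bm{a}=\bm{\omega}$ and $\bm{a}$ the indicator of $C$, and the envelope-plus-compactness argument for $\eta_C=1$. Two small corrections: the margins of the model are not exactly exponential (Proposition~\ref{prop:decay-d}, for instance, gives $P(E_j>t)\sim c_j t^{-(d-1)}e^{-t}$), so $\eta_C$ should simply be read on the $E$ scale from $P(E_j>t,\ j\in C)=L(e^t)e^{-t/\eta_C}$, as the paper does; and continuity of $\rho$ is not among the hypotheses, but it is not needed either, since $e^{\rho/\sigma}\ge1$ and the positivity of $f_{\bm{W}}(1-\varpi)$ on the interior suffice for the lower bound.
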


The formula of Proposition~\ref{prop:eta} and the formula \eqref{eq:adf} of Proposition~\ref{prop:adf} are the case $d=2$. The statements of Propositions~\ref{prop:eta} and~\ref{prop:adf} under the family \eqref{eq:scalefam} are proved in Section~\ref{supp:proof-cor}; under \eqref{eq:addgauge} the coefficients are the optimizations of Table~\ref{tab:eta3}.

The polynomial factor of Proposition~\ref{prop:slowvar} is the one result that changes with dimension. It comes from Laplace's method at a conical minimum: a function $h$ has a conical minimum at $\bw_0$ if $h(\bw_0+\bm{v})=h(\bw_0)+\ell(\bm{v})+o(\normtwo{\bm{v}})$ as $\bm{v}\to\bm{0}$ within its domain, with $\ell$ continuous, positively homogeneous of order one and bounded below by a positive multiple of $\normtwo{\bm{v}}$ on the feasible directions.

\begin{propositionS}\label{prop:decay-d}
Assume the conditions of Proposition~\ref{prop:gauge-d} with the scale function \eqref{eq:addgauge}, and let $f_{\bm{W}}$, $\rho(\cdot)$ and $\lambda(\cdot)$ be continuous and $f_{\bm{W}}$ positive on the closed simplex. Then there are positive constants $c$ and $c_j$ such that, as $t\to\infty$,
\begin{gather*}
P(E_1>t,\ldots,E_d>t)=c\,t^{-(d-1)}e^{-t/\eta}\{1+o(1)\},\\
P(E_j>t)=c_j\,t^{-(d-1)}e^{-t}\{1+o(1)\}\quad(j=1,\ldots,d),
\end{gather*}
the first if $m$ attains its maximum $\eta=\eta_{1\cdots d}$ at a single interior point, where $1/m$ has a conical minimum, and the second if $\sigma$ attains the envelope only at the vertices of the simplex. Consequently $\chi(q)=P(E_1>t_q,\ldots,E_d>t_q)/(1-q)$ tends to zero, like $c\,\{-\log(1-q)\}^{-(d-1)}$ when $\eta=1$, and the margins of the model are lighter than exponential by the factor $t^{-(d-1)}$.
\end{propositionS}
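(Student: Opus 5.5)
The plan is to write both survivors as angular integrals, in the $d$-dimensional analogue of \eqref{eq:jointsurv}, and evaluate them by Laplace's method with an exponent that is positively homogeneous in $t$. Since $\{E_j>t_j\ \forall j\}=\{R>\max_j t_j/W_j\}$,
\[
P(E_1>t,\ldots,E_d>t)=\int_{\mathcal{S}_{d-1}} f_{\bm{W}}(\bw)\,\bar{F}_{R\mid\bm{W}}\Big(\frac{t}{\min_jw_j}\,\Big|\,\bw\Big)\,d\bw .
\]
Similarly, $P(E_j>t)$ is the same integral with $t/w_j$ in place of $t/\min_jw_j$.

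Because $\rho$ is bounded, for large $t$ the argument exceeds $\rho(\bw)$ uniformly on the region that matters, and the integrand equals
\[
f_{\bm{W}}(\bw)\{1-\varpi(\bw)\}e^{\rho(\bw)/\sigma(\bw)}\exp\{-t/m(\bw)\},
\qquad m(\bw)=\min_jw_j\,\sigma(\bw).
\]
For the marginal integral, $m$ is replaced by $m_j(\bw)=w_j\sigma(\bw)$. Away from the maximizer, $m\le\eta-\delta$, so that part of the integral is $O(e^{-t/(\eta-\delta)})$, which is negligible. The cutoff $\min_jw_j\ge\epsilon$ is harmless, since $m\to0$ at the boundary and the exponent blows up there.

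The core step is Laplace's method at a conical minimum of $h=1/m$, or $h=1/m_j$, on a $(d-1)$-dimensional domain. Write $h(\bw_0+\bm{v})=h(\bw_0)+\ell(\bm{v})+o(\normtwo{\bm{v}})$ and substitute $\bm{v}=\bm{s}/t$. The prefactor $a(\bw)=f_{\bm{W}}\{1-\varpi\}e^{\rho/\sigma}$ is continuous and positive at $\bw_0$. Dominated convergence, with the lower bound $\ell(\bm{v})\ge c\normtwo{\bm{v}}$ supplying the integrable majorant $e^{-c'\normtwo{\bm{s}}}$ on a small ball, gives
\[
\int a\,e^{-th}=e^{-th(\bw_0)}t^{-(d-1)}a(\bw_0)\int_{K}e^{-\ell(\bm{s})}\,d\bm{s}\,\{1+o(1)\},
\]
where $K$ is the tangent cone of feasible directions. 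The integral is finite and positive because $\ell$ is homogeneous and grows linearly. This yields the constant $c=a(\bw_0)\int_K e^{-\ell}$ with $h(\bw_0)=1/\eta$.

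For the joint survivor, $\bw_0$ is the assumed unique interior maximizer of $m$, where the cone is a full hyperplane direction set. For the margin, I need that $m_j=w_j\sigma$ attains its maximum $1$ only at the vertex $\bm{e}_j$. This follows from the envelope $\sigma\le1/\max_iw_i$ together with the assumption that $\sigma$ touches it only at vertices. At the vertex the domain is a corner, so $K$ is an orthant-type cone, and the factor is still $t^{-(d-1)}$.

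The main obstacle is verifying that $1/m_j$ has a conical, rather than flatter, minimum at $\bm{e}_j$, i.e., a linear lower bound on $1-w_j\sigma(\bw)$ near the vertex. I would try the additive gauge \eqref{eq:addgauge} directly. There $1/m_j=g(\bw)/w_j$, and near $\bm{e}_j$ one has $g(\bw)=1-\sum D_{jk}+\cdots$. Each deficit $D_{jk}(a,b)$ with $b$ small behaves like $(a+b)\bar{\chi}_{jk}\tau^{1+\nu}$, which is $o(b)$ since $\nu>0$. Hence $g(\bw)/w_j=1+\sum_{k\ne j}w_k/w_j+o(\normtwo{\bm{v}})$, which is linear and bounded below, so the minimum is conical.

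The consequences then follow. Dividing the joint survivor by $1-q=e^{-t_q}$ gives $\chi(q)\sim c\,t_q^{-(d-1)}e^{-t_q(1/\eta-1)}\to0$, which equals $c\,t_q^{-(d-1)}$ when $\eta=1$. The model margins carry the extra factor $t^{-(d-1)}$ relative to $e^{-t}$.
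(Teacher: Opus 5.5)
Your route is the paper's: write each survivor as an angular integral $\int a(\bw)\,e^{-t/m(\bw)}\,d\bw$ (with $m_j$ for the margins), localize, and apply Laplace's method at a conical minimum over the tangent cone, which is $\mathbb{R}^{d-1}$ at the interior maximizer and the orthant at the vertex. The joint-survivor part is fine, and so is your uniqueness argument for the maximizer of $m_j$ at $\bm{e}_j$.

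The gap is in the step you single out as the main obstacle. Near $\bm{e}_j$, with $\bm{v}=(w_l)_{l\ne j}$, you expand $g(\bw)=1-\sum_l D_{jl}+\cdots$ and conclude $g(\bw)/w_j=1+\sum_{l\ne j}v_l+o(\normtwo{\bm{v}})$. The dots hide the deficits $D_{kl}(v_k,v_l)$ with $k,l\ne j$. These are positively homogeneous of order one in $\bm{v}$, not $o(\normtwo{\bm{v}})$; only the deficits that involve $j$ are of higher order. The correct expansion is $g(\bw)/w_j=1+g_j(\bm{v})+o(\normtwo{\bm{v}})$ with
\[
g_j(\bm{v})=\sum_{l\ne j}v_l-\sum_{k<l,\ k,l\ne j}D_{kl}(v_k,v_l),
\]
which is the gauge restricted to the face $\{x_j=0\}$. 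Your version holds only for $d=2$, or when every pair not involving $j$ has $\bar{\chi}_{kl}=0$.

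This affects the argument, not just the bookkeeping. The linear lower bound you need can no longer come from the trivial $\sum_l v_l\ge\normtwo{\bm{v}}$, because the subtracted deficits are of the same order. For $d\ge4$, several of them also add on the face, so $g_j(\bm{v})>0$ is not automatic from the bivariate family alone.

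The missing ingredient is the envelope applied on that face. $g_j(\bm{v})$ equals $g(\bx)$ at the point $\bx$ with $x_j=0$ and remaining coordinates $\bm{v}$, so the envelope $g(\bx)\ge\max_i x_i$ gives $g_j(\bm{v})\ge\max_l v_l\ge\normtwo{\bm{v}}/\sqrt{d-1}$. With this your Laplace argument at the vertex goes through, and $P(E_j>t)\sim c_j\,t^{-(d-1)}e^{-t}$ with $c_j=a(\bm{e}_j)\int_{[0,\infty)^{d-1}}e^{-g_j(\bm{v})}\,d\bm{v}$. The constant your expansion implies, $a(\bm{e}_j)$ from $\int e^{-\sum_l v_l}=1$, is the $d=2$ value. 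In general it is too small, since the deficits are nonnegative and so $g_j\le\sum_l v_l$.
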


Proposition~\ref{prop:slowvar} is the case $d=2$ under \eqref{eq:scalefam}, completed in Section~\ref{supp:proof-const}.

\FloatBarrier
\section{Proofs}\label{app:proofs}

Propositions~\ref{prop:gauge-d} to \ref{prop:decay-d} contain Propositions~\ref{prop:gauge}, \ref{prop:eta} and~\ref{prop:adf} and the expansions \eqref{eq:slowvar} of Proposition~\ref{prop:slowvar} as the case $d=2$; the statements specific to the family \eqref{eq:scalefam} are proved in Sections~\ref{supp:proof-cor} and~\ref{supp:proof-const}.

\subsection{Preliminaries}\label{supp:prelim}

Integrals over the simplex are with respect to Lebesgue measure in $(w_1,\ldots,w_{d-1})$, against which $f_{\bm{W}}$ is a density. Let $A(\bw)=1-G_{\mathrm{b}}(\rho(\bw)\mid\bw)$ be the tail mass of the splice \eqref{eq:splice}, $\bar{\rho}=\sup_{\bw}\rho(\bw)$ the largest threshold radius and $B(\bw)=f_{\bm{W}}(\bw)A(\bw)e^{\rho(\bw)/\sigma(\bw)}$. For $r>\bar{\rho}$ the tail branch of \eqref{eq:splice} gives
\begin{equation}\label{eq:survpolar}
f_{\bm{W}}(\bw)\,P(R>r\mid\bm{W}=\bw)=B(\bw)\,e^{-r/\sigma(\bw)},
\end{equation}
with $B$ positive on the interior of the simplex, integrable, since $B\le f_{\bm{W}}e^{\bar{\rho}/\inf\sigma}$, and continuous wherever $f_{\bm{W}}$, $\rho$ and $\lambda$ are. Since $E_j=RW_j$, for $\bm{a}\ge\bm{0}$ with $C=\{j:a_j>0\}$ nonempty the event $\{E_j>a_jt,\ j\in C\}$ is $\{R>t\max_{j\in C}a_j/W_j\}$, whose radial level is at least $t\max_ja_j$, so for $t\max_ja_j>\bar{\rho}$
\begin{equation}\label{eq:survivorint}
P(E_j>a_jt,\ j\in C)=\int B(\bw)\,e^{-t\Phi_{\bm{a}}(\bw)}\,d\bw,\qquad \Phi_{\bm{a}}(\bw)=\frac{1}{\sigma(\bw)}\max_{j\in C}\frac{a_j}{w_j}.
\end{equation}
The function $\Phi_{\bm{a}}$ in \eqref{eq:survivorint} is continuous where finite and infinite where $w_j=0$ for some $j\in C$. Bounding it below by its infimum everywhere, and above by the infimum plus $\varepsilon$ on the open set where this holds, which has positive measure, gives
\begin{equation}\label{eq:ldp}
-\frac{1}{t}\log P(E_j>a_jt,\ j\in C)\longrightarrow\inf_{\bw}\Phi_{\bm{a}}(\bw)=\min\{g(\bx):x_j\ge a_j,\ j\in C\},
\end{equation}
the equality because $r\bw$ satisfies the constraints exactly when $r\ge\max_{j\in C}a_j/w_j$, and $g(r\bw)=r/\sigma(\bw)$, as in \citet[Section~\ref{sec:coefficients}]{nolde2022}.

\subsection{Proof of Proposition~\ref{prop:gauge-d}}\label{supp:proof-gauge}

\emph{Gauge.} For $\bx$ in the open orthant, write $s=\normone{\bx}$ and $\bw=\bx/s$. For $ts>\bar{\rho}$, \eqref{eq:factorization} and the tail branch of \eqref{eq:splice} give
\[
f_{\bE}(t\bx)=(ts)^{-(d-1)}\,f_{\bm{W}}(\bw)\,A(\bw)\,\sigma(\bw)^{-1}\exp\big[-\{ts-\rho(\bw)\}/\sigma(\bw)\big],
\]
and since $f_{\bm{W}}(\bw)A(\bw)>0$, $-t^{-1}\log f_{\bE}(t\bx)\to s/\sigma(\bw)=g(\bx)$.

\emph{Convergence of the sample cloud.} By Proposition~\ref{prop:gauge} of \citet{nolde2022}, the cloud $\{\bE_i/\log n\}_{i\le n}$ of independent copies of $\bE$ converges onto the compact set $G=\{r\bw:0\le r\le\sigma(\bw)\}$ if, as $n\to\infty$,
\begin{equation}\label{eq:contain}
n\,P(\bE/\log n\notin U)\to0\quad\text{for every open set }U\supset G,
\end{equation}
\begin{equation}\label{eq:fill}
n\,P\{\bE/\log n\in\mathcal{B}(\bx,\varepsilon)\}\to\infty\quad\text{for every }\bx\in G\text{ and }\varepsilon>0,
\end{equation}
where $\mathcal{B}(\bx,\varepsilon)$ is the open Euclidean ball. We verify both from \eqref{eq:survpolar}, using $g(\bE/\log n)=R/\{\sigma(\bm{W})\log n\}$ and $0<\inf\sigma\le\sup\sigma<\infty$.

\emph{Proof of \eqref{eq:contain}.} As $G$ is compact, $U$ contains every point within some distance $\delta>0$ of $G$. Every point of $\{g\le1+\varepsilon'\}=(1+\varepsilon')G$ lies within $\varepsilon'\sup\sigma$ of $G$, so $U$ contains this set for $\varepsilon'=\delta/(2\sup\sigma)$, and for $n$ large enough that $(1+\varepsilon')\inf\sigma\log n>\bar{\rho}$,
\[
n\,P(\bE/\log n\notin U)\le n\,P\{R>(1+\varepsilon')\,\sigma(\bm{W})\log n\}=n^{-\varepsilon'}\int B\longrightarrow0.
\]

\emph{Proof of \eqref{eq:fill}.} For $\bx=\bm{0}$ the probability tends to one. Otherwise $\bx=s\bw_0$ with $\bw_0$ in the simplex and $0<s\le\sigma(\bw_0)$. Let $\kappa=\min(\varepsilon,s)/2$, and let $I$ be a neighbourhood of $\bw_0$ in the simplex on which $\normtwo{\bw-\bw_0}<\varepsilon/(2s)$ and $\sigma(\bw)>\sigma(\bw_0)-\kappa/2$. If $\bm{W}\in I$ and $|R/\log n-s|<\kappa$, then $\normtwo{\bE/\log n-\bx}\le|R/\log n-s|+s\,\normtwo{\bm{W}-\bw_0}<\varepsilon$, as $\normtwo{\bm{W}}\le1$. For $n$ large enough that $(s-\kappa)\log n>\bar{\rho}$ and $n^{-2\kappa/\sup\sigma}\le1/2$, \eqref{eq:survpolar} therefore gives
\[
P\{\bE/\log n\in\mathcal{B}(\bx,\varepsilon)\}\ge\int_IB(\bw)\,n^{-(s-\kappa)/\sigma(\bw)}\big\{1-n^{-2\kappa/\sigma(\bw)}\big\}\,d\bw\ge\frac{1}{2}\Big(\int_IB\Big)\,n^{-\gamma},
\]
where $\gamma=(s-\kappa)/\{\sigma(\bw_0)-\kappa/2\}<1$ because $s\le\sigma(\bw_0)$, and $\int_IB>0$ because $I$ meets the interior of the simplex in a set of positive measure, so $n$ times the bound diverges.

\emph{Envelope.} With $\bm{a}=\bm{e}_j$, $\Phi_{\bm{a}}(\bw)=1/\{w_j\sigma(\bw)\}$, and \eqref{eq:ldp} gives $-t^{-1}\log P(E_j>t)\to1/\sup_{\bw}w_j\sigma(\bw)$. Unit-rate exponential margins therefore require $w_j\sigma(\bw)\le1$ for every $\bw$ and every $j$, which is $\sigma(\bw)\le1/\max_jw_j$.

\subsection{Proof of Proposition~\ref{prop:eta3}}\label{supp:proof-eta}

\emph{Coefficients.} Since $P(E_j>t,\ j\in C)=L(e^t)\,e^{-t/\eta_C}$ with $L$ slowly varying \citep{ledford1996}, $-t^{-1}\log P(E_j>t,\ j\in C)\to1/\eta_C$. With $\bm{a}$ the indicator of $C$, $\Phi_{\bm{a}}=1/m_C$, so \eqref{eq:ldp} gives $\eta_C=\sup m_C$, and with $\bm{a}=\bm{\omega}$ it gives $\Lambda(\bm{\omega})$.

\emph{Value one.} Under the envelope, $m_C(\bw)\le\min_{j\in C}w_j/\max_jw_j\le1$, with equality if and only if $\sigma(\bw)=1/\max_jw_j$ and $\min_{j\in C}w_j=\max_jw_j$. As $m_C$ is continuous on the closed simplex, the supremum is attained, so $\eta_C=1$ if and only if both hold at some point. For $C=\{1,\ldots,d\}$ the second condition forces the barycentre, where the first reads $\sigma(1/d,\ldots,1/d)=d$.

\subsection{Proof of Proposition~\ref{prop:decay-d}}\label{supp:proof-decay}

\emph{Laplace's method.} Both survivors rest on Laplace's method at a conical minimum. Let $D\subset\mathbb{R}^m$ be bounded and polyhedral, with tangent cone $T$ at $\bw_0\in\bar{D}$, let $h$ be continuous on $D$, with a conical minimum at $\bw_0$ whose homogeneous term satisfies $\ell(\bm{v})\ge\kappa\normtwo{\bm{v}}$ on $T$ and with $\inf\{h(\bw):\normtwo{\bw-\bw_0}>\delta\}>h(\bw_0)$ for every $\delta>0$, and let $b$ be integrable on $D$ and continuous and positive at $\bw_0$. Then, as $t\to\infty$,
\begin{equation}\label{eq:laplace}
\int_Db(\bw)\,e^{-t\,h(\bw)}\,d\bw=b(\bw_0)\,t^{-m}e^{-th(\bw_0)}\int_Te^{-\ell(\bm{v})}\,d\bm{v}\,\{1+o(1)\}.
\end{equation}
For an interior point, where $T=\mathbb{R}^m$, \eqref{eq:laplace} is the leading term of the expansion of \citet{fulks1961} for a minimum at which $h$ grows linearly. Their argument localizes the integral to a ball around $\bw_0$ and rescales by $t$; since a polyhedral $D$ coincides with $\bw_0+T$ near $\bw_0$, it applies at boundary points with $\mathbb{R}^m$ replaced by $T$.

\emph{Joint survivor.} With $\bm{a}=\bm{1}$, \eqref{eq:survivorint} is $\int Be^{-t/m}$, and \eqref{eq:laplace} at the interior point $\bw^\ast$ that maximizes $m$, where $T=\mathbb{R}^{d-1}$, gives the joint survivor with $c=B(\bw^\ast)\int_{\mathbb{R}^{d-1}}e^{-\ell^\ast}$, $\ell^\ast$ being the homogeneous term of $1/m$ at $\bw^\ast$.

\emph{Marginal survivors.} With $\bm{a}=\bm{e}_j$, \eqref{eq:survivorint} is $P(E_j>t)=\int Be^{-t/m_j}$. As in Section~\ref{supp:proof-eta}, $m_j\le1$ with equality only where $\sigma$ attains the envelope and $w_j=\max_kw_k$, so if this happens only at the vertices, $1/m_j=g/w_j$ has its unique minimum $1$ at $\bm{e}_j$. Near $\bm{e}_j$, with $\bm{v}=(w_l)_{l\ne j}$ and $w_j=1-\sum_{l\ne j}v_l$, each deficit $D_{jl}$ is $O(v_l^{1+\nu_{jl}})=o(\normtwo{\bm{v}})$ because $\tau=O(v_l)$, while the deficits not involving $j$ are homogeneous of order one in $\bm{v}$. Hence $1/m_j=1+g_j(\bm{v})+o(\normtwo{\bm{v}})$, where
\[
g_j(\bm{v})=\sum_{l\ne j}v_l-\sum_{k<l,\ k,l\ne j}D_{kl}(v_k,v_l)
\]
is the gauge on the face $\{x_j=0\}$, so that $g_j(\bm{v})\ge\max_lv_l$ by the envelope. Applying \eqref{eq:laplace} at the vertex, whose tangent cone is the orthant $[0,\infty)^{d-1}$, gives the marginal survivor with $c_j=B(\bm{e}_j)\int e^{-g_j}$.

\subsection{Proof of the statements of Propositions~\ref{prop:eta} and~\ref{prop:adf} under the family \eqref{eq:scalefam}}\label{supp:proof-cor}

\emph{Proposition~\ref{prop:eta}.} By \eqref{eq:scalefam}, $\tau^{1+\nu}\le\tau$ on $(0,1)$, with equality only at $w=1/2$. With $a=\min(w,1-w)$ and $\tau=a/(1-a)$,
\[
m(w)\le a+\frac{\bar{\chi}a^2}{1-a}\le\frac{1+\bar{\chi}}{2},
\]
the second inequality being $(1-2a)\{(1-a)+\bar{\chi}(1+a)\}\ge0$, with equality only at $a=1/2$. As $m(1/2)=(1+\bar{\chi})/2$, the supremum $\eta=(1+\bar{\chi})/2$ of Proposition~\ref{prop:eta3} is attained only at the diagonal.

\emph{Proposition~\ref{prop:adf}.} By \eqref{eq:ldp}, $\Lambda(\omega)=\min\{g(\bx):\bx\ge(\omega,1-\omega)\}$. The envelope \eqref{eq:envelope} is $g(\bx)\ge\max(x_1,x_2)$, so $\Lambda(\omega)\ge\max(\omega,1-\omega)$; the point $(\omega,1-\omega)$ is feasible and $\sigma\ge1$, so $\Lambda(\omega)\le1/\sigma(\omega)\le1$; and $g$, hence $\Lambda$, is pointwise nonincreasing in $\bar{\chi}$. If $\bar{\chi}=0$ then $g(\bx)=x_1+x_2$ and $\Lambda\equiv1$; if $\bar{\chi}=1$ the feasible point $\max(\omega,1-\omega)(1,1)$ has $g=2\max(\omega,1-\omega)/\sigma(1/2)=\max(\omega,1-\omega)$. Conversely, as $\Lambda(1/2)=1/(2\eta)$ by definition and $\eta=(1+\bar{\chi})/2$, $\Lambda\equiv1$ gives $\bar{\chi}=0$ and $\Lambda(1/2)=1/2$ gives $\bar{\chi}=1$.

\subsection{Proof of Proposition~\ref{prop:slowvar}}\label{supp:proof-const}

Proposition~\ref{prop:decay-d} gives \eqref{eq:slowvar} except for the marginal survivors at $\bar{\chi}=1$.

\emph{Joint survivor.} Under \eqref{eq:scalefam} write $\eta=(1+\bar{\chi})/2$, $\bar{\nu}=\nu(1/2)$ and $s=1+3\bar{\chi}+2\bar{\chi}\bar{\nu}$. Since $\tau(1/2+u)=1-4|u|+O(u^2)$ and $\nu$ is continuous, $\sigma(1/2+u)=1+\bar{\chi}-4\bar{\chi}(1+\bar{\nu})|u|+o(u)$, so $m(1/2+u)=\eta-s|u|+o(u)$ and $1/m(1/2+u)=1/\eta+(s/\eta^2)|u|+o(u)$. By Proposition~\ref{prop:eta} this is the unique maximum of $m$, and the proof of Proposition~\ref{prop:decay-d} gives $c=B(1/2)\int e^{-(s/\eta^2)|v|}dv=2\eta^2B(1/2)/s$.

\emph{Marginal survivors.} Below the boundary the envelope is attained only at $w\in\{0,1\}$, so the marginal survivors follow, with $c_j=B(\bm{e}_j)$ because $g_j(v)=v$. At $\bar{\chi}=1$, $m_1$ also equals one at $w=1/2$, and only there besides $w=1$, with $1/m_1(1/2+u)=1+2\bar{\nu}u+o(u)$ for $u>0$ and $1+(4+2\bar{\nu})|u|+o(u)$ for $u<0$; applying \eqref{eq:laplace} on each side adds $\kappa=B(1/2)\{1/(2\bar{\nu})+1/(4+2\bar{\nu})\}$ to $c_1$ and $c_2$, giving their values in \eqref{eq:slowvar} at the boundary. The constants therefore depend on $\nu$ only through $\bar{\nu}$.

\emph{Copula.} Let $t_j(u)$ be the quantiles of the margins of the model and $\ell=-\log(1-u)$, so that $t_j(u)=\ell-\log\ell+\log c_j+o(1)$. If $\bar{\chi}<1$, the joint survivor at $\{t_1(u),t_2(u)\}$ lies between its values at the equal levels $\min_jt_j(u)$ and $\max_jt_j(u)$, both of order $\ell^{1/\eta-1}e^{-\ell/\eta}$, so the copula is asymptotically independent with coefficient $\eta$. If $\bar{\chi}=1$, then $\eta=1$ and, with $t^\ast=\max_jt_j(u)$, \eqref{eq:slowvar} gives
\[
P\{E_1>t_1(u),E_2>t_2(u)\}\ge P(E_1>t^\ast,E_2>t^\ast)=\frac{c}{\max_jc_j}\,(1-u)\{1+o(1)\},
\]
so the copula is asymptotically dependent.

\FloatBarrier
\section{Computational details}\label{app:computation}

This section documents the sampler in full: the sampling parametrization with its Jacobians, the initialization, every proposal distribution and acceptance ratio, the adaptation rule, the treatment of the three-component prior, and the computation of the derived quantities. The target is the posterior $\pi(\btheta\mid\text{data})\propto \exp\{\ell(\btheta)\}\,\pi(\btheta)$ with $\ell$ the exact log-likelihood \eqref{eq:loglik}, and every update below is a Metropolis-Hastings step whose likelihood evaluations cost $O\{n(K_W+K_\lambda+K_\nu)\}$ operations. The sampler is implemented in the R package \texttt{star}, with the likelihood, the model functions and the joint survival quadrature written in C++ through \texttt{Rcpp} and the adaptive chain written in R; on a single core, $12{,}000$ iterations at $n=1500$ and $(K_W,K_\lambda,K_\nu)=(5,2,2)$ take about $100$ seconds on a laptop.

\subsection{Sampling parametrization}\label{supp:param}

All moves take place on an unconstrained scale. Writing $\btheta^{\ast}$ for the transformed parameters, the components, the transformations and the induced log-prior contributions, Jacobians included, are as follows.

\emph{Angular weights.} The simplex weights are represented through the softmax map with the last category as reference, $\phi_k=e^{\psi_k}/(1+\sum_{j<K_W}e^{\psi_j})$ for $k<K_W$ and $\phi_{K_W}=1/(1+\sum_{j<K_W}e^{\psi_j})$, whose Jacobian is $\prod_{k=1}^{K_W}\phi_k$; under the $\mathrm{Dirichlet}(a,\ldots,a)$ prior the induced log-density is $a\sum_{k=1}^{K_W}\log\phi_k(\bm{\psi})$ up to a constant.

\emph{Body.} The shape moves on the logarithmic scale, $\alpha=e^{\tilde{\alpha}}$, with induced log-prior $\log N(\tilde{\alpha};\log d,v_\alpha^2)$, the Jacobian being absorbed by placing the Gaussian prior on $\tilde{\alpha}$ directly; the rate coefficients $\bm{b}$ are unconstrained with independent $N(0,v_\lambda^2)$ priors.

\emph{Threshold.} The anchors are $u_j=q_{\kappa_0}+o_j$ with half-normal offsets $o_j$ (Section~\ref{sec:priors}), sampled as $\tilde{o}_j=\log o_j$, with log-prior $-e^{2\tilde{o}_j}/(2s_u^2)+\tilde{o}_j$ up to a constant for each $j$; the exponent moves on the logarithmic scale, $p=e^{\tilde{p}}$, with the Gaussian prior placed on $\tilde{p}=\log p$ directly, so that the whole range $p\in(0,\infty)$, concave boundaries included, is interior to the sampling space. When the exponent is restricted to convex curves, $p\ge1$ (Section~\ref{sec:threshold}), the prior is truncated at one and proposals below it are rejected.

\emph{Tail.} The shape coefficients $\bm{z}$ are unconstrained with independent $N(0,v_\nu^2)$ priors. Within the slab, $\bar{\chi}=\mathrm{logit}^{-1}(\tilde{\chi})$, and the $\mathrm{Beta}(a_\chi,b_\chi)$ slab contributes $\log\mathrm{be}(\bar{\chi};a_\chi,b_\chi)+\log\{\bar{\chi}(1-\bar{\chi})\}$. Under the three-component prior the state carries a component label $m\in\{0,\mathrm{s},1\}$, with $\bar{\chi}$ pinned at the atoms.

\subsection{Initialization}\label{supp:init}

The chain starts at $\bm{\psi}=\bm{0}$ (uniform angular weights), $\alpha=d$, $\bm{b}=\bm{0}$ and $\bm{z}=\bm{0}$, the values implied by exact independence; $p=2$; $\bar{\chi}=0.5$ in the slab; and the anchor offsets are set so that $u_1=u_2$ matches half the empirical $0.9$ quantile of the radii, floored away from zero, which places the initial threshold curve inside the joint upper region of the sample rather than at its edge.

\subsection{Random-walk blocks}\label{supp:rw}

The chain cycles through four blocks, $\mathrm{ang}=\{\bm{\psi}\}$, $\mathrm{body}=\{\tilde{\alpha},\bm{b}\}$, $\mathrm{thr}=\{\tilde{o}_1,\tilde{o}_2,\tilde{p}\}$ and $\mathrm{tail}=\{\tilde{\chi},\bm{z}\}$. For block $B$ with current value $\btheta^{\ast}_B$ and log proposal scale $s_B$, the proposal is the spherical random walk
\[
\btheta^{\ast\prime}_B=\btheta^{\ast}_B+e^{s_B}\,\bm{\varepsilon},\qquad \bm{\varepsilon}\sim N(\bm{0},\bm{I}),
\]
all other blocks held fixed, and the acceptance probability is
\begin{equation}\label{eq:mhratio}
\alpha_B=\min\Big[1,\ \exp\big\{\ell(\btheta')-\ell(\btheta)+\log\pi^{\ast}(\btheta^{\ast\prime}_B)-\log\pi^{\ast}(\btheta^{\ast}_B)\big\}\Big],
\end{equation}
\begingroup\sloppy where $\pi^{\ast}$ collects the block's prior terms of Section \ref{supp:param}, Jacobians included; the proposal density cancels by symmetry. Explicitly, the prior increments are $a\sum_k(\log\phi_k'-\log\phi_k)$ for the angular block; $\log N(\tilde{\alpha}';\log d,v_\alpha^2)-\allowbreak\log N(\tilde{\alpha};\log d,v_\alpha^2)+\allowbreak\sum_k\{\log N(b_k';0,v_\lambda^2)-\log N(b_k;0,v_\lambda^2)\}$ for the body; $\sum_j\{-(e^{2\tilde{o}_j'}-e^{2\tilde{o}_j})/(2s_u^2)+\tilde{o}_j'-\tilde{o}_j\}$ plus $\log N(\tilde{p}';\log 2,v_p^2)-\log N(\tilde{p};\log 2,v_p^2)$ for the threshold; and $\log\mathrm{be}(\bar{\chi}';a_\chi,b_\chi)-\log\mathrm{be}(\bar{\chi};a_\chi,b_\chi)+\allowbreak\log\{\bar{\chi}'(1-\bar{\chi}')\}-\log\{\bar{\chi}(1-\bar{\chi})\}+\sum_k\{\log N(z_k';0,v_\nu^2)-\log N(z_k;0,v_\nu^2)\}$ for the tail. During burn-in each log scale is adapted by the Robbins-Monro recursion
\[
s_B\leftarrow s_B+t^{-0.6}\,(\alpha_{B,t}-0.234),
\]
with $\alpha_{B,t}$ the realized acceptance probability at iteration $t$, and is frozen thereafter \citep{roberts2009}.\par\endgroup

\subsection{Moves under the three-component prior}\label{supp:jump}

Three situations arise according to the component label. In the slab, $(\tilde{\chi},\bm{z})$ move jointly as in Section \ref{supp:rw}. At the atom $\bar{\chi}=1$, only $\bm{z}$ moves, by the same random walk. At the atom $\bar{\chi}=0$ the likelihood is free of $\bm{z}$, whose full conditional is its prior, and the update is the exact Gibbs draw $\bm{z}\sim N(\bm{0},v_\nu^2\bm{I})$.

Once per iteration a between-component move is attempted. A label $m'$ is proposed from fixed probabilities $(\beta_0,\beta_{\mathrm{s}},\beta_1)$, by default $(1/3,1/3,1/3)$; if $m'=\mathrm{s}$, a slab value is drawn from the slab prior, $\bar{\chi}'\sim\mathrm{Beta}(a_\chi,b_\chi)$, and if $m'\in\{0,1\}$ the atom value is deterministic; $\bm{z}$ is carried unchanged. Because the slab proposal density equals the slab prior density, the Beta factors cancel in the Metropolis-Hastings ratio, which reduces to
\begin{equation}\label{eq:jumpratio}
\alpha_{\mathrm{jump}}=\min\Big[1,\ \exp\{\ell(\btheta')-\ell(\btheta)\}\ \frac{\pi^{\ast}(m')}{\pi^{\ast}(m)}\ \frac{\beta_{m}}{\beta_{m'}}\Big],
\end{equation}
with $\pi^{\ast}(0)=\pi_0$, $\pi^{\ast}(1)=\pi_1$ and $\pi^{\ast}(\mathrm{s})=1-\pi_0-\pi_1$: a likelihood ratio times the prior odds of the components times the proposal correction. The move is a reversible jump with unit Jacobian, $\bar{\chi}$ entering or leaving the state through its slab prior, and $\bm{z}$ is present in every component. The scale of this move is not adapted. Posterior class probabilities are the label frequencies among retained draws.

\subsection{Derived quantities}\label{supp:derived}

For each retained draw, $\eta=(1+\bar{\chi})/2$ directly. The joint survivor \eqref{eq:jointsurv} is computed by an $M$-point quadrature on $(0,1)$, by default $M=201$, of $f_W(w)\,\bar{F}_{R\mid W}\{\max(t_1/w,\,t_2/(1-w))\mid w\}$ with the full spliced survivor, equal to $1-G_{\mathrm{b}}(\cdot\mid w)$ below the threshold and to $\{1-G_{\mathrm{b}}(\rho(w)\mid w)\}\exp\{-(\cdot-\rho(w))/\sigma(w)\}$ above it, so that $\chi(q)$ and $\bar{\chi}(q)$ follow at any $q$, coherent across the splice. The angular dependence function of Proposition~\ref{prop:adf} is evaluated as the minimum of $\max\{\omega/w,(1-\omega)/(1-w)\}/\sigma(w)$ over a grid of $999$ equispaced angles, also used for the posterior curves of $\rho(\cdot)$, $\sigma(\cdot)$, $\lambda(\cdot)$ and $f_W(\cdot)$.

\subsection{Defaults and monitoring}\label{supp:defaults}

Defaults are $30{,}000$ iterations with $10{,}000$ of burn-in and thinning by $5$, the chain length that the identification study shows the threshold block to require; basis sizes $(K_W,K_\lambda,K_\nu)=(5,2,2)$, to which the results are insensitive (Section \ref{supp:sens}); hyperparameters $a=1$, $v_\alpha=0.5$, $v_\lambda=v_\nu=1$, $v_p=0.75$, $\kappa_0=0.8$, $s_u=1$, a uniform slab and $\pi_0=\pi_1=1/4$. Convergence is monitored through rank-normalized potential scale reduction factors and effective sample sizes on the retained draws, with the log-likelihood trace as a global summary. In the verification runs of the package, realized acceptance rates settle between $0.21$ and $0.26$ for the four random-walk blocks, with between-component acceptance near $0.4$. Table \ref{tab:diag} reports four chains on each of twelve datasets per configuration at the default settings. At $n=4000$ the threshold exponent $p$ is the binding parameter, so at large sample sizes we recommend four chains of $30{,}000$ iterations with the effective sample size of $p$ monitored, lengthening the chains when it falls below a few hundred.

\begin{table}[t]
\centering
\caption{Convergence over four chains of $30{,}000$ iterations at the default settings, twelve datasets per configuration: largest potential scale reduction factor over the parameters and smallest effective sample size (mean and extreme over datasets), and mean effective sample sizes of $p$ and $\bar{\chi}$.}
\label{tab:diag}
\footnotesize
\begin{tabular}{l cc cc c c}
\toprule
 & \multicolumn{2}{c}{largest $\hat{R}$} & \multicolumn{2}{c}{smallest ESS} & ESS of $p$ & ESS of $\bar{\chi}$ \\
\cmidrule(lr){2-3}\cmidrule(lr){4-5}
Configuration & mean & max & mean & min & mean & mean \\
\midrule
STAR, $\bar{\chi}=0.6$, $n=1000$ & $1.003$ & $1.010$ & $1339$ & $522$ & $2636$ & $3445$ \\
STAR, $\bar{\chi}=0.6$, $n=4000$ & $1.008$ & $1.022$ & $492$ & $253$ & $492$ & $3492$ \\
Gaussian, $\varrho=0.9$, $n=4000$ & $1.029$ & $1.113$ & $443$ & $165$ & $443$ & $1519$ \\
\bottomrule
\end{tabular}
\end{table}

\subsection{A trivariate implementation}\label{supp:impl3}

The trivariate model of Section~\ref{sec:app3} is implemented in R, in \texttt{inst/trivariate/star3.R} of the package. Its Dirichlet angular basis has fifteen weights, fourteen free, and the priors are those of Section~\ref{sec:priors} with the defaults of Section~\ref{supp:defaults}, $\log\alpha$ centred at $\log3$, and a uniform prior on each $\bar{\chi}_{jk}$. The sampler is the adaptive Metropolis-within-Gibbs of Sections~\ref{supp:param} to \ref{supp:defaults} in five blocks, with proposal covariances estimated during burn-in, and a proposal for the tail block is rejected when the envelope fails on a grid of $3081$ points of the simplex. The coefficients $\eta_C$ of Proposition~\ref{prop:eta3} for the three pairs and the triple, $\chi_C(q)=P(E_j>t_q,\ j\in C)/(1-q)$, joint exceedance probabilities and the exceedance fraction are computed on a grid of $28{,}441$ points of the simplex.

\FloatBarrier
\section{Simulation studies in full}\label{supp:sims}

\subsection{Identification}\label{supp:ident}

Table \ref{tab:ident} reports the identification study. The Gamma shape, the anchors, the exceedance fraction and the flanks of the threshold curve contract at the parametric rate, with slopes between $-0.50$ and $-0.67$; $\rho(1/2)$, $\bar{\chi}$ and $\nu(1/2)$ contract more slowly, as Proposition~\ref{prop:slowvar} implies; and the exponent $p$ is poorly identified although the curve is. Under the prior scale $v_p=1.5$ the coverage of $\rho(1/2)$ stays near $0.8$ whatever the chain length, and $v_p=0.75$ with $30{,}000$ iterations restores $0.90$; these are the defaults. Posterior correlations across parameter blocks stay below $0.35$ under both priors, and at $n=300$ the anchoring prior mainly keeps the threshold from drifting upwards.

\begin{table}[t]
\centering
\caption{Identification study at $\bar{\chi}=0.6$ and $p=2$. (a) Slope of the logarithm of the median width of $90\%$ intervals against $\log n$, for $n\in\{500,2000,8000\}$ and $30$ replicates, with $v_p=1.5$ and $8000$ iterations. (b) Coverage of $90\%$ intervals for $\rho(1/2)$ and mean effective sample size of $\log p$ at $n=8000$, over $30$ replicates, by prior scale and chain length. (c) Largest absolute posterior correlation, always between the shape and the rate of the Gamma body, largest correlation across parameter blocks, posterior standard deviation of $\log u_1$ and exceedance fraction, under the anchoring prior ($\kappa_0=0.8$, $s_u=1$) and a weak prior ($\kappa_0=0.5$, $s_u=5$), for one dataset at $n=2000$ and five at $n=300$.}
\label{tab:ident}
\footnotesize
\begin{tabular}{cccccccccc}
\multicolumn{10}{l}{(a) contraction slopes}\\
\toprule
$\alpha$ & $u_1$ & $u_2$ & exc. & $\rho(0.15)$ & $\rho(0.85)$ & $\rho(1/2)$ & $\bar{\chi}$ & $\nu(1/2)$ & $p$ \\
\midrule
$-0.50$ & $-0.60$ & $-0.66$ & $-0.52$ & $-0.67$ & $-0.59$ & $-0.32$ & $-0.24$ & $-0.22$ & $-0.26$ \\
\bottomrule
\end{tabular}

\vspace{0.8em}
\begin{tabular}{c cc cc}
\multicolumn{5}{l}{(b) coverage of $\rho(1/2)$ at $n=8000$}\\
\toprule
 & \multicolumn{2}{c}{$8000$ iterations} & \multicolumn{2}{c}{$30{,}000$ iterations} \\
\cmidrule(lr){2-3}\cmidrule(lr){4-5}
$v_p$ & coverage & ESS & coverage & ESS \\
\midrule
$1.5$ & $0.83$ & $49$ & $0.77$ & $95$ \\
$0.75$ & $0.87$ & $59$ & $0.90$ & $149$ \\
\bottomrule
\end{tabular}\hspace{1.5em}
\begin{tabular}{l ccc ccc}
\multicolumn{7}{l}{(c) anchoring prior}\\
\toprule
 & \multicolumn{3}{c}{$n=2000$} & \multicolumn{3}{c}{$n=300$} \\
\cmidrule(lr){2-4}\cmidrule(lr){5-7}
prior & corr. & across & sd & corr. & sd & exc. \\
\midrule
anchoring & $0.50$ & $0.34$ & $0.06$ & $0.57$ & $0.19$ & $0.25$ \\
weak & $0.59$ & $0.30$ & $0.10$ & $0.53$ & $0.61$ & $0.16$ \\
\bottomrule
\end{tabular}
\end{table}

\subsection{Calibration under the model}\label{supp:calib}

Away from the boundary (Table \ref{tab:calib}), bias is small and coverage close to nominal for $\bar{\chi}$, $\chi(0.99)$ and $\Lambda(0.3)$, and a little below it, $0.82$ to $0.84$ for $\bar{\chi}$ and $\chi(0.99)$, in the asymmetric cell. The diagonal threshold $\rho(1/2)$ is biased upwards, with coverage down to $0.78$, the cost of the weakly identified exponent. At $\bar{\chi}=0.95$ the continuous prior shrinks $\bar{\chi}$ away from one and the coverage of $\chi(0.99)$ is $0.54$ at $n=1000$ and $0.78$ at $n=4000$; there the three-component prior of Section \ref{supp:class} is the tool.

\begin{table}[t]
\centering
\caption{Calibration under the model: bias, root mean squared error and $90\%$ coverage over $50$ replicates, for the coefficient of extremal association, the finite-level coefficient $\chi(0.99)$, the angular dependence function at $\omega=0.3$ and the threshold on the diagonal; asym.\ is the cell with $u_1=2.8$ and $u_2=1.7$.}
\label{tab:calib}
\footnotesize
\setlength{\tabcolsep}{3pt}
\resizebox{\textwidth}{!}{%
\begin{tabular}{cc ccc ccc ccc ccc}
\toprule
 & & \multicolumn{3}{c}{$\bar{\chi}$} & \multicolumn{3}{c}{$\chi(0.99)$} & \multicolumn{3}{c}{$\Lambda(0.3)$} & \multicolumn{3}{c}{$\rho(1/2)$} \\
\cmidrule(lr){3-5}\cmidrule(lr){6-8}\cmidrule(lr){9-11}\cmidrule(lr){12-14}
$\bar{\chi}$ & $n$ & bias & RMSE & cov. & bias & RMSE & cov. & bias & RMSE & cov. & bias & RMSE & cov. \\
\midrule
$0.15$ & $1000$ & $0.14$ & $0.20$ & $0.90$ & $0.007$ & $0.011$ & $0.88$ & $-0.010$ & $0.035$ & $1.00$ & $0.33$ & $0.46$ & $1.00$ \\
$0.45$ & $1000$ & $0.06$ & $0.19$ & $0.96$ & $0.002$ & $0.014$ & $1.00$ & $-0.002$ & $0.058$ & $0.98$ & $0.37$ & $0.49$ & $0.98$ \\
$0.75$ & $1000$ & $-0.11$ & $0.19$ & $0.98$ & $-0.018$ & $0.025$ & $0.90$ & $0.042$ & $0.071$ & $0.98$ & $0.35$ & $0.63$ & $0.94$ \\
$0.95$ & $1000$ & $-0.17$ & $0.21$ & $0.86$ & $-0.030$ & $0.036$ & $0.54$ & $0.069$ & $0.086$ & $0.86$ & $0.30$ & $0.63$ & $0.90$ \\
$0.15$ & $4000$ & $0.05$ & $0.12$ & $0.96$ & $0.002$ & $0.006$ & $0.96$ & $0.009$ & $0.024$ & $0.96$ & $0.25$ & $0.39$ & $0.90$ \\
$0.45$ & $4000$ & $0.08$ & $0.15$ & $0.90$ & $0.004$ & $0.011$ & $0.92$ & $-0.004$ & $0.041$ & $0.96$ & $0.38$ & $0.53$ & $0.82$ \\
$0.75$ & $4000$ & $0.06$ & $0.10$ & $0.98$ & $0.001$ & $0.011$ & $0.98$ & $-0.024$ & $0.042$ & $0.98$ & $0.09$ & $0.37$ & $0.92$ \\
$0.95$ & $4000$ & $-0.06$ & $0.07$ & $1.00$ & $-0.014$ & $0.017$ & $0.78$ & $0.022$ & $0.029$ & $1.00$ & $0.08$ & $0.33$ & $0.92$ \\
$0.45$, asym. & $4000$ & $0.11$ & $0.17$ & $0.84$ & $0.007$ & $0.014$ & $0.82$ & $-0.015$ & $0.050$ & $0.92$ & $0.30$ & $0.41$ & $0.78$ \\
\bottomrule
\end{tabular}}
\end{table}

\subsection{Classification of the dependence class}\label{supp:class}

Table \ref{tab:class} adds to Table~\ref{tab:rest} the sample size $n=2000$, the Gaussian copula at $\varrho=0.3$ and $0.7$, the proportion of correct decisions under the $0.5$ rule and the posterior mean of $\bar{\chi}$. The asymptotically dependent generators are classified correctly in $86$ to $100$ percent of replicates, and the model at $\bar{\chi}=0$ and $0.5$ receives probability of asymptotic dependence at most $0.23$. Across the Gaussian family the probability rises from about one half at $\varrho=0.3$ to about four fifths at $\varrho=0.9$. The posterior means in the last columns show the inflation of $\bar{\chi}$ by the scale family under the Gaussian and inverted logistic copulas (Section~\ref{sec:simulations}), which explains the probability near one half at $\varrho=0.3$; a joint tail close to the boundary over the observed range, when $\eta$ is near one, is a difficulty that classical procedures share \citep{poon2004}.

\begin{table}[t]
\centering
\caption{Classification with the three-component prior over $50$ replicates: posterior probability of asymptotic dependence, proportion of correct decisions under the $0.5$ rule, and posterior mean of $\bar{\chi}$, at $n=2000$ and $8000$. The true class is asymptotically dependent for the logistic, the Student $t$ and the model at $\bar{\chi}=1$, asymptotically independent otherwise; rows marked $^{*}$ come from a separate set of fifty replicates.}
\label{tab:class}
\footnotesize
\setlength{\tabcolsep}{5pt}
\begin{tabular}{lc cc cc cc}
\toprule
 & & \multicolumn{2}{c}{$P(\mathrm{AD}\mid\text{data})$} & \multicolumn{2}{c}{correct} & \multicolumn{2}{c}{$E(\bar{\chi}\mid\text{data})$} \\
Generator & class & $2000$ & $8000$ & $2000$ & $8000$ & $2000$ & $8000$ \\
\midrule
STAR, $\bar{\chi}=0$ & AI & $0.01$ & $0.00$ & $1.00$ & $1.00$ & $0.09$ & $0.02$ \\
STAR, $\bar{\chi}=0.5$ & AI & $0.23$ & $0.06$ & $0.88$ & $1.00$ & $0.61$ & $0.60$ \\
STAR, $\bar{\chi}=0.95$ & AI & $0.66$ & $0.74$ & $0.08$ & $0.06$ & $0.94$ & $0.97$ \\
STAR, $\bar{\chi}=1$ & AD & $0.69$ & $0.81$ & $0.86$ & $1.00$ & $0.94$ & $0.98$ \\
Gaussian, $\varrho=0.3^{*}$ & AI & $0.48$ & $0.50$ & $0.48$ & $0.50$ & $0.78$ & $0.78$ \\
Gaussian, $\varrho=0.5$ & AI & $0.68$ & $0.60$ & $0.08$ & $0.32$ & $0.92$ & $0.82$ \\
Gaussian, $\varrho=0.7^{*}$ & AI & $0.76$ & $0.68$ & $0.04$ & $0.30$ & $0.95$ & $0.85$ \\
Gaussian, $\varrho=0.9$ & AI & $0.80$ & $0.82$ & $0.02$ & $0.04$ & $0.96$ & $0.95$ \\
Inverted logistic, $\alpha_{\mathrm{IL}}=0.5$ & AI & $0.63$ & $0.72$ & $0.20$ & $0.12$ & $0.91$ & $0.94$ \\
Logistic, $\alpha_{\mathrm{L}}=0.5$ & AD & $0.87$ & $0.77$ & $0.98$ & $0.88$ & $0.98$ & $0.93$ \\
Student $t$, $(0.5,4)$ & AD & $0.88$ & $0.97$ & $1.00$ & $1.00$ & $0.99$ & $1.00$ \\
\bottomrule
\end{tabular}
\end{table}

\subsection{Misspecification}\label{supp:misspec}

Table \ref{tab:misspec} reports the continuous-prior model under four copulas outside its class, with the Hill-type estimator of $\eta$ as a benchmark. Under the Gaussian and inverted logistic copulas the errors in $\chi(q)$ are within $0.03$, while $\eta$ is overestimated by $0.16$ and $0.22$, because both true profiles exceed the cap $1+\bar{\chi}\tau(w)$ of the scale family over a range of angles; the Hill-type estimator is more accurate for $\eta$ there. Under the Frank copula the body association inflates $\eta$ and $\chi(q)$ for both estimators, and under the Student $t$ the model underestimates $\chi(0.999)$ by $0.16$, as Proposition~\ref{prop:slowvar} implies for data with positive $\chi$, while the Hill-type estimator underestimates $\eta$ by $0.19$.

\begin{table}[t]
\centering
\caption{Misspecification at $n=4000$ over $50$ replicates: mean estimation error against the true value of each quantity, with the truths in the first block; Hill is the mean excess of $\min(E_1,E_2)$ above its $0.9$ quantile. The cap of the scale family is exceeded by the Gaussian and inverted logistic profiles, the Frank copula combines body association with the joint tail of independence, and the Student $t$ carries $\chi>0$.}
\label{tab:misspec}
\footnotesize
\setlength{\tabcolsep}{3.5pt}
\begin{tabular}{l cccc c ccccc}
\toprule
 & \multicolumn{4}{c}{truth} & & \multicolumn{5}{c}{mean error} \\
Generator & $\eta$ & $\chi(0.99)$ & $\chi(0.999)$ & $\Lambda(0.3)$ & & $\eta$ & $\eta$, Hill & $\chi(0.99)$ & $\chi(0.999)$ & $\Lambda(0.3)$ \\
\midrule
Gaussian, $\varrho=0.5$ & $0.75$ & $0.13$ & $0.05$ & $0.72$ & & $0.16$ & $-0.05$ & $0.02$ & $0.03$ & $0.06$ \\
Inverted logistic, $0.5$ & $0.71$ & $0.15$ & $0.06$ & $0.76$ & & $0.22$ & $0.00$ & $0.01$ & $0.03$ & $0.00$ \\
Frank, $\theta=5$ & $0.50$ & $0.05$ & $0.005$ & $1.00$ & & $0.33$ & $0.11$ & $0.08$ & $0.06$ & $-0.17$ \\
Student $t$, $(0.5,4)$ & $1.00$ & $0.29$ & $0.26$ & $0.70$ & & $-0.01$ & $-0.19$ & $-0.14$ & $-0.16$ & $0.01$ \\
\bottomrule
\end{tabular}
\end{table}

\subsection{Comparison with competing estimators of $\chi(q)$}\label{supp:empirical}

Table~\ref{tab:empirical} gives the comparison of Section~\ref{sec:simulations} in full, on exactly the datasets of Sections \ref{supp:calib}, \ref{supp:misspec} and \ref{supp:ablation}. The empirical estimator is $N_q/\{n(1-q)\}$, with $N_q$ the number of joint exceedances of $t_q=-\log(1-q)$ and a bootstrap interval from $300$ resamples. The conditional extremes model of \citet{heffernan2004} is fitted on the Laplace scale to the exceedances of the $0.9$ quantile of each conditioning variable by the Gaussian working likelihood with empirical residuals; the joint probability is obtained by simulation, averaged over the two conditioning directions, with intervals from $100$ bootstrap resamples of the whole pipeline. Its margins are semiparametric (est.), empirical below the $0.9$ quantile and generalized Pareto above it, or the true ones (true); for the copulas, whose exact exponential margins the other estimators also use, the true margins give the like-for-like comparison of Table~\ref{tab:key}. The geometric model of \citet{wadsworthcampbell2024} is fitted with the package \texttt{geometricMVE}, with the rolling $0.95$ quantile of the radius as threshold and the gauge chosen by AIC; only its point estimates are compared, a bootstrap being too costly.

The error of the conditional model depends on its threshold (Table \ref{tab:htthresh}): under the Gaussian and inverted logistic copulas it roughly doubles at $q=0.999$ between the $0.8$ and the $0.95$ quantiles, whereas within the model at strong dependence it stays $1.2$ to $2.6$ times the model's from $\bar{\chi}=0.75$, falling as the threshold rises at $n=4000$.

\begin{table}[t]
\centering
\caption{Comparison with the conditional extremes model (HT), the geometric model (WC) and the empirical estimator (emp.) on the same datasets, over $50$ replicates. Top: within the model at $q=0.99$, true value, root mean squared error and coverage of $90\%$ intervals, posterior for the model and bootstrap otherwise, the conditional model using semiparametric (est.) or true margins. Bottom: root mean squared error under four copulas outside the model at $n=4000$, with the conditional model fitted with the true margins (HT, true), as in Table~\ref{tab:key}, and with semiparametric ones (HT, est.), conditioning on the $0.9$ quantile.}
\label{tab:empirical}
\footnotesize
\setlength{\tabcolsep}{4.5pt}
\resizebox{\textwidth}{!}{%
\begin{tabular}{cc c cccc cccc}
\toprule
 & & & \multicolumn{4}{c}{RMSE} & \multicolumn{4}{c}{coverage} \\
\cmidrule(lr){4-7}\cmidrule(lr){8-11}
$\bar{\chi}$ & $n$ & $\chi(0.99)$ & STAR & HT, est. & WC & emp. & STAR & HT, est. & HT, true & emp. \\
\midrule
$0.15$ & $1000$ & $0.010$ & $0.011$ & $0.012$ & $0.014$ & $0.024$ & $0.88$ & $0.58$ & $0.74$ & $0.06$ \\
$0.15$ & $4000$ & $0.010$ & $0.006$ & $0.005$ & $0.008$ & $0.017$ & $0.96$ & $0.74$ & $0.84$ & $0.30$ \\
$0.45$ & $1000$ & $0.029$ & $0.014$ & $0.022$ & $0.023$ & $0.055$ & $1.00$ & $0.64$ & $0.78$ & $0.28$ \\
$0.45$ & $4000$ & $0.029$ & $0.011$ & $0.013$ & $0.013$ & $0.028$ & $0.92$ & $0.70$ & $0.72$ & $0.68$ \\
$0.75$ & $1000$ & $0.062$ & $0.025$ & $0.043$ & $0.038$ & $0.080$ & $0.90$ & $0.52$ & $0.66$ & $0.24$ \\
$0.75$ & $4000$ & $0.062$ & $0.011$ & $0.022$ & $0.020$ & $0.035$ & $0.98$ & $0.74$ & $0.70$ & $0.90$ \\
$0.95$ & $1000$ & $0.092$ & $0.036$ & $0.052$ & $0.046$ & $0.083$ & $0.54$ & $0.66$ & $0.70$ & $0.54$ \\
$0.95$ & $4000$ & $0.092$ & $0.017$ & $0.033$ & $0.030$ & $0.049$ & $0.78$ & $0.66$ & $0.74$ & $0.86$ \\
\bottomrule
\end{tabular}}

\vspace{1em}
\begin{tabular}{l c c ccccc}
\toprule
Generator & $q$ & $\chi(q)$ & STAR & HT, true & HT, est. & WC & emp. \\
\midrule
Gaussian, $\varrho=0.5$ & $0.95$ & $0.244$ & $0.024$ & $0.019$ & $0.028$ & $0.025$ & $0.036$ \\
 & $0.99$ & $0.131$ & $0.025$ & $0.026$ & $0.034$ & $0.034$ & $0.057$ \\
 & $0.999$ & $0.054$ & $0.031$ & $0.024$ & $0.038$ & $0.029$ & $0.104$ \\
\addlinespace
Inverted logistic, $\alpha_{\mathrm{IL}}=0.5$ & $0.95$ & $0.289$ & $0.023$ & $0.022$ & $0.030$ & $0.032$ & $0.031$ \\
 & $0.99$ & $0.148$ & $0.021$ & $0.026$ & $0.035$ & $0.035$ & $0.058$ \\
 & $0.999$ & $0.057$ & $0.035$ & $0.024$ & $0.039$ & $0.032$ & $0.121$ \\
\addlinespace
Frank, $\theta=5$ & $0.95$ & $0.202$ & $0.076$ & $0.026$ & $0.033$ & $0.031$ & $0.034$ \\
 & $0.99$ & $0.048$ & $0.089$ & $0.022$ & $0.023$ & $0.023$ & $0.029$ \\
 & $0.999$ & $0.005$ & $0.062$ & $0.011$ & $0.011$ & $0.007$ & $0.005$ \\
\addlinespace
Student $t$, $(0.5,4)$ & $0.95$ & $0.338$ & $0.111$ & $0.024$ & $0.034$ & $0.036$ & $0.038$ \\
 & $0.99$ & $0.290$ & $0.142$ & $0.056$ & $0.074$ & $0.067$ & $0.095$ \\
 & $0.999$ & $0.262$ & $0.165$ & $0.098$ & $0.138$ & $0.109$ & $0.248$ \\
\bottomrule
\end{tabular}
\end{table}

\begin{table}[t]
\centering
\caption{Sensitivity of the conditional extremes model to its conditioning threshold: root mean squared error of $\chi(q)$ over $50$ replicates with the threshold at the $0.8$, $0.85$, $0.9$ and $0.95$ Laplace quantiles, under semiparametric (estimated) and true margins, against the model; smallest in bold. Top: within the model, at $q=0.99$, and at $q=0.999$ in the ablation setting of Section \ref{supp:ablation}; asym.\ is the asymmetric cell. Bottom: four copulas outside the model at $n=4000$.}
\label{tab:htthresh}
\footnotesize
\setlength{\tabcolsep}{4pt}
\begin{tabular}{ccc c cccc cccc}
\toprule
 & & & & \multicolumn{4}{c}{estimated margins} & \multicolumn{4}{c}{true margins} \\
\cmidrule(lr){5-8}\cmidrule(lr){9-12}
$\bar{\chi}$ & $n$ & $q$ & STAR & $0.8$ & $0.85$ & $0.9$ & $0.95$ & $0.8$ & $0.85$ & $0.9$ & $0.95$ \\
\midrule
$0.15$ & $1000$ & $0.99$ & $0.011$ & $\mathbf{0.008}$ & $0.009$ & $0.012$ & $0.012$ & $\mathbf{0.008}$ & $0.009$ & $0.010$ & $0.011$ \\
$0.45$ & $1000$ & $0.99$ & $\mathbf{0.014}$ & $0.019$ & $0.020$ & $0.022$ & $0.023$ & $0.017$ & $0.017$ & $0.020$ & $0.023$ \\
$0.75$ & $1000$ & $0.99$ & $\mathbf{0.025}$ & $0.043$ & $0.043$ & $0.043$ & $0.047$ & $0.038$ & $0.037$ & $0.038$ & $0.042$ \\
$0.95$ & $1000$ & $0.99$ & $\mathbf{0.036}$ & $0.054$ & $0.054$ & $0.052$ & $0.057$ & $0.049$ & $0.049$ & $0.043$ & $0.051$ \\
$0.15$ & $4000$ & $0.99$ & $0.006$ & $0.005$ & $0.005$ & $0.005$ & $0.008$ & $0.005$ & $\mathbf{0.004}$ & $0.005$ & $0.007$ \\
$0.45$ & $4000$ & $0.99$ & $\mathbf{0.011}$ & $0.015$ & $0.013$ & $0.013$ & $0.014$ & $0.014$ & $0.013$ & $0.012$ & $0.014$ \\
$0.75$ & $4000$ & $0.99$ & $\mathbf{0.011}$ & $0.029$ & $0.025$ & $0.022$ & $0.022$ & $0.028$ & $0.024$ & $0.021$ & $0.021$ \\
$0.95$ & $4000$ & $0.99$ & $\mathbf{0.017}$ & $0.043$ & $0.038$ & $0.033$ & $0.034$ & $0.039$ & $0.034$ & $0.027$ & $0.026$ \\
$0.45$, asym. & $4000$ & $0.99$ & $0.014$ & $0.014$ & $0.014$ & $0.016$ & $0.018$ & $\mathbf{0.013}$ & $\mathbf{0.013}$ & $\mathbf{0.013}$ & $0.016$ \\
$0.60$ & $4000$ & $0.999$ & $\mathbf{0.010}$ & $0.012$ & $0.012$ & $0.013$ & $0.018$ & $0.012$ & $0.011$ & $\mathbf{0.010}$ & $0.011$ \\
\bottomrule
\end{tabular}

\vspace{1em}
\resizebox{\textwidth}{!}{%
\begin{tabular}{lc c cccc cccc}
\toprule
 & & & \multicolumn{4}{c}{estimated margins} & \multicolumn{4}{c}{true margins} \\
\cmidrule(lr){4-7}\cmidrule(lr){8-11}
Generator & $q$ & STAR & $0.8$ & $0.85$ & $0.9$ & $0.95$ & $0.8$ & $0.85$ & $0.9$ & $0.95$ \\
\midrule
Gaussian, $\varrho=0.5$ & $0.95$ & $0.024$ & $0.028$ & $0.027$ & $0.028$ & $0.026$ & $0.020$ & $\mathbf{0.019}$ & $\mathbf{0.019}$ & $0.027$ \\
 & $0.99$ & $0.025$ & $0.031$ & $0.031$ & $0.034$ & $0.041$ & $0.023$ & $\mathbf{0.022}$ & $0.026$ & $0.033$ \\
 & $0.999$ & $0.031$ & $0.029$ & $0.031$ & $0.038$ & $0.047$ & $\mathbf{0.018}$ & $\mathbf{0.018}$ & $0.024$ & $0.033$ \\
\addlinespace
Inverted logistic, $\alpha_{\mathrm{IL}}=0.5$ & $0.95$ & $0.023$ & $0.030$ & $0.030$ & $0.030$ & $0.028$ & $\mathbf{0.022}$ & $\mathbf{0.022}$ & $\mathbf{0.022}$ & $0.023$ \\
 & $0.99$ & $\mathbf{0.021}$ & $0.032$ & $0.034$ & $0.035$ & $0.041$ & $\mathbf{0.021}$ & $0.023$ & $0.026$ & $0.034$ \\
 & $0.999$ & $0.035$ & $0.031$ & $0.035$ & $0.039$ & $0.051$ & $\mathbf{0.016}$ & $0.019$ & $0.024$ & $0.037$ \\
\addlinespace
Frank, $\theta=5$ & $0.95$ & $0.076$ & $0.029$ & $0.031$ & $0.033$ & $0.031$ & $\mathbf{0.023}$ & $0.025$ & $0.026$ & $0.028$ \\
 & $0.99$ & $0.089$ & $0.023$ & $0.022$ & $0.023$ & $0.022$ & $0.021$ & $\mathbf{0.019}$ & $0.022$ & $0.021$ \\
 & $0.999$ & $0.062$ & $0.014$ & $0.012$ & $0.011$ & $0.009$ & $0.011$ & $0.009$ & $0.011$ & $\mathbf{0.008}$ \\
\addlinespace
Student $t$, $(0.5,4)$ & $0.95$ & $0.111$ & $0.038$ & $0.035$ & $0.034$ & $0.036$ & $0.030$ & $0.027$ & $\mathbf{0.024}$ & $0.027$ \\
 & $0.99$ & $0.142$ & $0.089$ & $0.081$ & $0.074$ & $0.070$ & $0.079$ & $0.068$ & $0.056$ & $\mathbf{0.047}$ \\
 & $0.999$ & $0.165$ & $0.154$ & $0.145$ & $0.138$ & $0.140$ & $0.131$ & $0.116$ & $0.098$ & $\mathbf{0.080}$ \\
\bottomrule
\end{tabular}}
\end{table}

\subsection{The value of self-thresholding}\label{supp:ablation}

Table \ref{tab:misspecabl} extends the threshold ablation of Table~\ref{tab:rest} under the Gaussian and Frank copulas to three levels $q$, with the error and the interval width at $q=0.999$. Under the model truth the $90\%$ interval for $\eta$ widens from $0.23$, for the estimated threshold and at $\kappa=0.8$, to $0.31$ at $\kappa=0.95$. Under the Frank copula no variant covers, as the body association reaches every estimable threshold.

\begin{table}[t]
\centering
\caption{Ablation under misspecification at $n=4000$ over $50$ replicates: coverage of $90\%$ intervals for the true $\chi(q)$ at three levels, and mean error and mean interval width for $\chi(0.999)$, with the threshold estimated (full) or frozen at level $\kappa$.}
\label{tab:misspecabl}
\footnotesize
\begin{tabular}{ll ccc cc}
\toprule
Generator & threshold & cov.\ $0.95$ & cov.\ $0.99$ & cov.\ $0.999$ & error $0.999$ & width $0.999$ \\
\midrule
Gaussian, $\varrho=0.5$ & full & $0.88$ & $0.92$ & $0.78$ & $0.028$ & $0.079$ \\
 & $\kappa=0.8$ & $0.96$ & $0.70$ & $0.10$ & $0.045$ & $0.047$ \\
 & $\kappa=0.9$ & $1.00$ & $0.92$ & $0.48$ & $0.034$ & $0.061$ \\
 & $\kappa=0.95$ & $0.90$ & $0.96$ & $0.86$ & $0.020$ & $0.069$ \\
\midrule
Frank, $\theta=5$ & full & $0.28$ & $0.22$ & $0.18$ & $0.055$ & $0.101$ \\
 & $\kappa=0.8$ & $0.04$ & $0.00$ & $0.00$ & $0.107$ & $0.072$ \\
 & $\kappa=0.9$ & $0.18$ & $0.04$ & $0.00$ & $0.070$ & $0.094$ \\
 & $\kappa=0.95$ & $0.02$ & $0.22$ & $0.16$ & $0.051$ & $0.097$ \\
\bottomrule
\end{tabular}
\end{table}

\subsection{Sensitivity analyses}\label{supp:sens}

Table \ref{tab:sens} reports the sensitivity analyses, (a) and (b) refitting the same datasets under alternative settings, (c) on fifty separate replicates per generator. The anchoring prior and the basis sizes shift the exceedance fraction by at most $0.025$, $\bar{\chi}$ by at most $0.02$, against a posterior interval width near $0.61$, and $\chi(0.99)$ by under four percent on average, and an angle-varying body shape leaves the tail summaries essentially unbiased. The class probabilities follow the atom weights and depend much less on the shape of the slab; a slab concentrated towards one lowers the probability of the atom at one by $0.10$ to $0.18$, and the Gaussian copula with $\varrho=0.9$ stays above one half at $n=8000$ under both slabs, as in Table~\ref{tab:class}.

\begin{table}[t]
\centering
\caption{Sensitivity analyses. (a) Five datasets at $\bar{\chi}=0.6$ and $n=1000$: mean shift from the default fit, with the largest absolute shift over datasets in brackets, under alternative anchoring priors and basis sizes; for the body shape $\alpha(w)=1.5+1.5w$, fitted with a constant shape, mean error against the truth. (b) Mean posterior probability of asymptotic dependence over five datasets at $n=1000$, by atom weights and slab. (c) Mean posterior probability of asymptotic dependence over fifty replicates, by slab.}
\label{tab:sens}
\footnotesize
\begin{tabular}{l cccc}
\multicolumn{5}{l}{(a) shifts in the summaries}\\
\toprule
Setting & exc. & $\rho(1/2)$ & $\bar{\chi}$ & $\chi(0.99)$, \% \\
\midrule
$\kappa_0=0.7$ & $0.021$ $(0.030)$ & $-0.08$ $(0.15)$ & $0.004$ $(0.069)$ & $1.6$ $(8.2)$ \\
$\kappa_0=0.9$ & $-0.025$ $(0.044)$ & $0.07$ $(0.55)$ & $0.005$ $(0.028)$ & $-0.2$ $(15.7)$ \\
$s_u=0.5$ & $0.023$ $(0.045)$ & $-0.15$ $(0.34)$ & $0.018$ $(0.054)$ & $3.4$ $(13.4)$ \\
$s_u=2$ & $-0.024$ $(0.054)$ & $0.13$ $(0.74)$ & $0.001$ $(0.036)$ & $-1.2$ $(8.2)$ \\
$(K_W,K_\lambda,K_\nu)=(3,1,1)$ & $-0.005$ $(0.019)$ & $0.06$ $(0.23)$ & $0.006$ $(0.017)$ & $-2.1$ $(10.2)$ \\
$(K_W,K_\lambda,K_\nu)=(8,4,4)$ & $0.008$ $(0.024)$ & $-0.01$ $(0.27)$ & $-0.001$ $(0.006)$ & $2.4$ $(14.3)$ \\
body shape $\alpha(w)$, error & $0.010$ & $-0.03$ & $0.016$ & $2.8$ \\
\bottomrule
\end{tabular}

\vspace{0.8em}
\begin{tabular}{l cc}
\multicolumn{3}{l}{(b) atom weights and slab}\\
\toprule
Prior & $\bar{\chi}=0.5$ & $\bar{\chi}=0.95$ \\
\midrule
$\pi_0=\pi_1=0.1$ & $0.09$ & $0.24$ \\
$\pi_0=\pi_1=0.25$ & $0.24$ & $0.54$ \\
$\pi_0=\pi_1=0.4$ & $0.41$ & $0.82$ \\
$\mathrm{Beta}(2,2)$ slab & $0.24$ & $0.57$ \\
\bottomrule
\end{tabular}\hspace{1.5em}
\begin{tabular}{l cc cc}
\multicolumn{5}{l}{(c) concentrated slab}\\
\toprule
 & \multicolumn{2}{c}{$n=2000$} & \multicolumn{2}{c}{$n=8000$} \\
\cmidrule(lr){2-3}\cmidrule(lr){4-5}
Generator & uniform & $\mathrm{Beta}(2,0.8)$ & uniform & $\mathrm{Beta}(2,0.8)$ \\
\midrule
Gaussian, $\varrho=0.9$ & $0.82$ & $0.64$ & $0.76$ & $0.66$ \\
STAR, $\bar{\chi}=0.95$ & $0.66$ & $0.51$ & $0.74$ & $0.61$ \\
STAR, $\bar{\chi}=1$ & $0.70$ & $0.53$ & $0.79$ & $0.62$ \\
\bottomrule
\end{tabular}
\end{table}

\subsection{Plug-in cut, cut posterior and joint posterior}\label{supp:cutjoint}

The main paper uses the plug-in cut for the dependence summaries and the cut posterior for summaries on the original scale (Section~\ref{sec:inference}). This experiment compares them with the joint posterior, which updates the tail parameters $(\sigma_j,\xi_j)$ of the semiparametric margins given $\btheta$, on the same data with the same dependence sampler. Data are generated from the model at $\bar{\chi}=0.6$ and pushed through generalized Pareto margins of scale $1$ and shape $0.3$, so that the marginal model of Section~\ref{sec:app2} is exact and the standardized data carry the copula of the model on exact exponential margins. The targets are the values $0.137$ and $0.084$ of $\chi(0.95)$ and $\chi(0.99)$ for that copula, computed at the marginal quantiles of the model, the shape $\xi_1=0.3$, and two probabilities on the original scale, with $y^\ast$ the true $0.995$ marginal quantile: the joint exceedance probability $p^\ast=P(Y_1>y^\ast,Y_2>y^\ast)=3.5\times10^{-4}$ and the marginal probability $P(Y_1>y^\ast)=0.005$, both computed from \eqref{eq:jointsurv}. Twenty replicates are drawn at $n=2000$ and at $n=400$, with $200$ and $40$ exceedances per margin. The cut posterior uses ten marginal draws and the joint posterior adds a random-walk block per margin.

Table~\ref{tab:cutjoint} gives the result. The three treatments agree on the copula, with coverage of $\chi(0.95)$ and $\chi(0.99)$ from $0.90$ to $1.00$ in every arm at both sample sizes. On the original scale the plug-in cut covers $p^\ast$ in $0.70$ of the replicates at $n=2000$ and $0.40$ at $n=400$, the cut posterior in $0.85$ and $0.75$, and the joint posterior in $0.90$ at both. The margins implied by a fit are lighter than exponential (Proposition~\ref{prop:decay-d}), so under both cuts the fitted $P(Y_1>y^\ast)$ falls short of $0.005$, by about a quarter at $n=2000$ and almost a half at $n=400$; the joint posterior narrows the gap by raising $\xi_j$ by about $0.06$, and overshoots at $n=2000$, where $p^\ast$ is overstated by $17\%$. The same mismatch shows in the class probabilities: the copula of the model on exact exponential margins receives $P(\mathrm{AD}\mid\text{data})=0.64$ under the plug-in cut at $n=2000$, against $0.36$ for the same samples on the margins of the model, and $0.47$ against $0.29$ at $n=400$, since without the marginal factor of Proposition~\ref{prop:decay-d} the extremes sit further out.

\begin{table}[t]
\centering
\caption{Plug-in cut, cut posterior and joint posterior over margins and dependence, with generalized Pareto margins: bias of the posterior median, coverage and mean width of $90\%$ intervals, over twenty replicates at each sample size. Targets are $\chi(0.95)$ and $\chi(0.99)$ of the copula of the model, the joint exceedance probability $p^\ast$ of the true $0.995$ marginal quantiles, the marginal probability $P(Y_1>y^\ast)$ implied by the fit, and the shape $\xi_1$ of the first margin.}
\label{tab:cutjoint}
\footnotesize
\setlength{\tabcolsep}{3.5pt}
\resizebox{\textwidth}{!}{%
\begin{tabular}{l c ccc ccc ccc}
\toprule
 & & \multicolumn{3}{c}{plug-in cut} & \multicolumn{3}{c}{cut posterior} & \multicolumn{3}{c}{joint posterior} \\
\cmidrule(lr){3-5}\cmidrule(lr){6-8}\cmidrule(lr){9-11}
quantity & truth & bias & coverage & width & bias & coverage & width & bias & coverage & width \\
\midrule
\multicolumn{11}{l}{$n=2000$, 200 exceedances per margin; $P(\mathrm{AD}\mid\text{data})$ 0.64, 0.62 and 0.52} \\
$\chi(0.95)$ & 0.137 & $-$0.007 & 0.90 & 0.059 & $-$0.008 & 1.00 & 0.061 & $-$0.012 & 0.95 & 0.067 \\
$\chi(0.99)$ & 0.084 & $-$0.005 & 0.90 & 0.055 & $-$0.006 & 1.00 & 0.057 & $-$0.010 & 0.95 & 0.065 \\
$10^4p^\ast$ & 3.49 & $-$0.48 & 0.70 & 2.31 & $-$0.50 & 0.85 & 3.36 & 0.58 & 0.90 & 4.83 \\
$10^3P(Y_1>y^\ast)$ & 5.00 & $-$1.41 & 0.25 & 1.58 & $-$1.29 & 0.65 & 3.12 & $-$0.33 & 0.95 & 3.86 \\
$\xi_1$ & 0.300 & $-$0.033 & 0.90 & 0.304 & $-$0.026 & 0.95 & 0.300 & 0.035 & 0.95 & 0.345 \\
\midrule
\multicolumn{11}{l}{$n=400$, 40 exceedances per margin; $P(\mathrm{AD}\mid\text{data})$ 0.47, 0.45 and 0.40} \\
$\chi(0.95)$ & 0.137 & $-$0.024 & 0.95 & 0.104 & $-$0.027 & 0.90 & 0.103 & $-$0.030 & 0.95 & 0.111 \\
$\chi(0.99)$ & 0.084 & $-$0.020 & 0.95 & 0.087 & $-$0.023 & 0.90 & 0.086 & $-$0.027 & 0.95 & 0.091 \\
$10^4p^\ast$ & 3.49 & $-$1.52 & 0.40 & 2.90 & $-$1.72 & 0.75 & 4.57 & $-$0.79 & 0.90 & 7.14 \\
$10^3P(Y_1>y^\ast)$ & 5.00 & $-$2.31 & 0.25 & 2.15 & $-$2.14 & 0.75 & 5.69 & $-$1.01 & 0.80 & 7.40 \\
$\xi_1$ & 0.300 & $-$0.042 & 0.75 & 0.741 & $-$0.028 & 0.80 & 0.657 & 0.020 & 0.80 & 0.723 \\
\bottomrule
\end{tabular}}
\end{table}

\subsection{Calibration of the trivariate implementation}\label{supp:calib3}

Table~\ref{tab:calib3} reports twenty replicates at $n=2000$ from the model with $\bar{\chi}=(0.6,0.2,0.2)$, $\nu=0.5$, $u_j=2.2$, $p=2$, $\alpha=3$, $\bm{b}=(0.1,-0.1,0)$ and an angular density tilted towards the $(1,2)$ edge. Body and threshold are recovered as in the bivariate cells, with biases below a quarter of the interval width and coverages between $0.80$ and $0.95$; the exceedance fraction is unbiased, and $\chi_C(q)$ close to unbiased with coverages of $0.75$ to $0.90$. Of the dependence parameters, $\eta_{12}$ of the strong pair is unbiased, while the coefficients of the two weak pairs are pulled up by about $0.1$, $\bar{\chi}_{12}$ is pulled down as $\bar{\chi}_{13}$ is pulled up, and the intervals for the $\bar{\chi}_{jk}$ are about $0.6$ wide, so that the split of the total deficit between pairs is weakly identified at this sample size (Section~\ref{sec:multi}).

\begin{table}[t]
\centering
\caption{Calibration check of the trivariate implementation: twenty replicates at $n=2000$ from the model with $\bar{\chi}=(0.6,0.2,0.2)$, $\nu=0.5$, $u_j=2.2$, $p=2$ and $\alpha=3$; truth, mean of the posterior medians, bias, root mean squared error, coverage and mean width of $90\%$ intervals. Chains of $20{,}000$ iterations with $8{,}000$ of burn-in.}
\label{tab:calib3}
\footnotesize
\begin{tabular}{l cccccc}
\toprule
quantity & truth & mean of medians & bias & RMSE & coverage & width \\
\midrule
$\eta_{12}$ & 0.800 & 0.787 & $-$0.013 & 0.067 & 1.00 & 0.359 \\
$\eta_{13}$ & 0.632 & 0.742 & 0.110 & 0.140 & 0.80 & 0.368 \\
$\eta_{23}$ & 0.632 & 0.731 & 0.100 & 0.124 & 0.80 & 0.371 \\
$\eta_{123}$ & 0.632 & 0.695 & 0.064 & 0.103 & 0.95 & 0.401 \\
$\chi_{12}(0.99)$ & 0.0190 & 0.0144 & $-$0.0046 & 0.0078 & 0.75 & 0.0213 \\
$\chi_{13}(0.99)$ & 0.0055 & 0.0093 & 0.0038 & 0.0068 & 0.80 & 0.0155 \\
$\chi_{123}(0.9)$ & 0.0126 & 0.0133 & 0.0008 & 0.0034 & 0.90 & 0.0106 \\
$\chi_{123}(0.99)$ & 0.0009 & 0.0013 & 0.0005 & 0.0009 & 0.85 & 0.0031 \\
exceedance fraction & 0.386 & 0.381 & $-$0.005 & 0.025 & 0.80 & 0.069 \\
$\rho$ at the barycentre & 3.811 & 3.854 & 0.043 & 0.391 & 0.95 & 0.694 \\
$u_1$ & 2.200 & 2.254 & 0.054 & 0.129 & 0.80 & 0.470 \\
$p$ & 2.000 & 2.090 & 0.090 & 0.775 & 0.90 & 1.292 \\
$\bar{\chi}_{12}$ & 0.600 & 0.455 & $-$0.145 & 0.231 & 0.90 & 0.633 \\
$\bar{\chi}_{13}$ & 0.200 & 0.360 & 0.160 & 0.244 & 0.85 & 0.631 \\
\bottomrule
\end{tabular}
\end{table}

\FloatBarrier
\section{Further application results}\label{supp:apps}

\subsection{The twelve Beijing stations}\label{supp:stations}

Table \ref{tab:stations} repeats the analysis of Section~\ref{sec:app2} at all twelve stations of \citet{zhang2017}, with the same cleaning rule, standardization, priors and chain length. The cleaning rule removes $19$ ozone and $3$ particulate hours at Dongsi, among them ozone readings above $1000\,\mu$g/m$^3$ in August 2013 while every other station read below $180$, one particulate hour at Wanliu, and at most ten hours per pollutant elsewhere. The results are consistent across the city. Every station favours asymptotic independence, with posterior probability between $0.74$ and $0.96$, and the finite-level summaries vary little: the exceedance fraction lies between $0.14$ and $0.24$, $\chi(0.9)$ between $0.13$ and $0.21$, against empirical values of $0.10$ to $0.20$, and $\chi(0.99)$ between $0.02$ and $0.07$, the smallest values belonging to the four stations with the largest weight on exact independence. The posterior predictive checks hold at every station: the observed number of days on which both pollutants exceed their $0.9$ or $0.95$ quantiles lies inside the $90\%$ predictive interval in all $24$ comparisons.

Three tendencies are common to the stations. The observed counts lie mostly below the predictive mean, $117$ days against $151$ at the $0.9$ level and $22$ against $44$ at the $0.95$ level, though the stations share their weather; the posterior median of $\eta$, between $0.53$ and $0.81$, exceeds the empirical estimate, between $0.41$ and $0.53$, as under the Gaussian and inverted logistic copulas of Section~\ref{supp:misspec}, where the finite-level summaries remained accurate; and the days outside the one-season return level set of the plug-in fit lie mostly near the axes, $64$ against $20$ expected, where the margins implied by the model are lighter than the standardized data, as Proposition~\ref{prop:slowvar} predicts.

\begin{table}[t]
\centering
\caption{The twelve Beijing stations, warm-season daily maxima of PM$_{2.5}$ and ozone: sample size; posterior probabilities of the three components; posterior median of $\eta$ with the empirical estimate of Section~\ref{sec:app2}; posterior median exceedance fraction; posterior mean of $\chi(0.9)$ with its empirical value, and of $\chi(0.99)$; days on which both pollutants exceed their $0.95$ quantiles, observed and predictive mean with $90\%$ interval; and days outside the plug-in fit's one-season return level set with $w<0.25$ or $w>0.75$, observed and expected.}
\label{tab:stations}
\footnotesize
\setlength{\tabcolsep}{3pt}
\resizebox{\textwidth}{!}{%
\begin{tabular}{l c ccc cc c cc c cc cc}
\toprule
 & & \multicolumn{3}{c}{class} & \multicolumn{2}{c}{$\eta$} & & \multicolumn{2}{c}{$\chi(0.9)$} & & \multicolumn{2}{c}{joint exceedances} & \multicolumn{2}{c}{near the axes} \\
\cmidrule(lr){3-5}\cmidrule(lr){6-7}\cmidrule(lr){9-10}\cmidrule(lr){12-13}\cmidrule(lr){14-15}
Station & $n$ & $\bar{\chi}=0$ & slab & $\bar{\chi}=1$ & STAR & emp. & exc. & STAR & emp. & $\chi(0.99)$ & obs. & predicted & obs. & exp. \\
\midrule
Aotizhongxin & $730$ & $0.38$ & $0.54$ & $0.08$ & $0.58$ & $0.44$ & $0.17$ & $0.154$ & $0.096$ & $0.026$ & $2$ & $3.1$ $(0,\,7)$ & $4$ & $1.8$ \\
Changping & $727$ & $0.18$ & $0.63$ & $0.20$ & $0.77$ & $0.48$ & $0.14$ & $0.199$ & $0.151$ & $0.056$ & $2$ & $4.4$ $(1,\,9)$ & $6$ & $1.6$ \\
Dingling & $719$ & $0.23$ & $0.60$ & $0.17$ & $0.72$ & $0.50$ & $0.14$ & $0.204$ & $0.195$ & $0.051$ & $2$ & $4.4$ $(1,\,9)$ & $5$ & $1.5$ \\
Dongsi & $727$ & $0.11$ & $0.65$ & $0.23$ & $0.81$ & $0.46$ & $0.17$ & $0.191$ & $0.138$ & $0.055$ & $2$ & $4.4$ $(1,\,9)$ & $4$ & $1.7$ \\
Guanyuan & $727$ & $0.23$ & $0.67$ & $0.10$ & $0.67$ & $0.50$ & $0.22$ & $0.166$ & $0.138$ & $0.037$ & $2$ & $3.6$ $(0,\,8)$ & $6$ & $1.7$ \\
Gucheng & $721$ & $0.18$ & $0.68$ & $0.13$ & $0.71$ & $0.50$ & $0.21$ & $0.150$ & $0.125$ & $0.038$ & $3$ & $3.3$ $(0,\,8)$ & $5$ & $1.8$ \\
Huairou & $718$ & $0.21$ & $0.69$ & $0.10$ & $0.68$ & $0.49$ & $0.24$ & $0.187$ & $0.097$ & $0.048$ & $2$ & $3.9$ $(1,\,8)$ & $6$ & $1.6$ \\
Nongzhanguan & $732$ & $0.40$ & $0.52$ & $0.07$ & $0.56$ & $0.46$ & $0.16$ & $0.162$ & $0.123$ & $0.027$ & $0$ & $3.1$ $(0,\,8)$ & $5$ & $1.7$ \\
Shunyi & $701$ & $0.17$ & $0.57$ & $0.26$ & $0.81$ & $0.50$ & $0.14$ & $0.208$ & $0.157$ & $0.065$ & $2$ & $4.6$ $(1,\,9)$ & $7$ & $1.6$ \\
Tiantan & $732$ & $0.43$ & $0.50$ & $0.07$ & $0.53$ & $0.41$ & $0.15$ & $0.162$ & $0.096$ & $0.026$ & $0$ & $3.1$ $(0,\,7)$ & $4$ & $1.7$ \\
Wanliu & $732$ & $0.40$ & $0.56$ & $0.04$ & $0.55$ & $0.47$ & $0.22$ & $0.130$ & $0.150$ & $0.022$ & $2$ & $2.5$ $(0,\,6)$ & $7$ & $1.8$ \\
Wanshouxigong & $731$ & $0.15$ & $0.65$ & $0.20$ & $0.76$ & $0.53$ & $0.17$ & $0.176$ & $0.150$ & $0.051$ & $3$ & $3.9$ $(1,\,8)$ & $5$ & $1.7$ \\
\bottomrule
\end{tabular}}
\end{table}

\subsection{Wave height and storm surge on the North Carolina coast}\label{supp:duck}

The Beijing pair is asymptotically independent. This second application shows how the model behaves at the other end, for a pair with positive $\chi$, where Proposition~\ref{prop:slowvar} limits what the $\xi=0$ splice can represent. The data are daily maxima of significant wave height at National Data Buoy Center buoy 44100 and of storm surge, the observed water level less its astronomical prediction, at the co-located National Ocean Service tide gauge 8651370, at Duck, North Carolina, from 2010 to 2023, with $n=4485$. Waves and surge on this coast are driven by the same storms, and the empirical $\chi(q)$ is flat, $0.56$, $0.52$ and $0.58$ at $q=0.9$, $0.95$ and $0.99$, the signature of asymptotic dependence. The margins are standardized as in Section~\ref{sec:app2}, with generalized Pareto shapes $-0.01$ for wave height and $-0.11$ for surge.

The model identifies the class correctly: the posterior probability of asymptotic dependence is $0.94$, $\bar{\chi}$ sits at the atom at one, and the angular dependence function lies on its boundary curve $\max(\omega,1-\omega)$ of Proposition~\ref{prop:adf}. The threshold is well behaved, with exceedance fraction $0.13$ and interval $(0.09,0.20)$, so the threshold diagnostic of Section~\ref{sec:mcmc} does not fire. At the atom the $\xi=0$ splice makes $\chi(q)$ decay on the scale of the data, as Proposition~\ref{prop:slowvar} requires: the fitted $\chi(q)$ is $0.36$, $0.29$ and $0.19$ at the three levels, below the flat empirical values, and the joint exceedance probabilities of Table~\ref{tab:duck} are underestimated by a factor of about one and a half at the $0.9$ level and three at the $0.99$ level, $13$ against $21$ days per year and $0.7$ against $2.1$. This is the Student $t$ case of Tables~\ref{tab:misspec} and \ref{tab:empirical} in real data. The comparison of fitted and empirical $\chi(q)$, available from any fit, flags it; for such a pair the model supplies the class probability, and the joint probabilities call for an asymptotically dependent construction of the kind discussed in Section~\ref{sec:discussion}.

\begin{table}[t]
\centering
\caption{Joint exceedance probabilities for daily maximum wave height and storm surge at Duck: posterior mean and $90\%$ interval under the fitted model, the empirical frequency, and the implied number of days per year under each. Levels are marginal quantiles. The shortfall is the cost of the exponential-tail splice on an asymptotically dependent pair of variables.}
\label{tab:duck}
\footnotesize
\setlength{\tabcolsep}{4pt}
\begin{tabular}{cc cc c cc}
\toprule
wave & surge & STAR & $90\%$ interval & empirical & days/yr, STAR & days/yr, empirical \\
\midrule
$0.90$ & $0.90$ & $0.036$ & $(0.033,\,0.039)$ & $0.056$ & $13.2$ & $20.5$ \\
$0.95$ & $0.95$ & $0.014$ & $(0.013,\,0.016)$ & $0.026$ & $5.3$ & $9.5$ \\
$0.99$ & $0.90$ & $0.005$ & $(0.005,\,0.006)$ & $0.011$ & $1.9$ & $4.1$ \\
$0.90$ & $0.99$ & $0.005$ & $(0.005,\,0.006)$ & $0.010$ & $1.9$ & $3.6$ \\
$0.99$ & $0.99$ & $0.002$ & $(0.002,\,0.002)$ & $0.006$ & $0.7$ & $2.1$ \\
\bottomrule
\end{tabular}
\end{table}

\end{document}